\documentclass[a4paper,11pt]{article}
\usepackage{jheppub}
\usepackage[utf8]{inputenc}
\DeclareUnicodeCharacter{2212}{\ensuremath{-}}

\usepackage{amsthm}
\newtheorem{theorem}{Theorem}

\usepackage{lineno}
\usepackage{xcolor}
\usepackage{tensor}
\usepackage{booktabs}
\usepackage{multirow}
\usepackage{hyperref}
\usepackage{subcaption}
\usepackage{array}

\newcommand{\tsc}[1]{\textsc{#1}}

\DeclareRobustCommand{\Sec}[1]{Sec.~\ref{#1}}
\DeclareRobustCommand{\Secs}[2]{Secs.~\ref{#1} and \ref{#2}}
\DeclareRobustCommand{\Eq}[1]{Eq.~(\ref{#1})}
\DeclareRobustCommand{\Eqs}[2]{Eqs.~(\ref{#1}) and (\ref{#2})}
\DeclareRobustCommand{\Fig}[1]{Fig.~\ref{#1}}
\DeclareRobustCommand{\Figs}[2]{Figs.~\ref{#1} and \ref{#2}}
\DeclareRobustCommand{\Thrm}[1]{Theorem~\ref{#1}}
\DeclareRobustCommand{\Appx}[1]{Appendix~\ref{#1}}
\DeclareRobustCommand{\Tab}[1]{Table~\ref{#1}}
\DeclareRobustCommand{\Reference}[1]{Ref.~\cite{#1}}
\DeclareRobustCommand{\References}[1]{Refs.~\cite{#1}}

\preprint{MIT-CTP/6077}

\title{\boldmath Simplex Demixing:  Disentangling Multiple Light-Flavor Jets at Colliders}

\author[a,b]{Gregorio de la Fuente}
\author[a,b,c,d]{and Jesse Thaler}
\affiliation[a]{Center for Theoretical Physics -- a Leinweber Institute, Massachusetts Institute of Technology, \\
Cambridge, Massachusetts, United States}
\affiliation[b]{The NSF AI Institute for Artificial Intelligence and Fundamental Interactions}
\affiliation[c]{Institut des Hautes \'Etudes Scientifiques, 91440 Bures-sur-Yvette, France}
\affiliation[d]{Institut de Physique Th\'eorique, CEA Paris-Saclay, 91191 Gif-sur-Yvette, France}

\emailAdd{grego137@mit.edu}
\emailAdd{jthaler@mit.edu}

\abstract{
Providing a practical and hadron-level definition of multiple jet flavors has been a long-standing challenge in collider physics.
Previous work has introduced a data-driven, operational definition of quark and gluon jets, but no robust generalization beyond two jet categories presently exists.
To address this, we introduce a machine-learning framework called ``simplex demixing'' to extract $T$ jet flavors (or topics in the statistics literature) from $M$ data samples (or mixtures) with minimal constraints.
Intuitively, our procedure identifies the maximally separable categories in the data, translating a multi-category classifier on the $M$ mixtures into a bounded geometric object with $T$ vertices.
We first demonstrate our procedure on a toy problem to infer the truth-level fractions of down-quark, up-quark, and gluon jets from synthetic mixtures of the three pure samples.
We then propose a tag-and-probe strategy to extract multiple light-flavor categories in a more realistic collider setting involving dijet production.
As expected, the identifiability of jet flavors depends on their relative abundance in the samples and the hadron-level information available to the classifier architecture. 
Our work opens the door to data-driven extractions of multiple jet flavor properties at the Large Hadron Collider.}

\begin{document}

\maketitle
\flushbottom

\section{Introduction}
\label{sec:introduction}

Quarks and gluons are produced copiously at the Large Hadron Collider (LHC).
Partons produced through hard scattering radiate gluons, which in turn split into other gluons and quark--anti-quark pairs through a process of perturbative showering.
At nonperturbative length scales of order $\Lambda_{\rm QCD}^{-1}$, these partons hadronize into color-neutral hadrons, forming collimated sprays of particles known as jets.
We can try to assign flavor labels, such as ``quark'' or ``gluon'', to the observed jets \cite{Nilles:1980ys, Jones:1988ay, Fodor:1989ir, Jones:1990rz, Lonnblad:1990qp, Pumplin:1991kc, Gallicchio:2011xq, Gallicchio:2012ez, Bhattacherjee:2015psa, FerreiradeLima:2016gcz, Bhattacherjee:2016bpy, Davighi:2017hok, Larkoski:2019nwj}.
This process, known as jet tagging, is important to many analyses at the LHC.
For example, it can enhance the fraction of signal-like jets in searches for physics beyond the standard model \cite{CMS:2017mbm, CMS:2017asf,  CMS:2017qxu, ATLAS:2017zuf, CMS:2019emo, ATLAS:2021kxv, ATLAS:2021suo, ATLAS:2023dyu, ATLAS:2025ycg}.

Traditional approaches to jet tagging use hand-crafted substructure observables based on insights from quantum chromodynamics (QCD) \cite{Salam:2010nqg, Abdesselam:2010pt, Plehn:2011tg, Altheimer:2012mn, Shelton:2013an, Altheimer:2013yza, Adams:2015hiv, Cacciari:2015jwa, Kogler:2018hem, Marzani:2019hun, Larkoski:2017jix,   Larkoski:2024uoc}. 
More recently, the rise of deep learning has led to the development of classifiers that leverage lower-level features to discriminate jet flavors \cite{Guest:2018yhq, Albertsson:2018maf, Radovic:2018dip, Carleo:2019ptp, Bourilkov:2019yoi, Schwartz:2021ftp,  Feickert:2021ajf, Boehnlein:2021eym, Karagiorgi:2022qnh, Plehn:2022ftl, Bonilla:2022wzp, DeZoort:2023vrm, Zhou:2023pti, Belis:2023mqs, Mondal:2024nsa}.
Typically, jet tagging relies on simulated jets whose flavor labels come from parton-level information in a Monte Carlo event generator, such as the event record of a parton shower. 
This naive generator definition assumes a perfect correlation between the UV flavor from the hard process and the IR flavor from the jet measurement, which is known to fail \cite{Gras:2017jty}. 
Progress has been made toward replacing the generator definition with sharp parton-level definitions \cite{Banfi:2006hf, Buckley:2015gua, Caletti:2022hnc, Caletti:2022glq, Czakon:2022wam, Gauld:2022lem, Caola:2023wpj, Behring:2025ilo}, factorization-based definitions \cite{Gallicchio:2011xc, Frye:2016aiz, Frye:2016okc}, and even fully hadron-level definitions \cite{Gras:2017jty,Komiske:2018vkc,Stewart:2022ari}.
The hadron-level approach advocated for in \Reference{Gras:2017jty} defines a given jet flavor as a phase space region that yields an enriched sample of partons of that flavor in a specific process.
It is conceptually attractive because it is explicitly tied to measurable hadronic final states.
Furthermore, it acknowledges that a ``quark'' in one process may not look like a ``quark'' in other processes due to color correlations with the rest of the event.

The operational definition in \Reference{Komiske:2018vkc} translates this conceptual hadron-level definition into a practical procedure for extracting the distributions of quark and gluon jets from collider data. 
Given two mixed jet samples whose quark/gluon compositions are unknown \textit{a priori}, the operational quark and gluon distributions are defined as the maximally separable categories in the data.
With this definition, one can extract the quark and gluon mixing fractions of the two mixtures using the classification without labels (CWoLa) \cite{Metodiev:2017vrx} and jet topics frameworks \cite{Metodiev:2018ftz}. 
Crucially, once these mixing fractions are known, one can determine separate quark and gluon distributions for any jet observable.
The operational definition has already been used to extract quark-like and gluon-like substructure 
distributions in CMS Open Data \cite{Komiske:2022vxg, Dolan:2023abg} and in ATLAS analyses \cite{ATLAS:2019rqw, LlorenteMerino:2019zov, VillaplanaPerez:2019yxt, ATLAS:2025ycg, Lenz:2020eam}, and it is also viable for testing quark- and 
gluon-jet modification in heavy-ion collisions \cite{Brewer:2020och}. 
However, its current implementation is limited to extracting at most two jet flavors from two mixtures.

In this paper, we extend the operational definition to any number of samples $M$ and operational topics $T$, as long as $T \leq M$.
We show that the output distribution of a classifier trained on the $M$ samples is bounded by a $(T-1)$-simplex, which is the higher dimensional generalization of a line segment ($T=2$), a triangle ($T=3$), or a tetrahedron ($T=4$).
The classifier learns a mapping from the complex, high-dimensional feature space of jets to the lower-dimensional classifier space.
Hence, the vertices of the $(T-1)$-simplex correspond to regions of phase space, known as anchor regions, that yield pure samples of jets of each operational topic in a specific process.
In this sense, the vertices of the simplex \textit{are} the hadron-level definition of jet flavor.
Using this result, we design a practical machine learning procedure called \emph{simplex demixing} to extract the mixing fractions of operational topics from the measured mixtures.
As an initial toy study, we use simplex demixing to recover the truth-level fractions from several synthetic mixtures of down-quark, up-quark, and gluon jets.

Anticipating applications of simplex demixing at the LHC, we then propose a data-driven strategy for defining multiple light-flavor jet categories from dijet events.
In a quasi-realistic proof-of-concept study using jets from \tsc{Pythia} 8.310 \cite{Bierlich:2022pfr}, we construct $M = 14$ mixtures with different flavor compositions using a tag-and-probe method.
When we train our classifier on the 14 mixtures, we find that the distribution lies on a simplex spanned by $T = 7$ operational topics.
With perfect detector information and HL-LHC statistics, five operational topics are strongly associated with the \tsc{Pythia} down, up, strange, anti-strange, and gluon flavors, whereas the remaining two topics are weakly associated with the \tsc{Pythia} anti-down and anti-up flavors.
These results show that light-flavor jet categories can emerge clearly at the ensemble level, avoiding the issue of assigning flavor labels to individual jets.
This suggests a path toward follow-up studies that define multiple jet flavors directly from experimental data to extract their corresponding jet substructure distributions.

Beyond our collider physics context, simplex demixing is a powerful method for topic modeling.
Topic models were originally developed for identifying abstract themes that constitute a collection of documents. 
More recently, they have been applied to collider datasets for unsupervised learning \cite{Dillon:2019cqt, Dillon:2020quc, Alvarez:2021zje, LeBlanc:2022bwd, Dolan:2023abg}.
A classic topic model is latent Dirichlet allocation (LDA), which models each document as a convex combination of themes, and each theme as a probability distribution over a finite set of words \cite{blei2003latent}.
Topic modeling can also be formulated as a non-negative matrix factorization problem, where the matrix of documents over words is factorized into a matrix of topics over words and a matrix of fractions over topics \cite{10.1145/275487.275505, arora2012learning, pmlr-v28-arora13}.
Closely related matrix factorization problems arise in many other fields, including signal processing \cite{6200362} and archetypal analysis \cite{cutler1994archetypal, eugster2009spider, suleman2017validation}.
Some approaches to these problems exploit the simplex geometry and assume the existence of anchor regions, analogous to our strategy \cite{10.1117/12.366289, 4779330}.
However, these methods share an important limitation: they require that the mixed distributions be defined on finite feature spaces. 
By contrast, simplex demixing can be combined with any flexible deep architecture, allowing it to operate on continuous features and even structured inputs like variable-length arrays. 
Related techniques in label-noise learning combine deep classification with the simplex geometry \cite{Li:2021LabelNoise}, but focus on light-to-moderate label contamination and settings in which the latent categories are known \textit{a priori}.
Our emphasis, instead, is on demixing highly-mixed samples and discovering their latent topics through a weakly supervised approach.
Another closely related method is the \tsc{Demix} framework developed in \Reference{3322706.3361982}.
\tsc{Demix} provides a theoretical algorithm that performs topic modeling on continuous feature spaces in the asymptotic limit; our simplex demixing approach can be viewed as a practical machine learning alternative with good finite sample performance.

The remainder of this paper is organized as follows.
In \Sec{sec:reviewing_the_two_mixture_operational_definition}, we review the operational definition for two jet mixtures, including the necessary ingredients from the CWoLa and jet topics frameworks, and then recast the operational definition as a geometric inference problem. 
In \Sec{sec:introducing_simplex_demixing}, we use our geometric framing to extend the operational definition to $M$ mixtures and $T$ operational topics.
We then introduce simplex demixing, a practical machine-learning framework compatible with any classifier architecture to infer the mixing fractions in realistic datasets.
In \Sec{sec:three_mixture_toy_study}, we validate simplex demixing by recovering the mixing fractions from different mixtures of down-quark, up-quark, and gluon jets.
We highlight that the classifier distribution has a distinct geometric signature depending on the number of underlying topics.
In \Sec{sec:realistic_physics_application}, we perform a proof-of-concept study with \tsc{Pythia}-generated dijet mixtures with realistic QCD correlations and show how simplex demixing yields operational topics for multiple light-flavor jets.
We provide a conclusion and outlook in \Sec{sec:conclusions}.
~\\

\noindent \textbf{Note added:} While finalizing our analysis, we learned of \Reference{JohannRaphael}, which also proposes a simplex method for topic modeling and applies it to standard machine-learning benchmark datasets, as well as to Galaxy10 DECals in a synthetic-mixture setting. The overall mathematical framework is consistent with ours, though we use complementary strategies for identifying the simplex geometry: post-hoc simplex fitting and architectural bottlenecks in \Reference{JohannRaphael} versus simplex regularization via loss terms in \Sec{sec:three_stage_training_procedure} below.

\section{Reviewing the two-mixture operational definition} \label{sec:reviewing_the_two_mixture_operational_definition}

In this section, we review the original operational definition of quark and gluon jets~\cite{Komiske:2018vkc}, which extracts two jet topics from two jet mixtures. 
We then provide an alternative, geometric approach that is easier to generalize, laying the necessary groundwork for simplex demixing in \Sec{sec:introducing_simplex_demixing}.

\subsection{Review of CWoLa and jet topics}\label{sec:review_of_cwola_and_jet_topics}

The goal of CWoLa is to learn the optimal classifier for distinguishing two categories using only two mixed samples of those categories~\cite{Metodiev:2017vrx}.
We focus on the task of quark/gluon discrimination from two jet mixtures, $M_1$ and $M_2$, but the same logic holds for any two categories/mixtures.
The mixtures are assumed to be convex combinations of quark and gluon distributions in the feature space $x$:%
\footnote{The feature space can be high dimensional or even variable dimensional, but we use just the symbol $x$ for notational simplicity.}
\begin{align}
    p_{M_1} (x) &= f_{1} \, p_q(x) + (1-f_{1}) \, p_g(x) \label{eq:firstmixture}, \\ 
    p_{M_2}(x) &= f_2 \, p_q(x) + (1-f_2) \, p_g(x) \label{eq:secondmixture},
\end{align}
with $f_1, f_2 \in [0, 1]$.
By the Neyman--Pearson lemma~\cite{Neyman:1933wgr}, the likelihood ratio is the optimal classifier for discriminating between $M_1$ and $M_2$, given by:
\begin{equation} \label{eq:l1_to_lqg}
    L_{M_1 M_2} (x) \equiv \frac{p_{M_1}(x)}{p_{M_2}(x)}
    = \frac{f_1 \, L_{qg} (x) + (1-f_1)}{f_2 \, L_{qg} (x) + (1-f_2)},
\end{equation}
where the quark-gluon likelihood ratio is:
\begin{equation}
    L_{qg} (x) \equiv \frac{p_q (x)}{p_g(x)}.
\end{equation}
If $f_1 > f_2$, one can show that $L_{M_1 M_2} (x)$ increases monotonically with $L_{qg} (x)$.
If $f_1 < f_2$, $L_{M_1 M_2} (x)$ instead decreases monotonically with $L_{qg} (x)$.
Therefore, $L_{M_1 M_2} (x)$ and $L_{qg} (x)$ define classifiers with the same decision boundaries up to an overall sign.

The most quark-enriched and gluon-enriched regions in phase space occur at the extrema of $L_{M_1 M_2} (x)$, suggesting that the endpoints are relevant for recovering the base quark and gluon distributions. 
Using the formalism introduced in \Reference{3322706.3361982}, the jet topics framework \cite{Metodiev:2018ftz} defines the \emph{reducibility factor} between any two distributions, $p_A$ and $p_B$, as the minimum of their likelihood ratio over feature space: 
\begin{equation}
    \kappa_{AB} = \min_{x} \frac{p_A(x)}{p_B(x)}.
\end{equation}
If both reducibility factors $\kappa_{AB}$ and $\kappa_{BA}$ vanish, then $p_A (x)$ and $p_B (x)$ are said to be \textit{mutually irreducible} \cite{pmlr-v30-Scott13}. 
Intuitively, two distributions are mutually irreducible when each of them has a pure region in phase space, also known as an anchor region, where it is nonzero but the other vanishes.

According to the jet topics framework~\cite{Metodiev:2018ftz}, the $M_1$ and $M_2$ mixtures can be expressed in terms of two mutually irreducible distributions $T_1$ and $T_2$, also known as topics:
\begin{align}
    p_{M_1}(x) &= \widetilde{f}_1 \, p_{T_1} (x) + (1-\widetilde{f}_1) \, p_{T_2} (x), \label{eq:first_topic_mixture} \\ 
    p_{M_2}(x) &= \widetilde{f}_2 \, p_{T_1} (x) + (1- \widetilde{f}_2) \, p_{T_2}(x). \label{eq:second_topic_mixture}
\end{align}
These equations imply that $L_{M_1 M_2} (x)$ and $L_{T_1 T_2} (x)$ are also monotonically related to one another:
\begin{equation}
    L_{M_1 M_2} (x) = \frac{\widetilde{f}_1 \, L_{T_1 T_2} (x) + (1-\widetilde{f}_1)}{\widetilde{f}_2 \, L_{T_1 T_2}(x) + (1-\widetilde{f}_2)},
\end{equation}
where $\widetilde{f}_1, \widetilde{f}_2 \in [0, 1]$ and $\widetilde{f}_1 > \widetilde{f}_2$.
Because $p_{T_1} (x)$ and $p_{T_2} (x)$ are mutually irreducible by assumption, the reducibility factors $\kappa_{T_1 T_2}$ and $\kappa_{T_2 T_1}$ are zero. 
This fact, together with the monotonicity of $L_{M_1 M_2} (x)$ and $L_{T_1 T_2} (x)$, yields the mixing fractions of the topics in the data, $\widetilde{f}_1$ and $\widetilde{f}_2$, in terms of the reducibility factors of the two jet mixtures, $\kappa_{M_1 M_2}$ and $\kappa_{M_2 M_1}$:
\begin{align}
    \kappa_{M_1 M_2} \equiv \min_{x} L_{M_1 M_2} (x) &= \frac{1-\widetilde{f}_1}{1-\widetilde{f}_2} \label{eq:k12}, \\
    \frac{1}{\kappa_{M_2 M_1}} \equiv \max_{x} L_{M_1 M_2} (x) &= \frac{\widetilde{f}_1}{\widetilde{f}_2} \label{eq:k21}.
\end{align}
The above system of equations has a unique solution as long as $\widetilde{f}_1 \neq \widetilde{f}_2$.

The key assumption of the operational definition of quark and gluon jets is that these two topics should be \emph{identified} with the ``quark'' and ``gluon'' distributions up to permutation~\cite{Komiske:2018vkc}.
In this definition, the word ``operational'' emphasizes that it has a concrete implementation that can be applied to both real and synthetic datasets, even if the labels ``quark'' and ``gluon'' are fundamentally ambiguous~\cite{Gras:2017jty}.
Assuming that $T_1$ is the quark topic and $T_2$ is the gluon topic, this implies:
\begin{equation} \label{eq:topics_are_flavors}
f_1 = \widetilde{f}_1, \qquad f_2 = \widetilde{f}_2.
\end{equation}
Inverting \Eqs{eq:k12}{eq:k21} and substituting the result into \Eqs{eq:first_topic_mixture}{eq:second_topic_mixture} yields an expression for the pure operational quark and gluon distributions in terms of the jet mixtures and the reducibility factors:
\begin{align} \label{eq:pure_quark_dist}
p_{q} (x) \equiv p_{T_1} (x) &= \frac{p_{M_1} (x) - \kappa_{M_1 M_2} \, p_{M_2} (x)}{1 - \kappa_{M_1 M_2}}, \\ \label{eq:pure_gluon_dist}
p_{g} (x) \equiv p_{T_2} (x) &= \frac{p_{M_2} (x) - \kappa_{M_2 M_1} \, p_{M_1} (x)}{1 - \kappa_{M_2 M_1}}.
\end{align}
Because the $p_{M_1} (x)$ and $p_{M_2} (x)$ distributions are known, \Eqs{eq:pure_quark_dist}{eq:pure_gluon_dist} enable us to plot the topic distributions over feature space, providing a fully data-driven way to study quark and gluon jet properties.

All that remains to recover the $f_1$ and $f_2$ fractions is to extract the reducibility factors from the mixtures $M_1$ and $M_2$.
There are several methods to accomplish this in the literature:
\begin{itemize}
    \item \textbf{Anchor bin method: } The $p_{M_1} (x)$ and $p_{M_2} (x)$ mixtures are binned in $x$, and their reducibility factors are calculated as the extrema of the log-likelihood ratios of bins.
    This method is straightforward, but it is sensitive to the statistics at the endpoints of the distribution.
    The anchor bin method is used in \References{Komiske:2022vxg, Dolan:2023abg, ATLAS:2019rqw, LlorenteMerino:2019zov, VillaplanaPerez:2019yxt, ATLAS:2025ycg, Lenz:2020eam}.

    \item \textbf{Log-likelihood ratio method: } To mitigate the sensitivity to the endpoint statistics, the log-likelihood ratio $\ln L_{M_1 M_2} (x)$ is fit to a polynomial. 
    This method, which is compatible with any quark-gluon discriminant $x$, is developed in \Reference{Komiske:2022vxg}.

    \item \textbf{ROC curve method: } The reducibility factors are calculated from the slopes at the endpoints of the receiver operating characteristic (ROC) curve. 
    This method is only applicable when the likelihood ratio is a monotonic function of the discriminant. 
    The ROC curve method is used in \References{Komiske:2022vxg, Dolan:2023abg}.
\end{itemize}
In \Sec{sec:introducing_simplex_demixing}, we describe a new geometric approach to demixing, called simplex demixing, that extracts the mixing fractions directly from a trained classifier, removing the need for a separate processing step to extract the reducibility factors.

\subsection{Geometric interpretation of the operational definition} \label{sec:geometric_interpretation}

The two-mixture operational definition discussed in \Sec{sec:review_of_cwola_and_jet_topics} relies on the notion of a likelihood ratio, which does not have a unique generalization to more than two distributions.
We now present a geometric picture of this operational definition, which will facilitate a generalization to any number of mixtures and topics in \Sec{sec:introducing_simplex_demixing}.
Instead of working directly with the likelihood ratio, we train a classifier on the mixtures using the cross-entropy loss.
In the two-mixture case, this classifier is monotonically related to the likelihood ratio and has useful geometric properties that generalize to the multi-mixture case.

\begin{figure}[t]
\centering
\subfloat[\label{subfig:reducibilityToGeometry_a}]{\includegraphics[width=0.45\textwidth]{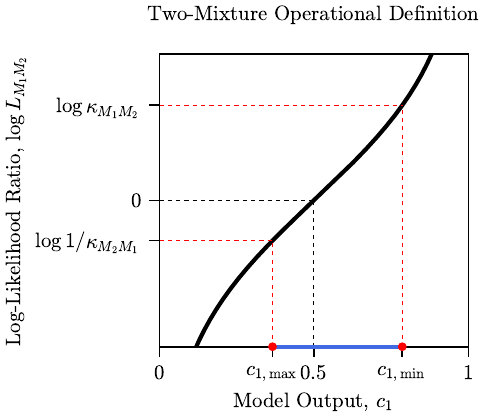}}
\hfill
\subfloat[\label{subfig:reducibilityToGeometry_b}]{\includegraphics[width=0.45\textwidth]{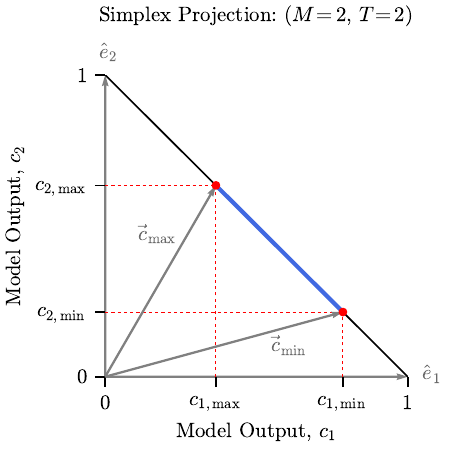}}
\caption{
(a) The mixed likelihood ratio is plotted against the classifier output according to \Eq{eq:mixedLHFunctional}.
The jet topics approach, based on extracting reducibility factors $\kappa_{M_1 M_2}$ and $\kappa_{M_2 M_1}$, is equivalent to finding the endpoints $\{c_{1, \text{min}}, c_{1, \text{max}}\}$ (in red) of the classifier distribution (in blue).
The positions of the two red points give the mixing fractions $f_1$ and $f_2$ using \Eqs{eq:cmin}{eq:cmax}. 
(b) The same classifier output is shown embedded in the probability 1-simplex, which is the convex hull of the orthonormal vectors $\hat{e}_1$ and $\hat{e}_2$, according to \Eq{eq:2_mixtures_2_topics}. \label{fig:reducibilityToGeometry}}
\end{figure}

Let us train a classifier on jets from two mixtures, $M_1$ and $M_2$, using the binary cross-entropy (BCE) loss. 
Each jet is labeled according to the mixture from which it is drawn. 
If the mixtures are finite, then their empirical distributions are sums of Dirac deltas over their jets in feature space.
In the limit of infinite training data, the distributions become continuous functions, rendering them amenable to analysis with the calculus of variations. 
In this asymptotic limit, the binary cross-entropy loss is:
\begin{equation}
    \label{eq:BCE}
    \mathcal{L}_{\rm BCE}[c_1, c_2] = -\int d x \big[ p_{M_1}(x) \, \ln c_1(x) + p_{M_2} (x) \, \ln c_2(x)\big],
\end{equation}
where $c_1(x) + c_2 (x) = 1$ and $c_1 (x), c_2(x) \geq 0$ for all $x$. 
At the loss minimum, where $\delta \mathcal{L} = 0$, the learned classifier becomes a monotonic function of the likelihood ratio $L_{M_1 M_2} (x)$, as shown in \Fig{subfig:reducibilityToGeometry_a}.
Specifically, the functional extremum of the loss is at:
\begin{equation} \label{eq:mixedLHFunctional}
    L_{M_1 M_2} (x) = \frac{c_1(x)}{1-c_1(x)} \quad \Rightarrow \quad c_1(x) = \frac{L_{M_1 M_2} (x)}{1+L_{M_1 M_2} (x)}.
\end{equation}
Just as the likelihood ratio is bounded between the two reducibility factors, the classifier distribution $c_1(x)$, depicted as a blue line segment on the model output axis, is bounded between the $c_{1, \text{max}}$ and $c_{1, \text{min}}$ points, in red. 
The classifier $c_1(x)$ is monotonic in $L_{M_1 M_2}(x)$, and $L_{M_1 M_2}(x)$ is in turn monotonic in $L_{q g} (x)$, so the endpoints of $c_1(x)$ are the quark- and gluon-enriched regions of phase space by transitivity. 
Using \Eqs{eq:k12}{eq:k21} and assuming \Eq{eq:topics_are_flavors}, we can relate $c_{1, \text{min}}$ and $c_{1, \text{max}}$ to the mixing fractions of quarks and gluons in the data:
\begin{align}
    c_{1, \text{min}} &= \frac{(1-f_1)}{(1-f_1) + (1-f_2)} \label{eq:cmin}, \\ 
    c_{1, \text{max}} &= \frac{f_1}{f_1 + f_2} \label{eq:cmax}.
\end{align}

Crucially, we can arrive at the same result by viewing the classifier distribution as a geometric object. 
To see this, we first parameterize the classifier distribution as:
\begin{equation}
    \vec{c}(x) = c_1 (x) \, \hat{e}_1 + c_2 (x) \, \hat{e}_2,
\end{equation}
where $\hat{e}_1$ and $\hat{e}_2$ are orthonormal vectors of a two-dimensional plane. 
The $c_1(x) + c_2(x) = 1$ and $c_1 (x), c_2 (x) \geq 0$ conditions restrict the space of possible outputs to the convex hull of $\hat{e}_1$ and $\hat{e}_2$, also known as the probability 1-simplex.
At the loss minimum, the distribution can be written as a vector-valued function from feature space $x$ to the 1-simplex:
\begin{align} \label{eq:classifierSegment}
    \vec{c} (x) = \frac{p_{M_1} (x) \,  \hat{e}_1 +   p_{M_2} (x) \,  \hat{e}_2}{p_{M_1}(x)+p_{M_2}(x)}.
\end{align}
Substituting the mixture model from \Eqs{eq:first_topic_mixture}{eq:second_topic_mixture} (assuming \Eq{eq:topics_are_flavors}) into \Eq{eq:classifierSegment},
we can rewrite the distribution $\vec{c}(x)$ in a suggestive form:
\begin{align} \label{eq:2_mixtures_2_topics}
    \vec{c}(x)= (1-\lambda (x)) \,  \vec{c}_{\text{min}} + \lambda (x) \,  \vec{c}_{\text{max}},
\end{align}
where $\lambda(x)$, known as a barycentric coordinate, is defined as:%
\footnote{In the special case that $f_1 = 1-f_2$, i.e.~exactly complementary mixing fractions, we have $\lambda(x) \equiv p_q(x)/ (p_q(x)+ p_g(x))$, i.e.~the quark/gluon classifier for pure samples.
Deviations from the pure classifier can be interpreted as an effective sample size imbalance between quarks and gluons in the training data, which is compensated by an asymmetry in the classifier endpoints.}
\begin{equation}
    \lambda(x) \equiv \frac{(f_1+f_2) \, p_q(x)}{(f_1+f_2) \, p_q(x)+(2-f_1-f_2) \, p_g(x)},
\end{equation}
and the classifier endpoints are:
\begin{equation}
   \vec{c}_{\text{min}} \equiv \begin{pmatrix}
        \frac{1-f_1}{2-f_1-f_2} \\ \frac{1-f_2}{2-f_1-f_2} \end{pmatrix}, \qquad
   \vec{c}_{\text{max}} \equiv \begin{pmatrix}
        \frac{f_1}{f_1 + f_2} \\ \frac{f_2}{f_1 + f_2}
    \end{pmatrix}.
\end{equation}

Let us now interpret \Eq{eq:2_mixtures_2_topics}.
The column vectors $\vec{c}_{\text{min}}$ and $\vec{c}_{\text{max}}$ have nonnegative entries that add up to one, so they lie in the probability 1-simplex. 
Because $p_q(x)$ and $p_g(x)$ are nonnegative and $f_1, f_2 \in [0, 1]$, the $\lambda (x)$ coordinate lies in the interval $[0, 1]$.
Furthermore, $p_q(x)$ and $p_g(x)$ are mutually irreducible by assumption, so the quarks have an anchor region $X_q$ where $p_q(X_q) > 0$ and $p_g(X_q) = 0$, and vice versa.%
\footnote{We use capital $X_t$ to denote a set, since there might be multiple $x_t$ values that satisfy the anchor property.}
As a result, $\lambda (X_q) = 1$ and $\lambda (X_g) = 0$, so $\lambda(x)$ saturates the bounds of the $[0, 1]$ interval over the feature space $x$. 
Geometrically, this implies that the classifier distribution lies in the convex hull of $\vec{c}_{\text{max}}$ and $\vec{c}_{\text{min}}$.
In other words, the classifier distribution is bounded by a 1-simplex within the probability 1-simplex, as shown in \Fig{subfig:reducibilityToGeometry_b}.
From this picture, it follows that the $f_1$ and $f_2$ fractions of quarks and gluons are recoverable unless $\vec{c}_{\text{min}} = \vec{c}_{\text{max}}$ or, equivalently, $f_1 = f_2$. The advantage of the geometric approach is that it generalizes straightforwardly to multiple topics, as we describe in the next section.

\section{Introducing simplex demixing} \label{sec:introducing_simplex_demixing}

In this section, we first extend the geometric approach from \Sec{sec:geometric_interpretation} to the case of three mixtures. 
This $M = 3$ example builds intuition for how the classifier distribution depends on the number of latent topics, which will be useful for interpreting the results of our first case study in \Sec{sec:three_mixture_toy_study}.
We then provide a fully general proof of simplex demixing for an arbitrary number of mixtures and topics.
With this proof in hand, we design a practical machine-learning procedure to perform simplex demixing and discuss how to compare the inferred topics against test datasets of known compositions.

\subsection{Intuition for demixing with three mixtures} \label{sec:intuition_for_demixing_with_three_mixtures}

Going from two samples to three suggests going from the binary cross-entropy loss in \Eq{eq:BCE} to the $M=3$ categorical cross-entropy (CCE) loss.
In the asymptotic limit, the categorical cross-entropy loss is:
\begin{equation}
    L_{\text{CCE}} = - \int dx \big[ p_{M_1}(x) \ln c_1(x) + p_{M_2}(x) \ln c_2(x) + p_{M_3}(x) \ln c_3(x)\big],
\end{equation}
where the learned $c_1(x)$, $c_2(x)$, and $c_3 (x)$ functions are subject to the constraint $c_1 (x) + c_2 (x) + c_3 (x) = 1$ and $c_1 (x), c_2 (x), c_3(x) \geq 0$.
The classifier distribution is now a vector in three-dimensional Euclidean space:
\begin{equation}
\vec{c} (x) = c_1 (x) \, \hat{e}_1 + c_2 (x) \, \hat{e}_2 + c_3 (x) \, \hat{e}_3,
\end{equation}
where $\hat{e}_1$, $\hat{e}_2$, and $\hat{e}_3$ are orthonormal vectors. 
The output of the classifier is therefore restricted to the convex hull of the three unit vectors, also known as the probability 2-simplex.
As in the previous section, the stationary condition on the loss implies that the classifier learns the posterior class probabilities:
\begin{equation} \label{eq:m_3_posterior_class_probabilities}
    \vec{c} (x)  = \frac{p_{M_1} (x ) \, \hat{e}_1 + p_{M_2} (x ) \, \hat{e}_2 + p_{M_3} (x ) \, \hat{e}_3}{p_{M_1} (x) + p_{M_2} (x) + p_{M_3}(x)}.
\end{equation}

We then model the three mixed probability distributions as convex combinations of three (or fewer) mutually irreducible operational topics:
\begin{equation} \label{eq:m_3_mixture_model}
    p_m (x)= \sum_{t=1}^T F_{mt} \, p_t (x).
\end{equation}
Here, $m \in \{1, 2, M=3\}$ indexes the jet samples, while $t \in \{ 1, \ldots, T \leq M\}$ indexes the underlying operational topics.
The matrix of mixing fractions $F_{mt}$ contains nonnegative entries that sum to one per row, such that $\sum_t F_{mt} = 1$.
By substituting the mixture model from \Eq{eq:m_3_mixture_model} into \Eq{eq:m_3_posterior_class_probabilities}, we find that $\vec{c}(x)$ lies in the convex hull of the $T$ vertices $\vec{V}_t$:
\begin{equation} \label{eq:m_3_classifier_vertex_parameterization}
    \vec{c} (x) = \sum_{t=1}^T \lambda_t (x) \, \vec{V}_t.
\end{equation}
In the above equation, the barycentric coordinates $\lambda_t (x)$ are defined as:
\begin{equation} \label{eq:barycentric_coords_m=3}
    \lambda_t(x) = \frac{\sum_{m=1}^M F_{mt} \, p_t(x)}{\sum_{m'=1}^M \sum_{t'=1}^T F_{m't'} \, p_{t'}(x)}.
\end{equation}
The vertices $\vec{V}_t$ are the columns of the fraction matrix rescaled to the embedding 2-simplex:
\begin{equation}
  \vec{V}_t \equiv \frac{\vec{F}_t}{\sum_{m = 1}^M F_{mt}}, \qquad \vec{F}_t \equiv \sum_{m = 1}^M F_{mt} \, \hat{e}_m. 
\end{equation}
Notably, the $\lambda_t (x)$ coordinates absorb all of the dependence on the feature space, so the vertices are global, input-independent properties of the classifier distribution.

\begin{figure}[t]
\centering
\subfloat[\label{subfig:proof_illustration_a}]{
    \includegraphics[width=0.45\textwidth]{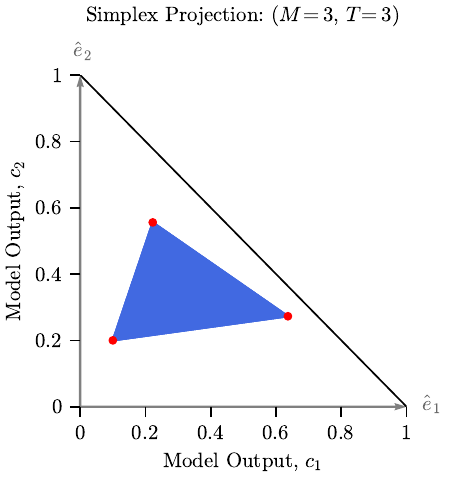}}%
\hfill
\subfloat[\label{subfig:proof_illustration_b}]{
    \includegraphics[width=0.45\textwidth]{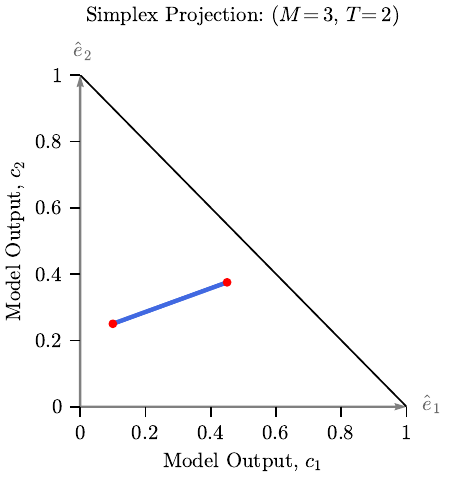}
}
\caption{\label{fig:proof_illustration} (a) The convex hull of the classifier distribution, projected onto the $c_1$--$c_2$ plane, when the $M=3$ mixtures are composed of $T=3$ irreducible topics, such as three distinct flavors of jets. (b) The same projection when the three mixtures are composed solely of $T=2$ topics. In this case, the simplex collapses to a line.}
\end{figure}

The learned distribution has support at the vertices by an argument similar to the one presented at the end of \Sec{sec:geometric_interpretation} (and proved more generally in \Sec{sec:demixing_any_number_of_topics} below).
Therefore, the vertex positions can be used to infer the fractions of operational topics in the three mixtures. 
Before proceeding to the general case, let us first build intuition for \Eq{eq:m_3_classifier_vertex_parameterization}.
This equation tells us that the number of vertices bounding the classifier geometry equals the number of underlying topics. 
In \Fig{fig:proof_illustration}, we illustrate this geometry for different numbers of topics $T$.
In both panels, the classifier space is projected onto the $c_1$--$c_2$ plane, and the triangular black outline denotes the corresponding projection of the bounding probability simplex.
If there are three underlying topics, then the convex hull of the classifier distribution is a 2-simplex defined by three vertices, as shown in \Fig{subfig:proof_illustration_a}. 
Each vertex corresponds to an anchor region where one topic distribution is nonzero and the other two vanish.
If there are two underlying topics, then the convex hull of the classifier distribution is a 1-simplex defined by two vertices, as shown in \Fig{subfig:proof_illustration_b}. 
If there is a single underlying topic, then the classifier distribution collapses to a single point, a 0-simplex, within the probability 2-simplex. 
In this way, the learned classifier geometry provides the complete information about the demixed topics.

\subsection{Demixing for any number of mixtures and topics}\label{sec:demixing_any_number_of_topics}

We now extend the previous argument to an arbitrary number of mixtures $M$ and latent topics $T$.
We find that the bounding shape of the classifier distribution is a $(T-1)$-simplex, which is the higher dimensional generalization of a line or a triangle.
The logic is essentially the same as the arguments provided in \Secs{sec:geometric_interpretation}{sec:intuition_for_demixing_with_three_mixtures}, but we first state the problem formally and introduce important definitions to support the general proof.

Suppose we have $M$ mixed distributions $\{p_m (x) \}_{m=1}^M$ that are modeled as linear combinations of $T$ mutually irreducible topics $\{p_t (x) \}_{t = 1}^T$:
\begin{equation} \label{eq:general_mixture_decomposition}
    p_m (x) = \sum_{t=1}^T F_{mt} \, p_t (x), \qquad \sum_{t=1}^T F_{mt} = 1,
\end{equation}
where $m \in \{ 1, \dots, M \}$ indexes the mixed distributions and $t \in \{1, \dots, T \}$ indexes the topics.
The goal is to recover the fraction matrix $F_{mt}$, whose entries are nonnegative and sum to one per row. 
We define the vertices $\vec{V}_t$ as the columns of the fraction matrix rescaled to the probability simplex:
\begin{equation} \label{eq:vertex_definition}
    \vec{V}_t \equiv \frac{\vec{F}_t}{S_t}, \qquad \vec{F}_t \equiv \sum_{m=1}^M F_{mt} \, \hat{e}_m,
    \qquad S_t \equiv \sum_{m = 1}^{M} F_{mt}.
\end{equation}
To emphasize, the vector arrow indicates mixture space, which lives in the probability $(M-1)$-simplex, while the $t$ subscript indicates topic space, which spans a smaller $(T-1)$-simplex.
In the asymptotic limit, we train a classifier on the $M$ distributions using the categorical cross-entropy loss: 
\begin{equation} \label{eq:cce_loss}
    L_{\text{CCE}} \, [c_1, \dots, c_M] = -\int dx \,  \sum_{m=1}^M p_m(x) \ln c_m(x),
\end{equation}
where the $c_m (x)$ functions are constrained to be nonnegative and sum to one.

\begin{theorem} \label{theorem:demixing}

At $\delta L_{\text{CCE}} = 0$, the convex hull of the classifier distribution is a $(T-1)$-simplex $\mathcal{S}$ within the probability $(M-1)$-simplex. The $\mathcal{S}$ simplex has $T$ vertices, each corresponding to a different operational topic, provided that $T \leq M$.
The mixing fractions are recoverable up to permutations of the operational topics as long as the fraction matrix has full column rank.
\end{theorem}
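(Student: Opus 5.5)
We follow the same three steps as in \Secs{sec:geometric_interpretation}{sec:intuition_for_demixing_with_three_mixtures}. First, we extremize \Eq{eq:cce_loss} subject to $\sum_m c_m(x)=1$, using a pointwise Lagrange multiplier $\mu(x)$. Stationarity gives $p_m(x)/c_m(x)=\mu(x)$. Summing over $m$ fixes $\mu(x)=\sum_{m'} p_{m'}(x)$, so $\vec c(x)=\sum_m p_m(x)\,\hat e_m/\sum_{m'}p_{m'}(x)$, which is the direct analog of \Eq{eq:m_3_posterior_class_probabilities}. Substituting \Eq{eq:general_mixture_decomposition} and inserting $S_t/S_t$, we get
\begin{equation*}
\vec c(x)=\sum_{t=1}^T \lambda_t(x)\,\vec V_t,\qquad \lambda_t(x)=\frac{S_t\,p_t(x)}{\sum_{t'}S_{t'}\,p_{t'}(x)} .
\end{equation*}
Each $\lambda_t(x)$ is nonnegative and $\sum_t\lambda_t(x)=1$. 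Each $\vec V_t$ lies in the probability $(M-1)$-simplex, because its entries are nonnegative and sum to one. We may assume $S_t>0$, since a topic with $S_t=0$ is absent from every mixture and can be dropped. This shows that the classifier distribution lies inside $\mathrm{conv}\{\vec V_1,\dots,\vec V_T\}$.

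Second, we show that the hull is attained. Mutual irreducibility of the topics gives, for each $t$, an anchor region $X_t$ with $p_t(X_t)>0$ and $p_{t'}(X_t)=0$ for all $t'\neq t$; this is the $T$-topic generalization of the condition $\kappa=0$. On $X_t$ we have $\lambda_t=1$, so $\vec c(X_t)=\vec V_t$. Every vertex is therefore in the support, and the convex hull of the classifier distribution equals $\mathrm{conv}\{\vec V_t\}$.

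Third, we show that this hull is a genuine $(T-1)$-simplex with exactly $T$ vertices, which requires the $\vec V_t$ to be affinely independent. If $F$ has full column rank, which already forces $T\le M$, then the columns $\vec F_t$ are linearly independent, and hence so are the rescaled $\vec V_t=\vec F_t/S_t$. Linear independence implies affine independence. It also implies that no $\vec V_t$ is a convex combination of the others, so each is an extreme point. The hull is then a $(T-1)$-simplex $\mathcal S$ whose vertices correspond one-to-one with topics.

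For recoverability, we read off the extreme points of the hull of $\vec c(x)$, which gives $\{\vec V_t\}$ up to ordering. Since $\vec F_t=S_t\vec V_t$, we need the scales $S_t$. These are fixed by the row constraint $\sum_t F_{mt}=1$, which reads $\sum_t S_t\,\vec V_t=\sum_m\hat e_m=(1,\dots,1)^{\!\top}$. Because the $\vec V_t$ are linearly independent, this linear system has at most one solution $S$, and the true $S$ is one. Hence $F$ is determined uniquely up to a permutation of the topic labels.

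The main obstacle is the interplay between irreducibility and the rank condition. Without full column rank, some $\vec V_t$ could lie inside the hull of the others, or the system for $S$ could become non-unique, so the vertex count would underestimate $T$. We will state explicitly that mutual irreducibility in the multi-topic case means each topic has its own anchor region where all other topics vanish. We will also note that "support" should be read up to measure-zero subtleties, so that $\vec V_t$ belongs to the closure of the range of $\vec c$.
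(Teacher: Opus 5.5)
Your proof is correct and follows the paper's argument step by step: the Lagrange-multiplier derivation of $c_m(x)=p_m(x)/\sum_{m'}p_{m'}(x)$, the rewriting as $\sum_t\lambda_t(x)\vec V_t$ with $\lambda_t=S_tp_t/\sum_{t'}S_{t'}p_{t'}$, the anchor regions placing each $\vec V_t$ in the support, and the unique inversion of $\sum_t S_t\vec V_t=\vec 1$ under full column rank. Your one addition, using full column rank to show the $\vec V_t$ are affinely independent (equivalent to linear independence here, since they lie on the hyperplane $\sum_m c_m=1$) so that the hull is a nondegenerate $(T-1)$-simplex with exactly $T$ extreme points, makes explicit a step the paper leaves implicit and for which $T\le M$ alone would not suffice.
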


\begin{proof}[Proof]
We augment the loss functional with a Lagrange multiplier $\mu (x)$ that explicitly enforces the classifier normalization-to-one condition at each $x$:
\begin{equation}
    L_{\text{CCE}} [c_1, \ldots, c_M, \mu] = - \int dx \left[\left( \sum_{m=1}^M p_m(x) \ln c_m(x) \right) + \mu(x) \left( 1-\sum_{m=1}^M c_m(x)\right)\right].
\end{equation}
At $\delta L_\text{CCE} = 0$, the functional derivatives of $L_\text{CCE}$ with respect to each $c_m(x)$ and $\mu(x)$ must be zero. 
Specifically:
\begin{align}
    \frac{\delta L_\text{CCE} [c_1, \ldots, c_M, \mu]}{\delta c_m (x)} &= - \frac{p_m(x)}{c_m(x)} + \mu(x) =0, \label{eq:LCC_by_cm}\\
    \frac{\delta L_\text{CCE} [c_1, \ldots, c_M, \mu]}{\delta \mu (x)} &= \sum_{m=1}^M c_m(x) - 1 =0.
\end{align}
These two equalities, taken together, imply that $\mu(x) = \sum_{m'=1}^M p_{m'}(x)$, so the classifier learns the posterior class probabilities:
\begin{equation}
    c_m(x) = \frac{p_m(x)}{\sum_{m'=1}^M p_{m'}(x)}.
\end{equation}
Equivalently, using vector notation $\vec{c}(x) = \sum_{m=1}^M c_m(x) \hat{e}_m$, we may write
\begin{equation} \label{eq:posterior_class_probabilities}
    \vec{c}(x) = \frac{\sum_{m=1}^M p_m (x) \, \hat{e}_m}{\sum_{m'=1}^M p_{m'} (x)}.
\end{equation}
Substituting the mixture model from \Eq{eq:general_mixture_decomposition} into \Eq{eq:posterior_class_probabilities}, we find:
\begin{align}
    \vec{c} (x) &= \frac{1}{\sum_{m'=1}^M \sum_{t'=1}^T F_{m't'} p_{t'} (x)} \sum_{m=1}^M \sum_{t=1}^T F_{mt} \, p_t (x) \, \hat{e}_m \nonumber \\
    &= \sum_{t=1}^T \left( \frac{S_t \, p_t (x)}{\sum_{t'=1}^T S_{t'} \, p_{t'} (x)} \right) \left( \sum_{m=1}^M \frac{F_{mt} \,  \hat{e}_m}{S_t} \right) \nonumber \\
    &= \sum_{t=1}^T \lambda_t (x) \, \vec{V}_t,
\end{align}
where we have used $\vec{V}_t \equiv \vec{F}_t / S_t$ and defined:
\begin{equation}
    \lambda_t(x) \equiv \frac{S_t \, p_t(x)}{\sum_{t'} S_{t'} \, p_{t'} (x)},
\end{equation}
which is the general version of \Eq{eq:barycentric_coords_m=3}. The barycentric coordinates $\lambda_t (x)$ are nonnegative because $S_t > 0$ and $p_t (x) \geq 0$ for all $t$ and $x$.
Furthermore, they sum to one at each $x$ by definition. 
Geometrically, this means that the classifier distribution lies in the simplex defined by the $T$ vertices $\vec{V}_t$.

Now, we show that the classifier distribution saturates the bounds of the simplex at the vertices.
By assumption, each topic $t$ has an anchor region $X_t$ in feature space where $p_{t} (X_t) > 0$ and $p_{t'} (X_t)= 0$ for $t' \neq t$.
Using the definition of $\lambda_t (x)$, this implies that $\lambda_t (X_t) = 1$ and $\lambda_{t'} (X_t) = 0$ at the anchor region.
In other words, the classifier distribution has support at the $T$ vertices.
This property is all we need to identify the $T$ topics. 
We do not need the classifier distribution to saturate the entire simplex, which is actually not guaranteed by mutual irreducibility alone.

To solve for the mixing fractions from the vertices, we use the condition that every row of the fraction matrix must add up to one: 
\begin{equation} \label{eq:vertices_to_fractions}
    \sum_t \vec{F}_t = \sum_t S_t \, \vec{V}_t = \vec{1},
\end{equation}
where $\vec{1} = (1 \ldots 1)^\top$ is a column vector of ones. 
The second equality of \Eq{eq:vertices_to_fractions} yields a system of equations for $S_t$. 
Under the mixture model, the vector of ones is in the span of the vertices, so the system is consistent. 
There is a unique solution for the $S_t$ coefficients if and only if the $T$ vertices are linearly independent or, equivalently, the fraction matrix has full column rank, completing the proof.

\end{proof}

\subsection{Machine learning architecture} \label{sec:machine_learning_architecture}

\Thrm{theorem:demixing} shows that the classifier distribution is bounded by a $(T-1)$-simplex within the probability $(M-1)$-simplex. 
Furthermore, the $T$ vertices are known functions of the entries of the fraction matrix $F_{mt}$.
In principle, one could use standard computational procedures to determine the convex hull of the points that make up the classifier distribution \cite{mair2024archetypal, 1411995}.
However, deep networks are susceptible to mismodeling, which can distort the learned geometry, especially when the samples are highly mixed.
We therefore need a way to regularize the classifier.
In addition, we would like a way to remove vertices that do not contribute to the convex hull of the classifier distribution. 
To address both challenges, we present a general framework compatible with any base classifier architecture. 

In standard cross-entropy training, the classifier outputs points within the probability $(M-1)$-simplex of the form:
\begin{equation}
    \vec{c}(x) = c_1(x) \, \hat{e}_1 + \ldots + c_M(x) \, \hat{e}_M.
\end{equation}
Instead, the core idea of our method is to parameterize the classifier so that its output lives in a learnable simplex:
\begin{equation} \label{eq:cce_parameterization}
    \vec{c}(x) = w_1 \, \lambda_1 (x) \, \vec{V}_1 + \ldots + w_M \, \lambda_M (x) \, \vec{V}_M.
\end{equation}
The three elements that appear in this functional form are roughly related to the three different training stages described in \Sec{sec:three_stage_training_procedure} below:
\begin{itemize}
    \item The vertices $\vec{V}_t$ define the simplex that encloses the learned distribution.%
    
    In the network, the vertices are implemented as an $M \times M$ matrix of trainable weights.
    In each forward pass, the weights are rescaled to ensure that the vertices always lie in the probability simplex.%
    \footnote{Concretely, we first apply a softmax transformation, defined generally in \Eq{eq:softmax_transformation}, to each row of the weight matrix. 
    We then transpose the matrix and normalize each row to one.
    This technical trick ensures that the vertices, given by the rows of the transformed matrix, span the column vector of ones, which is a key assumption of the mixture model as discussed in the formal proof of \Thrm{theorem:demixing}.}
    We consider $M$ vertices because there can be up to $T=M$ operational topics in the $M$ mixtures.
    The vertices are initialized to the $M$ corners of the probability simplex, $\vec{V}_m = \hat{e}_m$, to ensure that the learned classifier initially has access to all of classifier space. 

    \item The barycentric coordinates $\lambda_m(x)$  are functions that map each input, at a location $x$ in feature space, to a location in classifier space within the convex hull of the learned vertices. 
    These coordinates must be nonnegative and sum to one so that each $x$ is mapped into the convex hull of the vertices.

    In the network, the $\lambda_m(x)$ functions are implemented using any deep architecture for classification.
    The raw outputs of the base classifier, $( z_1 (x), \ldots, z_M (x))$, are rescaled using a softmax transformation, given by:
    \begin{equation} \label{eq:softmax_transformation}
        (\lambda_1(x), \ldots, \lambda_M(x)) = \frac{1}{\sum_{m=1}^M e^{z_m(x)}}\left( e^{z_1(x)}, \ldots, e^{z_M(x)} \right),\end{equation}
        which guarantees nonnegativity and unit normalization.
        The initialization scheme of the base classifier weights depends on the selected architecture.

    \item The vertex weights $w_m$ in $[0, 1]$ control whether the $\vec{V}_m$ vertices contribute to the bounding simplex of the classifier distribution or not. 

    In the network, the $w_m$ are implemented using a $1 \times M$ array of trainable logits, $\left( \ell_1, \ldots, \ell_M \right)$.
    The logits are scaled by a sigmoid transformation, given by:
    \begin{equation}
        (w_1, \ldots, w_M) = \left(  \frac{1}{1 + e^{-\ell_1}}, \ldots, \frac{1}{1 + e^{-\ell_M}}\right),
\end{equation}
    ensuring that the vertex weights range from zero to one. 
    These weights are all initialized to one because there can be up to $M$ topics in the mixtures. 
\end{itemize}
Each forward pass of the network multiplies the vertices, coordinates, and vertex weights according to \Eq{eq:cce_parameterization} and evaluates the cross-entropy loss on the final product $\vec{c}(x)$.
Intuitively, the vertices $\vec{V}_t$ shape the classifier distribution during training, while the weights $w_t$ sparsify the representation of the classifier distribution down to the true $T$ operational topics latent in the mixtures, as we now explain.

\subsection{Three-stage training procedure} \label{sec:three_stage_training_procedure}

\begin{figure}[t]
\centering
\subfloat[\label{fig:901000_results_a}]{\includegraphics[width=0.32\textwidth]{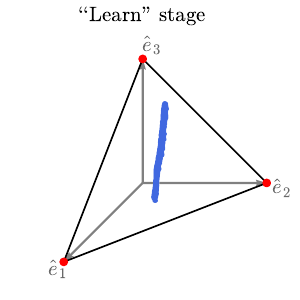}}
\hspace{0mm}
\subfloat[\label{fig:901000_results_b}]{\includegraphics[width=0.32\textwidth]{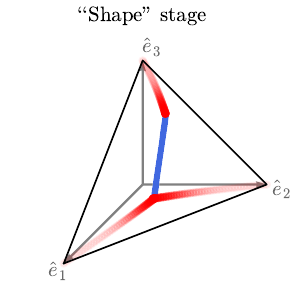}}
\hspace{0mm}
\subfloat[\label{fig:901000_results_c}]{\includegraphics[width=0.32\textwidth]{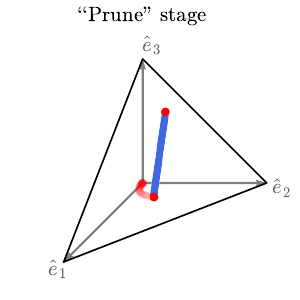}}
\caption{\label{fig:901010_results} 
A schematic representation of the three-stage training procedure for simplex demixing, for the case of $M = 3$ and $T = 2$.
(a) During the ``learn'' stage, the vertices $\vec{V}_m$ (in red) are frozen at the corners of the simplex, $\vec{V}_m = \hat{e}_m$. 
After training, the classifier distribution (in blue) resembles a 1-simplex with noticeable irregularities. 
(b) During the ``shape'' stage, the vertices are unfrozen and pulled toward the distribution by an auxiliary loss, $L_\text{edge}$, thereby regularizing the shape of the 1-simplex.
(c) During the ``prune'' stage, the excess vertex is sequestered at the origin of the coordinate system by a Lasso-type loss, leaving two vertices to span the distribution.
The two vertices are the two operational topics in the mixtures.}
\end{figure}

While it is in principle possible to directly train the classifier in \Eq{eq:cce_parameterization}, we found more robust results from using a three-stage training strategy.
Each stage corresponds to a different task, as illustrated in \Fig{fig:901010_results}:
\begin{itemize}
    \item \textbf{Stage one -- Learn:} The purpose of this stage is for the classifier to first learn a  good representation of the data.

The vertices are fixed at the corners of the probability $(M-1)$-simplex, $\vec{V}_m = \hat{e}_m$, to ensure the network has access to all of classifier space. 
For the same reason, we fix the vertex weights $w_m = 1$. 
Therefore, the function $\vec{c}(x)$ entering the cross-entropy calculation is effectively just:
\begin{equation}
    \vec{c}(x) = \lambda_1 (x) \, \hat{e}_1 + \ldots + \lambda_M (x) \, \hat{e}_M.
\end{equation}
In addition to the cross-entropy loss, we introduce an L2 regularization term on the neural-network weights $\theta$ of the $\lambda_m(x)$ functions:
\begin{equation}
    L_{\text{L2}} = \sum_{\theta \in \Theta_\lambda} \theta^2.
\end{equation}
The L2 regularizer induces a simplicity bias on the learned features of the model, which helps prevent overfitting~\cite{Hoerl01021970}.
The total loss in the ``learn'' stage is:
\begin{equation}
    L_\text{learn} = L_{\text{CCE}} + \gamma \, L_{\text{L2}},
\end{equation}
where $L_\text{CCE}$ is defined in \Eq{eq:cce_loss}.
We emphasize that $L_\text{L2}$ is just one example of a regularizer that can be used.
Other examples include dropout \cite{Srivastava:2014kpo} or batch normalization \cite{Ioffe:2015ovl}.
Ultimately, the optimal choice of regularizer depends on the base classifier used to implement the $\lambda_m(x)$ functions.

The network is trained in stage one until the validation cross entropy reaches a stable plateau.
By the end of stage one, the classifier distribution resembles a simplex, but with noticeable distortions, as shown in \Fig{fig:901000_results_a}.
The $\gamma$ hyperparameter and other hyperparameters of the $\lambda_m(x)$ functions are tuned to achieve a sustained plateau in the cross entropy through the end of the third training stage.
If overfitting sets in earlier, it interferes with the tasks of stages two and three, which we discuss next.

\item \textbf{Stage two -- Shape: } The goal of this stage is to encourage the vertices to tightly enclose the classifier distribution, enforcing the simplex geometry.
This stage is especially useful when the mixtures are highly mixed.

During this stage, the vertices are unfrozen and allowed to move, so the $\vec{c}(x)$ function entering the cross entropy is:
\begin{equation}
\vec{c}(x) = \lambda_1(x) \, \vec{V}_1 + \ldots + \lambda_M(x) \, \vec{V}_M.
\end{equation}
We need an auxiliary loss that attracts the vertices $\vec{V}_t$ toward the classifier distribution without overly distorting it.
In addition, the loss should continue to contract the vertices even after the configuration has collapsed to a lower-dimensional simplex, as occurs when there are fewer than $M$ topics in the mixtures.%
\footnote{While the volume of the learned simplex might seem like an obvious choice for regularization, the volume vanishes as soon as one of the vertices collapses to the subspace spanned by the remaining ones. Therefore, it does not work well when there are fewer than $M-1$ topics in the mixtures, and can even fail with $M$ topics if one of edges collapses before the simplex fully shrinks.}
A simple loss that satisfies both requirements is the sum of the Euclidean distances between each pair of vertices (the simplex ``edges'').
For $t, t' \in \{1, \ldots, T=M\}$, the loss is:
\begin{equation} \label{eq:aux_loss}
    L_{\text{edge}} = \sum_{t > t'} \| \vec{V}_t - \vec{V}_{t'}\|_2.
\end{equation}
The total loss in the ``shape'' stage is therefore:
\begin{equation} \label{eq:shape_loss}
    L_\text{shape} = L_\text{CCE} + \gamma \, L_\text{L2} + \alpha \, L_\text{edge},
\end{equation}
where the hyperparameter $\alpha$ controls the strength of the auxiliary loss relative to the objective loss.
As the network trains on data, the vertices fall toward the distribution, as shown by the flow of the vertices in \Fig{fig:901000_results_b}.
The network is trained in stage two until the vertices converge to their final positions and $L_\text{edge}$ plateaus. 
How tightly the vertices enclose the distribution at the end of stage two depends on the value of $\alpha$.
If $\alpha$ is too weak, the cross entropy remains stable, but the vertices fail to reach the distribution.
If $\alpha$ is too strong, the cross entropy worsens, and the vertices distort the distribution by shrinking it below its true size.
The $\alpha$ parameter is tuned to an intermediate value that balances the two objectives.
\end{itemize}

 After the second stage, the classifier distribution is bounded by a set of $M$ vertices, which is overcomplete if the mixtures contain fewer than $M$ physical topics.
When $M$ is small, the number of physical vertices is evident from visual inspection of the classifier distribution: physical endpoints lie at the extremes, whereas extraneous vertices remain inside the cloud.
For instance, in the simple case study of \Sec{sec:three_mixture_toy_study}, the distinction between $T=2$ and $T=3$ is visually apparent.
When $M$ is larger, though, identifying the number of endpoints becomes challenging.
In \Sec{sec:realistic_physics_application}, we use physics domain knowledge to select $T$, and leave a discussion of domain-agnostic methods for choosing $T$ to future work. 

\begin{itemize}

\item \textbf{Stage three -- Prune: } Once $T$ is known, the ``prune'' stage uses Lasso (L1 regularization) \cite{51791361-8fe2-38d5-959f-ae8d048b490d} to remove any excess vertices. 
With the unnecessary vertices removed, the jets are redistributed among the physical vertices corresponding to operational topics. 

To accomplish this, we unfreeze the weights $w_t$, so the function entering the cross-entropy calculation is now the full form from \Eq{eq:cce_parameterization}:
\begin{equation}
\vec{c} (x) = w_1 \, \lambda_1 (x) \, \vec{V}_1 + \ldots + w_M \, \lambda_M (x) \, \vec{V}_M.
\end{equation}
Besides the cross-entropy and L2 losses, we also turn on an L1 loss on the vertex weights:
\begin{equation}
    L_\text{L1} = \sum_{m=1}^M w_m.
\end{equation}
The total loss in the ``prune'' stage is:
\begin{equation}
    L_{\text{prune}} = L_\text{CCE} + \gamma \, L_\text{L2} + \alpha \, L_\text{edge} +  \beta \, L_\text{L1}.
\end{equation}
The L1 loss promotes sparsity: if a vertex $\vec{V}_t$ is redundant for describing the distribution, its corresponding weight $w_t$ decays to zero.
In contrast, the relevant vertices' weights remain one, as illustrated in \Fig{fig:901000_results_c}.
The network is trained in stage three until $L_\text{L1}$ selects the relevant features and stabilizes, while ensuring that the cross-entropy and edge losses have plateaued.
In our experiments, stage three is a short fine-tuning step.
The $\beta$ hyperparameter is tuned to keep exactly $T$ vertices.

Finally, at the end of stage three, the vertices are inverted using \Eq{eq:vertices_to_fractions} to obtain the $M \times T$ matrix of mixing fractions $F_{mt}$, which constitutes the complete demixing information.
\end{itemize}

The L1 penalty in the third stage drives the effective vertices $w_t \, \vec{V}_t$ toward the origin of the $\hat{e}_m$ coordinate system.
If some of the vertices collapse to the origin, one might worry that some of the outputs $\vec{c}(x)$ would get pulled off the $\sum_m c_m(x) = 1$ hyperplane, distorting the simplex geometry from the second stage.
Fortunately, this failure mode does not occur because the cross-entropy gradients counteract the collapse.
Writing $p_m(x)$ for the $m$-th mixture, the gradient of the loss with respect to $c_m(x)$ is (repeating the relevant part of \Eq{eq:LCC_by_cm}):
\begin{equation}
    \frac{\delta L[c_1, \ldots, c_m, \ldots, c_M]}{\delta c_m (x)} = - \frac{p_m (x)}{c_m (x)},
\end{equation}
so the cross-entropy loss decreases as $c_m(x)$ increases.
To understand the effect of this gradient, we observe that the vertices are constrained to live on the simplex, the barycentric coordinates add up to one, and the weights range from zero to one, implying:
\begin{equation}
    \sum_m c_m(x) = w_1 \, \lambda_1(x) + \ldots + w_T \, \lambda_T(x) \leq 1,
\end{equation}
such that the distribution cannot grow past the simplex. 
Therefore, for the subset of active vertices $\mathcal{S} \subseteq \{ 1, \ldots, T \}$, the weights $w_t$ are one and the barycentric coordinates satisfy $\sum_{t \in \mathcal{S}} \lambda_t(x) = 1$ by the end of training.
Said another way, training will drive the $w_m$ to either 0 (inactive vertex, dominated by $L_{\rm L1}$) or 1 (active vertex, dominated by $L_{\rm CCE}$).

Our reasoning for separating the three training stages is that each stage serves a different objective that builds on the previous ones.
Regularizing the geometry before the classifier has learned a good representation risks driving the network into a suboptimal local minimum. 
Similarly, pruning vertices before they have stabilized risks eliminating important features of the distribution before their role has become clear.
With a staged approach, the $\alpha$ and $\beta$ hyperparameters also become easier to tune. 
Empirically, this framework consistently leads to robust minima.
We discuss the details of tuning the $\alpha$, $\beta$, and $\gamma$ in the case studies in \Secs{sec:three_mixture_toy_study}{sec:realistic_physics_application}.

\subsection{Identifying the operational topics} \label{sec:identifying_the_operational_topics}

To validate the performance of simplex demixing, we want to check that the extracted mixing fractions $F_{mt}$ agree with known benchmarks.
Imagine that we have a set of mixtures $p_m(x)$ of data that are \emph{known} convex combinations of $C$ categories $p_c (x)$:
\begin{equation} \label{eq:truth_level_decomposition}
    p_m (x) = \sum_{c=1}^C G_{mc} \, p_c(x),
\end{equation}
where $G_{mc}$ can be determined \emph{a priori} and $M \geq C \geq T$.
Of course, such a benchmark sample would not be available when deploying on real experimental data.
In our synthetic case studies, the categories correspond to \tsc{Pythia} flavors, but they could refer to any other ``truth'' label.

Naively, we could evaluate our method's performance by running simplex demixing on the mixtures $p_m (x)$, inferring the mixing fractions $F_{mt}$, and comparing the entries of $F_{mt}$ and $G_{mc}$. 
Such a comparison assumes that there is a one-to-one correspondence between the operational topics and the truth-level categories, but this will in general not be true for at least two reasons:
\begin{itemize}
    \item There will be a systematic offset between the entries of $F_{mt}$ and $G_{mc}$ if the definition of the categories $p_c(x)$ does not imply mutual irreducibility.
    Because the operational topics $p_t(x)$ are mutually irreducible by construction, perfect agreement is only possible if the category definition also enforces mutual irreducibility.
    
    \item The entries of $F_{mt}$ are noisy away from the infinite-data limit.
    \Thrm{theorem:demixing} is proven in the asymptotic limit, so we expect our estimator to recover the true mixing fractions exactly only in this limit.
    Intuitively, finite samples may not contain enough data points from a given topic to populate that topic's anchor region in phase space.
    In addition, the more weakly identifiable an operational topic is in the samples, the more data points are required to resolve its anchor region.  
\end{itemize}

To sidestep these complications, we use a comparison method that quantifies the degree to which the operational topics align with the categories rather than assuming that exact agreement should hold.
Our method measures two different quantities for all inferred topics $t$ and truth-level categories $c$:
\begin{itemize}
    \item  $p (t | c)$, the probability that the data are assigned to the $t$-th topic given that they belong to the $c$-th category.

    \item $p (c | t)$, the quasi-probability that the data belong to the $c$-th category given that they are assigned to the $t$-th topic. 
\end{itemize}
If both $p(c | t)$ and $p(t | c)$ are one, then there is a one-to-one correspondence between the $c$-th category and the $t$-th topic.
To emphasize, $p (t | c)$ has a true probability interpretation (at least asymptotically), whereas $p (c | t)$ only has a quasi-probability interpretation, as we now explain.

The $p(t|c)$ coefficients arise naturally from expanding the truth-level categories in terms of the operational topics.
This is possible because all probability distributions, including those of the categories, are assumed to be convex combinations of the topics:
\begin{equation}
    p_c(x) \equiv p(x|c) = \sum_{t=1}^T p(x|t, c) \, p(t|c) = \sum_{t=1}^T p(t|c) \, p_t(x),
\end{equation}
where we have used $p(x | t, c) = p_t(x)$ because the topics are independent of the categories used to construct them.
The $p(t|c)$ coefficients are therefore \textit{bona fide} conditional probabilities, which are nonnegative and sum to one in the asymptotic limit. 
On real datasets, any negative values of $\widehat{p}(t|c)$ provide a measure of finite-statistics effects and systematic errors from the choice of network hyperparameters.

Unlike the $p(t|c)$ coefficients, the $p(c|t)$ coefficients do not fall out naturally from a mixture model. 
Performing the analogous expansion:
\begin{equation}
    p_t(x) \equiv p(x|t) = \sum_{c=1}^C p(x|c,t) \, p(c|t) \not= \sum_{c=1}^C p(c|t) \, p_c(x),
\end{equation}
where the last step would require $p(x|c,t) = p_c(x)$, which directly contradicts the assumption of topic independence.
As a result, the $p(c|t)$ are not true conditional probabilities: they can be negative even in the asymptotic limit. 
However, each diagonal $p(c|t)$ becomes one when the $p_c(x)$ and $p_t(x)$ distributions coincide, so it is useful to interpret them as conditional quasi-probabilities.
On real datasets, the deviations of the diagonal $p(c|t)$ from one provide a measure of finite-statistics effects, systematic errors from the choice of network hyperparameters, and any mismatch between the category and topic definitions.

To perform a benchmark study and extract $p(t|c)$ and $p(c|t)$, we need a way to express the truth-level categories in terms of the operational topics, and vice versa.
With simplex demixing, we can relate the categories to the topics through their respective known compositions in the mixtures. 
Specifically, given the inferred $F_{mt}$ and the truth-level matrix $G_{mc}$, our estimators of these conditional (quasi-)probabilities are given by:
\begin{align} 
    \widehat{p}(t|c) &= \sum_{m=1}^M (G^+)_{cm} F_{mt} \label{eq:p(t|c)}, \\
    \widehat{p}(c|t) &= \sum_{m=1}^M (F^+)_{tm} G_{mc} \label{eq:p(c|t)},
\end{align}
where $(G^+)_{cm}$ and $(F^+)_{tm}$ denote the Moore--Penrose pseudoinverses of $G_{mc}$ and $F_{mt}$, respectively.
These systems of equations are either exactly determined or overdetermined because $M \geq T$ and $M \geq C$ by assumption.
In the asymptotic limit, \Eq{eq:p(t|c)} is consistent under the mixture model.
The reverse decomposition in \Eq{eq:p(c|t)} is exactly solvable when $C=T$, but generally only solvable in a least-squares sense when $C > T$.
Away from the limit of infinite training data, \Eqs{eq:p(t|c)}{eq:p(c|t)} yield least-squares solutions for both coefficient matrices.

\section{Three-mixture toy study} \label{sec:three_mixture_toy_study}

As a first validation of our simplex demixing procedure, we apply it to $M = 3$ synthetic mixtures of $d$-quark, $u$-quark, and gluon jets from \tsc{Pythia}.
Our method shows that the categories of ``down'', ``up'', and ``gluon'' are indeed latent in the mixed, unlabeled data.

\subsection{Event generation} \label{subsec:event_generation}

For this study, dijet events are generated using \tsc{Pythia} 8.310 \cite{Bierlich:2022pfr} at $\sqrt{s} = 14$ TeV with the default tunings and shower parameters, including hadronization and multiple parton interactions.
A parton-level $p_T$ cut of 400 GeV is applied to the two outgoing partons from the hard scattering process.
All two-to-two hard QCD processes that produce light quarks and gluons are turned on: \texttt{HardQCD:gg2gg}, \texttt{HardQCD:gg2qqbar}, \texttt{HardQCD:qg2qg}, \texttt{HardQCD:qq2qq}, \texttt{HardQCD:qqbar2gg}, and \texttt{HardQCD:qqbar2qqbarNew} with the default \texttt{HardQCD:nQuarkNew = 3}.

Final state non-neutrino particles are clustered using \tsc{FastJet} 3.5.0  \cite{Cacciari:2005hq, Cacciari:2011ma} into anti-$k_T$ jets \cite{Cacciari:2008gp} with radius $R = 0.4$.
All reconstructed jets are required to have $p_T \in [500, 550]$ GeV and pseudorapidity $|\eta| \leq 2$.
Per event, the hardest two jets are considered and kept if they pass the above cuts.
Each jet is saved as a $200 \times 4$ array, where each row corresponds to a jet constituent and contains its transverse momentum $p_T$, rapidity $y$, azimuthal angle $\phi$, and particle identification (PID) number.
The empty rows are padded with zeros. 
The jets are normalized in $p_T$ and centered in $y$ and $\phi$.
The PIDs corresponding to $\gamma$, $\pi^+$, $\pi^-$, $K^+$, $K^-$, $K_L$, $n$, $\bar{n}$, $p$, $\bar{p}$, $e^-$, $e^+$ are remapped to a float value starting at 0 and increasing by 0.1 for each type, following the prescription in \Reference{Komiske:2018cqr}.

\begin{table}[t]
    \centering
    \begin{tabular}{c c c c c}
        \toprule
        Toy Set & Mixture 1 & Mixture 2 & Mixture 3 & Purpose\\
        \midrule
        A & $20/70/10$ & $50/30/20$ & $20/10/70$ & Lightly mixed 2-simplex \\
        B & $40/40/20$ & $20/40/40$ & $40/20/40$ & Heavily mixed 2-simplex \\
        C & $10/90/0$  & $25/75/0$  & $65/35/0$ & 1-simplex\\
        \bottomrule
    \end{tabular}
    \caption{Mixing fractions, in percent, for the three toy-study sets. Each entry gives the fractions $f_d/f_u/f_g$ for one artificial jet sample.}
    \label{tab:toy_mixing_fractions}
\end{table}

The unphysical parton-shower-labeled jet flavor is assigned based on the flavor of the closest parton in the hard process, required to be within $2R$ of the jet rapidity-azimuth direction~\cite{Larkoski:2019nwj}.
Based on the unphysical jet flavors, we create three pure samples of down-quark, up-quark, and gluon jets.
We then combine the samples into three toy sets (A, B, and C) with three mixtures of $4 \times 10^5$  jets per set. 
The compositions of the sets, summarized in \Tab{tab:toy_mixing_fractions}, are hand-picked to test our procedure on different classifier geometries:
\begin{itemize}
    \item Set A is lightly mixed.
    Each mixture is heavily enriched in a distinct jet flavor.
    This set tests a classifier geometry that, while falling short of filling the whole probability simplex, should be an easily identifiable 2-simplex.

    \item Set B is heavily mixed.
    No one category is predominant in any mixture, which is a situation closer to what one might expect in a real experiment.
    This set tests a classifier geometry that is significantly smaller than the full probability simplex but still looks like a 2-simplex.

    \item Set C contains only down- and up-quark jets, so it tests the case where the classifier distribution collapses to a 1-simplex within the probability 2-simplex.
\end{itemize}
Of the $1.2 \times 10^6$ jets in each set (i.e.~three mixtures of $4 \times 10^5$ jets each), 70\% are used for training, 20\% for validation, and 10\% for testing.

\subsection{Training the three-mixture demixer} \label{sec:training_the_three_mixture_demixer}

All training is performed using Keras with the TensorFlow backend \cite{chollet2015keras, tensorflow2015-whitepaper}.
We build the architecture described in \Sec{sec:machine_learning_architecture}. 
The base classifiers $z_m(x)$ used in the $\lambda_m(x)$ functions are particle flow networks (PFNs), a set-based architecture for jet classification introduced in \Reference{Komiske:2018cqr} and implemented by the \texttt{EnergyFlow} package \cite{EnergyFlow}.
The PFN architecture is given by:
\begin{equation}
    \text{PFN}(x) = F\left( \sum_{x_i \in x} \Phi(x_i)\right),
\end{equation}
where $\Phi$ maps each particle $x_i$ in jet $x$ to an $\ell$-dimensional latent space, and $F$ maps the sum of the latent representations over all particles in the jet to classifier space.
In our application, the per-particle $\Phi$ function is parameterized using three dense layers with 100, 100, and $\ell=128$ nodes respectively.
We choose $\ell=128$ because it is well within the plateau of quark/gluon discrimination performance. 
The per-jet $F$ function is parameterized using three layers, each with 100 nodes.  
We use Leaky ReLU activation functions with leakiness 0.3 to mitigate the dying-ReLU problem. 
The PFN weights are initialized with He-uniform initialization \cite{He:2015dtg}.
Across all three training stages, we use a batch size of 1000 and optimize the simplex demixer with Adam \cite{Kingma:2014vow} using a learning rate of $0.001$.

Using this architecture, we implement the general training procedure described in \Sec{sec:three_stage_training_procedure} to compute the $\widehat{p}(t|c)$ and $\widehat{p}(c|t)$ matrices introduced in \Sec{sec:identifying_the_operational_topics}.
The procedure yields point estimates of these two matrices, which are subject to statistical uncertainties. 
To estimate the uncertainty in these and other downstream estimands, we combine our training scheme with the nonparametric bootstrap \cite{10.1214/aos/1176344552, Efron_Hastie_2016}. 
In the bootstrap, we draw multiple resamples with replacement from the full training set, each of the same size as the original training set, and compute all desired estimands per resample.
The distribution of the estimands across resamples allows us to estimate their standard errors.

Within the bootstrap framework, we first obtain a single random resample and tune the $\gamma$ and $\alpha$ hyperparameters on this resample.
We tune the hyperparameters on a resample, rather than on the original dataset, because we report only bootstrap results and therefore want the tuning procedure to reflect the effective information content of a resample.
Each resample contains only $63.2\%$ unique original data points on average and therefore effectively carries less information for learning purposes, so it often requires stronger L2 regularization than the original dataset would.

To start tuning, we allocate 400 epochs for training overall.
To tune $\gamma$, we perform a run using the stage one settings extended to the full 400 epochs. 
We choose $\gamma$ so that, in this run, the validation cross entropy has converged by epoch 200 and stays on the plateau until epoch 400.
The optimal value depends on the dataset, but we find $\gamma = 1 \times 10^{-4}$ to be a good default. 
Once $\gamma$ is tuned, we run stage one from scratch for 200 epochs. 
Then, we train stage two for 200 epochs using 40 different values of the $L_\text{edge}$ hyperparameter $\alpha$.
We focus on the $\alpha$ values for which the $L_\text{edge}$ loss has converged by the end of stage two. 
For these values, we plot the validation cross entropy as a function of $\alpha$. 
Typically, the curve is flat at low $\alpha$ and increases at high $\alpha$, once $L_\text{edge}$ is strong enough to compress the classifier distribution below its true size.
We select $\alpha$ at the elbow between the flat and increasing regimes. 
The position of the elbow varies depending on the dataset, but we find empirically that $\alpha = 5 \times 10^{-4}$ is a good value to start scanning around.
In this toy case study, we omit the ``prune'' stage because the number of vertices is obvious by visual inspection: three for Sets A and B, and two for Set C. 
We have checked that our results are robust to different initialization seeds and $\alpha$ values near the elbow.

After tuning, we run simplex demixing on 20 bootstrap resamples of the training set, using the same 200/200 epoch split for stages one and two respectively, and fixed $\gamma$ and $\alpha$ hyperparameters.
Each run yields a slightly different estimate of the mixing fractions $F_{mt}$.
For each resample, we also recalculate the truth-level mixing fractions $G_{mc}$ using the \tsc{Pythia} flavors of the jets drawn in that resample.
For each $F_{mt}$ and $G_{mc}$ pair, we compute the $\widehat{p} (t|c)$ and $\widehat{p} (c|t)$ estimates using \Eqs{eq:p(t|c)}{eq:p(c|t)}.
The columns of $\widehat{p} (t|c)$ and $\widehat{p} (c|t)$ might be permuted across runs because of the permutation symmetry inherent to our procedure.
To order the columns consistently, we select a single run and minimize the distances between its vertices and those of every other run using the \texttt{linear\_sum\_assignment} method implemented in SciPy \cite{2020SciPy-NMeth}.
We report the estimated $\widehat{p}(t|c)$ and $\widehat{p}(c|t)$ as their bootstrap medians with their corresponding 15\%--85\% intervals:
\begin{equation}
    \big[ Q_{0.15}\left( \widehat{p}(t|c) \right), Q_{0.85}\left( \widehat{p}(t|c) \right)\big],  \qquad
    \big[ Q_{0.15}\left( \widehat{p}(c|t)\right), Q_{0.85}\left( \widehat{p}(c|t)\right)\big].
\end{equation}
We emphasize that we do not report simplex demixing results on the original dataset because we are primarily interested in quantifying the uncertainty in the inferred matrices across bootstrap resamples.

Note that this fixed-hyperparameter procedure to estimate uncertainties neglects the dependence of the optimal hyperparameter values on the specific bootstrap resample.
A full nested bootstrap would retune the hyperparameters per resample, capturing this additional source of variance. 
Doing the bootstrap with fixed hyperparameters yields narrower error bars on the entries of the coefficient matrices than a nested bootstrap would.
We do not perform the nested bootstrap here because it is computationally prohibitive.
That said, we expect the underestimate of the uncertainty to be small because the curves of validation cross entropy versus hyperparameter have a weak dependence on the specific resample, so retuning does not seem to change the hyperparameters significantly.

\subsection{Identifying the three operational topics} \label{sec:identifying_the_three_operational_topics}

\begin{figure}[htbp]

\centering

\subfloat[]{%
\vspace{-0.1cm}
    \includegraphics[width=0.32\textwidth]{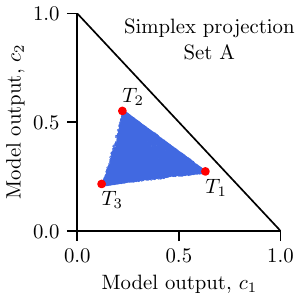}
}
\hfill
\subfloat[]{%
    \includegraphics[width=0.32\textwidth]{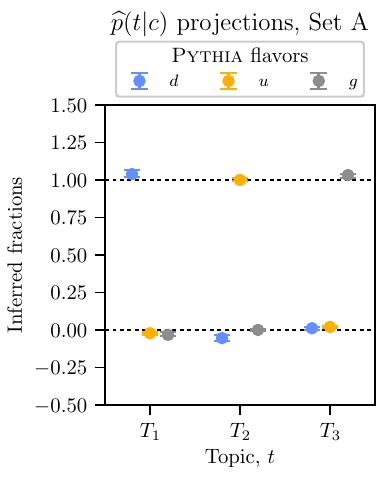}
}
\hfill
\subfloat[]{%
    \includegraphics[width=0.32\textwidth]{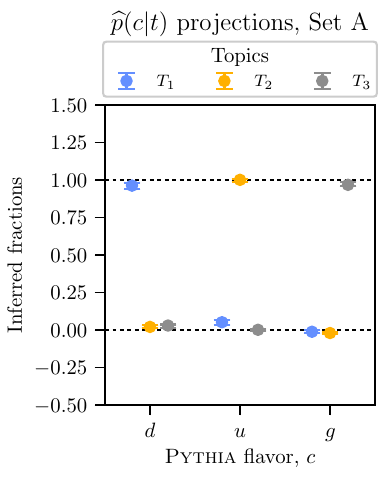}
}

\vfill

\subfloat[]{%
\vspace{-0.1cm}
    \includegraphics[width=0.32\textwidth]{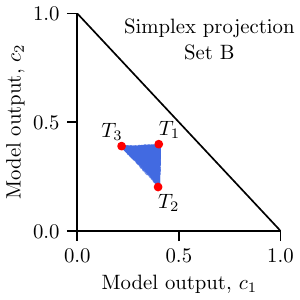}
}
\hfill
\subfloat[]{%
    \includegraphics[width=0.32\textwidth]{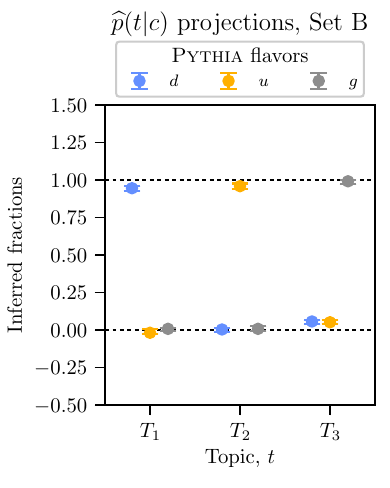}
}
\hfill
\subfloat[]{%
    \includegraphics[width=0.32\textwidth]{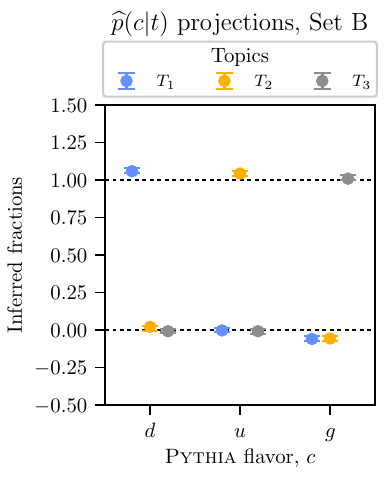}
}

\vfill

\subfloat[]{%
\vspace{-0.1cm}
    \includegraphics[width=0.32\textwidth]{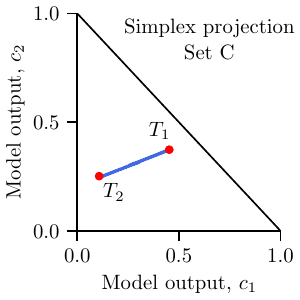}
}
\hfill
\subfloat[]{%
    \includegraphics[width=0.32\textwidth]{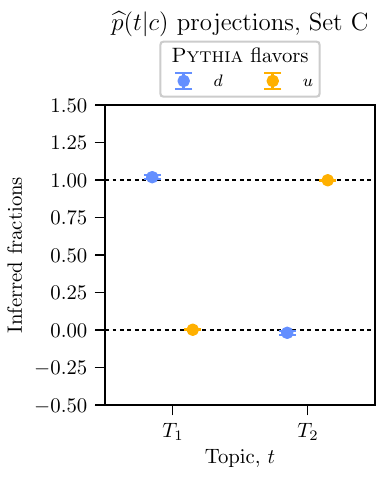}
}
\hfill
\subfloat[]{%
    \includegraphics[width=0.32\textwidth]{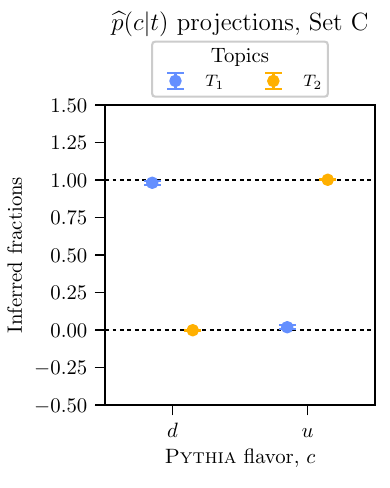}
}

\caption{\label{fig:three_mixture_results}
Results of simplex demixing for the three-mixture toy study, for Set A (top row), Set B (middle row), and Set C (bottom row).
The left column shows the classifier distributions (in blue) projected onto the $c_1$--$c_2$ plane.
The middle and right columns show the entries of $\widehat{p}(t|c)$ and $\widehat{p}(c|t)$, respectively.
The one-to-one correspondence between the \tsc{Pythia} flavors and operational topics shows that the concepts of ``up'', ``down'', and ``gluon'' jets are latent in the data across different levels of contamination.
When gluon jets are absent from the mixtures, as in Set C, the geometry collapses to a 1-simplex, as predicted by \Thrm{theorem:demixing}.
}
\end{figure}

The final results of simplex demixing for Sets A, B, and C are shown in \Fig{fig:three_mixture_results}.
The left column shows the classifier distributions projected onto the $c_1$--$c_2$ plane.
The output space is bounded by the probability 2-simplex, outlined in black.
In the classifier distribution, each blue dot in classifier space corresponds to a different input jet.
Together, the jets populate a 2-simplex bounded by the three learned vertices, in red. 
Comparing Sets A and B, the classifier distributions become smaller within the probability 2-simplex the more mixed the mixtures are.
For Set C where the gluon jets are absent from the samples, the classifier distribution collapses to a 1-simplex spanned by the down and up vertices alone, as predicted by \Thrm{theorem:demixing}.

In the middle and right columns of \Fig{fig:three_mixture_results}, we show the median entries of $\widehat{p}(t|c)$ and $\widehat{p}(c|t)$ from the bootstrap together with their respective 15\%--85\% intervals.
In each set, both matrices are approximately diagonal, establishing a clean one-to-one correspondence between the \tsc{Pythia} down-quark, up-quark, and gluon flavors and the $T_1$, $T_2$, and $T_3$ topics.
In Set B, the largest gluon fraction of $\widehat{p}(t|c)$ is consistent with one, though the largest up and down fractions are slightly lower than one. 
As discussed in \Sec{sec:identifying_the_operational_topics}, these deviations are consistent with finite-statistics effects, which become more prominent for highly mixed datasets, and with a small mismatch between the \tsc{Pythia} and operational topic definitions.
In particular, the \tsc{Pythia} quark and gluon flavors have a small but nonzero reducibility factor $\kappa_{qg} > 0$, implying that every gluon region in phase space is contaminated by quarks~\cite{Larkoski:2019nwj}. 
A nonzero $\kappa_{qg}$ pulls the largest quark fractions below one, matching the direction of the offset we see in Set B. 
We emphasize, though, that the \tsc{Pythia} flavors are not hadron-level concepts; they solely provide a semi-quantitative baseline for the performance of simplex demixing. 
Overall, the agreement between the \tsc{Pythia} flavors and the operational topics indicates that the concepts of ``up'', ``down'', and ``gluon'' jets are latent within the mixtures across different levels of contamination.

\section{Proof-of-concept physics application} \label{sec:realistic_physics_application}

We now study a quasi-realistic application of simplex demixing to operationally define multiple light flavors from dijet events at the LHC.
We find that the seven light flavors emerge (five strongly and two weakly) at the ensemble level, assuming large datasets with perfect detector information, even if tagging them individually is notoriously hard. 

\subsection{Event generation}
\label{subsec:the_tag_and_probe_method}

While our main goal is to replace supervised jet tagging with a data-driven extraction of operational topics, we begin with the same ingredients as a typical fully supervised analysis.
Supervised jet tagging uses two datasets: simulated Monte Carlo jets and jets observed in experimental data.
In the supervised paradigm, a classifier is first trained to distinguish the simulated jet flavors and then used to assign flavor labels to the experimentally measured jets.

To obtain the simulated dataset, we generate 30 million dijet events in \tsc{Pythia}, assign each jet a parton-level flavor by the closest-hard-parton prescription, and preprocess the individual jets following the same procedure as in \Sec{subsec:event_generation}.
Using the parton-level flavors, we create seven pure samples of \tsc{Pythia} $d$, $\bar{d}$, $u$, $\bar{u}$, $s$, $\bar{s}$, and $g$ jets.
We then train a standard supervised classifier on the seven samples with the categorical cross-entropy loss and the same PFN architecture as in \Sec{sec:training_the_three_mixture_demixer}.

For this proof-of-concept study, we create the ``experimental'' dataset using dijet events generated in \tsc{Pythia}.
We keep only events in which the two $p_T$-leading jets both lie in the [500, 550] GeV range and satisfy $|\eta| \leq 2$.
We save a total of 128 million events, which is close to the statistics expected at the HL-LHC given our cuts \cite{ATLAS:2017ble, CMS:2023fix}. 
Each jet is preprocessed and assigned a parton-level flavor following \Sec{subsec:event_generation}.
The parton-level flavors are used only for validation and for the heavy-flavor exclusion described next; they do not enter the demixing procedure.
For the purpose of operationally defining the light flavors, we treat all jets labeled as charm or bottom as perfectly taggable and exclude them from our analyses.
Together, these heavy-flavor jets make up less than 4\% of the total.
A more realistic study would need to include them, since mixture-dependent fractions of heavy flavors can yield additional topics.
We emphasize that, although we use Monte Carlo jets here, the general procedure would remain unchanged if we were to replace the \tsc{Pythia} jets with jets observed experimentally, apart from the idealized exclusion of heavy flavor.

\subsection{The tag-and-probe method}
\label{subsec:tag_and_probe}

The goal of our study is to recover the operational topics latent in the experimental dataset. 
The dataset is described by the joint distribution $p(x_1, x_2)$ of the two jets in each event.
This distribution factorizes into the operational topic distributions as:
\begin{equation} \label{eq:universal_topic_factorization}
    p(x_1, x_2) = \sum_{t, t'} p(x_1 | t) \, p(x_2|t') \, p(t'|t) \, p(t),
\end{equation}
where $t$ and $t'$ index the same topics over $x_1$ and $x_2$, respectively.
Here, $p(x_1 |t) \equiv p_t(x_1)$ and $ p(x_2|t') \equiv p_{t'}(x_2)$ denote the corresponding topic distributions, and $p(t'|t) \, p(t)$ is the topic co-occurrence matrix. 
The entries of $p(t'|t) \, p(t)$ are determined not only by the hard process but also by showering and nonperturbative physics, since the operational-topic assignments are defined in terms of hadron-level jet features.
If the physics is symmetric under the criterion for splitting each event into $x_1$ and $x_2$, then $p(x_1, x_2) = p(x_2, x_1)$.
In this case, the marginal distributions, $p(x_1)$ and $p(x_2)$, are the same linear combination of operational topics.

\begin{figure}[t]
\centering
\includegraphics[width=0.90\textwidth]{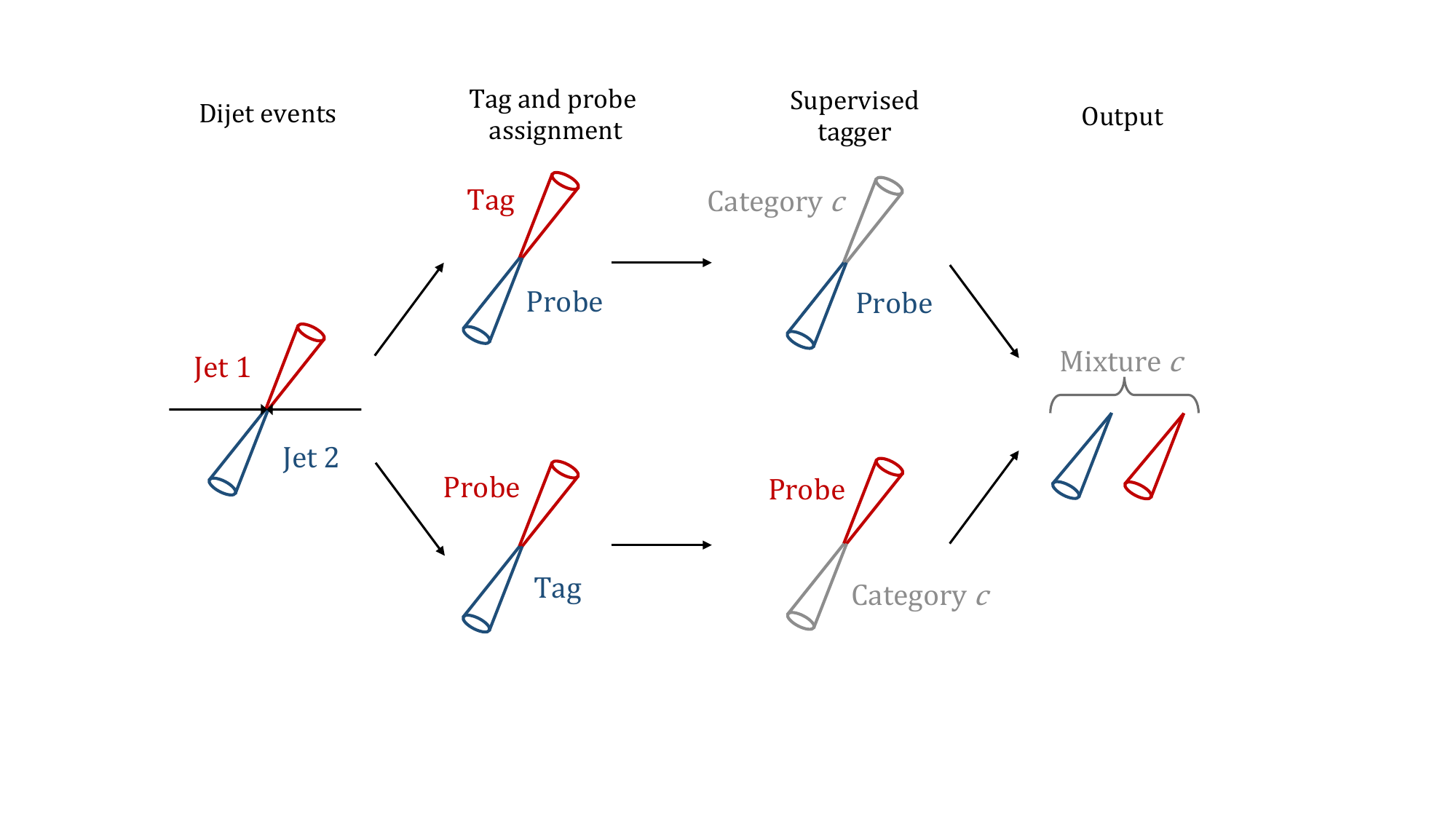}
\caption{\label{fig:tag_and_probe_illustration}
Schematic illustration of our tag and probe method.
Each dijet event is split into a tag jet and a probe jet using the two possible assignments.
For each assignment, the flavor of the tag jet is predicted by a supervised tagger pre-trained on the seven light-flavor jets from \tsc{Pythia}. 
In this example, the supervised tagger predicts the same flavor $c$ for the two jets, but the two flavors can (and often do) differ.
Finally, each probe jet is placed in a mixture according to the flavor of its corresponding tag jet. 
Above, the two jets are placed in the $c$-th mixture, which is then subdivided into a low-$\eta$ bin and a high-$\eta$ bin.
This procedure yields fourteen jet mixtures with varying flavor compositions.
}
\end{figure}

To expose the latent topics, we process the experimental dataset using the following tag-and-probe method, which is similar in spirit to the Tag N' Train technique~\cite{Amram:2020ykb}:
\begin{enumerate}
    \item For each event, we classify the jet with the smaller azimuthal angle as $x_1$, and the other as $x_2$.
    
    \item We process each event of the experimental data using two assignments: one where $x_1$ is designated as the tag jet and $x_2$ is designated as the probe jet, and the other with the roles reversed. 
    
    \item For each assignment, we evaluate the supervised classifier on the tag jet and use the result to label the probe jet.
    After applying both assignments, each jet is conditioned on its partner's predicted softmax output.

    \item We keep jets whose assigned outputs are 0.8 or higher for some flavor.
    After this cut, 25 million jets survive from the 256 million jets in the experimental dataset (i.e.~typically two jets from each of the 128 million dijet events). 
    The surviving jets are grouped according to the maximum-score flavor, resulting in seven mixtures. 
    Each mixture is subdivided into two roughly balanced pseudorapidity bins---a low-$\eta$ bin with $|\eta| \leq 0.75$, and a high-$\eta$ bin with $|\eta| > 0.75$---yielding 14 mixtures in total.
\end{enumerate}
A schematic illustration of these four steps is shown in \Fig{fig:tag_and_probe_illustration}.

From a probability perspective, our tag-and-probe method decomposes the marginal distributions as:
\begin{align}
    (\text{probe } x_1, \text{tag } x_2) \implies  p(x_1 | \tau ) &= \sum_{c, \eta} p(x_1 | c(x_2), \eta, \tau) \, p(\eta | c(x_2), \tau) \,  p(c(x_2)|\tau), \\
    (\text{probe } x_2, \text{tag } x_1) \implies p(x_2|\tau) &= \sum_{c, \eta} \, p(x_2 | c(x_1), \eta, \tau) \, p(\eta | c(x_1), \tau) \, p(c(x_1)|\tau),
\end{align}
where $\tau$ is the event that the tag jet passes the confidence threshold, $c(x)$ is the predicted \tsc{Pythia} flavor for jet $x$, and $\eta$ indexes the pseudorapidity bins.
Here, the conditional distributions $p(x_1|c(x_2), \eta, \tau)$ are the 14 mixtures obtained by one tag-and-probe assignment, whereas the distributions $p(x_2|c(x_1), \eta, \tau)$ are the mixtures from the opposite assignment.
Because the physics is invariant under rotations in the azimuthal angle $\phi$, and the feature space $x$ contains no information about the azimuthal angles used to order the jets, the joint distribution is symmetric: $p(x_1, x_2) = p(x_2, x_1)$.
Together with the fact that we use the same supervised classifier, confidence cut, and $\eta$ binning for both assignments, this implies that $p(x_1 | c(x_2), \eta, \tau)$ and $p(x_2|c(x_1), \eta, \tau)$ are asymptotically identical.
As a result, we can combine them into a single set of 14 mixtures, as described in step 4 above. 
For the sake of clarity, we refer to the 14 pooled mixtures as $p(x|c(x'), \eta, \tau)$ from now on.

With the mixtures $p(x|c(x'), \eta, \tau)$ in hand, we can decompose them into the constituent operational topics:
\begin{equation}
    p(x|c(x'),\eta,\tau) = \sum_t p(x|t, c(x'), \eta, \tau) \, p(t|c(x'), \eta, \tau).
\end{equation}
By the universal factorization in \Eq{eq:universal_topic_factorization}, $p(x|t, c(x'), \eta, \tau) = p(x|t) \equiv p_t(x)$.
Furthermore, if we let $m$ index the mixtures such that $p(x|c(x'), \eta, \tau) \equiv p_m(x)$, then $p(t|c(x'), \eta, \tau) \equiv p(t|m) \equiv F_{mt}$.
Thus, we arrive at the mixture model required for simplex demixing:
\begin{equation}
    p_m(x) = \sum_t F_{mt} \, p_t(x).
\end{equation}
Therefore, by \Thrm{theorem:demixing}, training our simplex demixer on the 14 mixtures recovers the universal operational topics in the asymptotic limit, as long as $F_{mt}$ has full column rank.
Assuming that the \tsc{Pythia} flavors are (at least partially) correlated with the operational topics and that the topic co-occurrence matrix $p(t'|t) \, p(t)$ has nontrivial correlations, the full rank condition should be satisfied with real datasets.
Of course, the stronger the correspondence between \tsc{Pythia} flavors and topics, and the stronger the topic correlations between the two jets in a given dijet event, the more robust the topic extraction will be with finite datasets.

To validate these assumptions on our experimental dataset, we show the compositions of the resulting 14 mixtures in terms of \tsc{Pythia} flavors, $G_{mc}$, in \Tab{tab:light_flavor_mixing_fractions}.
Notably, the relative quark/gluon fractions depend mainly on the $\eta$ bin.
The high-$\eta$ jet samples are quark-enriched, whereas the low-$\eta$ ones are gluon-enriched, matching previous studies \cite{Komiske:2022vxg}.
As one would expect from the valence quarks of the proton, the fractions of up-quark probe jets are approximately twice the fractions of down-quark probe jets across the samples.%
\footnote{Because of differences in tagging performance, the number of tagged up- and down-quark partner jets is not in this ratio.}
The sea quarks ($\bar{d}$, $\bar{u}$, $s$, and $\bar{s}$) make up smaller fractions throughout. 
The $s$ and $\bar{s}$ flavors have stronger correlations because a larger fraction of strange jets come from pair-production processes such as $gg \rightarrow s \bar{s}$ or $q \bar{q} \rightarrow s \bar{s}$. 
Pair-production processes also contribute to the $d/\bar{d}$ and $u/\bar{u}$ samples, but the signal is diluted by other channels involving valence quarks. 
The $\bar{d}$ and $\bar{u}$ flavors have small and near-uniform fractions across samples, so we expect them to be the hardest flavors to recover.

Given these fractions, we anticipate good support for the topics corresponding to $d$, $u$, $s$, $\bar{s}$, and $g$, with marginal support for the $\bar{d}$ and $\bar{u}$ topics.
Because the performance of simplex demixing is ultimately limited by statistics, we focus exclusively on dijets in this study, as they have the highest production cross section at hadron colliders like the LHC.
Nonetheless, other processes like vector-boson-plus-jet production offer a cleaner final state and might be worth studying as well in future work.

Finally, to prepare these mixtures for simplex demixing, we construct training, validation, and test datasets using the same 70\%/20\%/10\% split as in \Sec{subsec:event_generation}.
The training set contains 17.6M jets in total.
In the training and validation sets, each mixture's distribution is normalized using class weights so that the $p_m(x)$ are valid probability densities.
When bootstrapping, the mixtures are renormalized accordingly, and all probe entries derived from the same dijet event are resampled together to prevent data leakage.

\begin{table}[t]
    \centering
    \begin{tabular}{c >{\hspace*{1.0cm}}l c c c c c c c >{\hspace*{0.0cm}}r}
    \toprule
    Mixture
    & \multicolumn{1}{c}{\raisebox{0.2cm}{$\Bigl($}
      \shortstack{partner \\ tag},
      \shortstack{probe \\ $\eta$ bin}
      \raisebox{0.2cm}{$\Bigr)$}}
    & $d$
    & $\bar{d}$
    & $u$
    & $\bar{u}$
    & $s$
    & $\bar{s}$
    & $g$
    & \multicolumn{1}{r}{\shortstack[r]{Jet count \\ ($\times 10^3$)}} \\
    \midrule
    1  & ($d$, low)        & $0.11$ & $0.05$ & $0.19$ & $0.03$ & $0.02$ & $0.02$ & $0.57$ & 728  \\
    2  & ($d$, high)       & $0.15$ & $0.04$ & $0.31$ & $0.02$ & $0.02$ & $0.01$ & $0.45$ & 749  \\
    3  & ($\bar{d}$, low)  & $0.17$ & $0.04$ & $0.15$ & $0.04$ & $0.02$ & $0.02$ & $0.55$ & 137  \\
    4  & ($\bar{d}$, high) & $0.20$ & $0.03$ & $0.29$ & $0.03$ & $0.02$ & $0.02$ & $0.41$ & 155  \\
    5  & ($u$, low)        & $0.13$ & $0.03$ & $0.16$ & $0.04$ & $0.03$ & $0.02$ & $0.59$ & 3181 \\
    6  & ($u$, high)       & $0.15$ & $0.03$ & $0.26$ & $0.03$ & $0.02$ & $0.01$ & $0.50$ & 3216 \\
    7  & ($\bar{u}$, low)  & $0.10$ & $0.05$ & $0.23$ & $0.03$ & $0.02$ & $0.02$ & $0.55$ & 462  \\
    8  & ($\bar{u}$, high) & $0.15$ & $0.03$ & $0.36$ & $0.02$ & $0.01$ & $0.02$ & $0.40$ & 546  \\
    9  & ($s$, low)        & $0.12$ & $0.03$ & $0.19$ & $0.02$ & $0.02$ & $0.09$ & $0.52$ & 1065 \\
    10 & ($s$, high)       & $0.16$ & $0.02$ & $0.34$ & $0.02$ & $0.01$ & $0.07$ & $0.37$ & 1292 \\
    11 & ($\bar{s}$, low)  & $0.11$ & $0.04$ & $0.15$ & $0.03$ & $0.09$ & $0.02$ & $0.55$ & 960  \\
    12 & ($\bar{s}$, high) & $0.15$ & $0.03$ & $0.28$ & $0.03$ & $0.07$ & $0.02$ & $0.42$ & 1088 \\
    13 & ($g$, low)        & $0.12$ & $0.04$ & $0.20$ & $0.03$ & $0.02$ & $0.02$ & $0.56$ & 5093 \\
    14 & ($g$, high)       & $0.17$ & $0.03$ & $0.35$ & $0.02$ & $0.01$ & $0.01$ & $0.40$ & 6289 \\
    \bottomrule
\end{tabular}
    \caption{Mixing fractions for the fourteen jet mixtures obtained using our tag-and-probe method.
    In the (partner tag, probe $\eta$ bin) column, the first entry is the flavor of the tag jet, and the second one is the pseudorapidity bin of the probe jet, either low ($|\eta| \leq 0.75$) or high ($|\eta| > 0.75$).
    Each row gives the flavor fractions for $d$, $\bar{d}$, $u$, $\bar{u}$, $s$, $\bar{s}$, and $g$ jets, rounded to two decimal places.
    The final column contains the number of jets in each mixture, rounded to the nearest thousand.}
    \label{tab:light_flavor_mixing_fractions}
\end{table}

\subsection{Training the fourteen-mixture demixer}

To train the simplex demixer on the fourteen mixtures, we select the same base classifier architecture (PFN) as in \Sec{sec:training_the_three_mixture_demixer}.
For tuning, we fix a training budget of 360 epochs in total, split into 150 epochs for stage one, 150 epochs for stage two, and 60 epochs for stage three.
We determine the optimal L2 strength $\gamma = 0$ (the training set is large enough that no L2 regularization is needed) and $\alpha = 5\times 10^{-5}$ using the same tuning procedure as in \Sec{sec:training_the_three_mixture_demixer}, replacing the 200/200 epoch split with a 150/150 epoch split.

Unlike the toy study, we also train stage three using the L1 loss to sparsify the representation of the classifier distribution. 
The L1 loss hyperparameter $\beta$ is scanned along a lasso path of 10 log-spaced values from 0.02 to 0.05. 
We pick $\beta$ to keep only $T=7$ active vertices, using our domain knowledge that there should be seven light flavors in the samples.
In most resamples, $T=7$ is achieved at $\beta = 0.0245$.
We discuss potential approaches to data-driven vertex-number selection in \Sec{sec:future_directions}.
Using fixed $\gamma$, $\alpha$, and $T$, we rerun the procedure on 34 bootstrap resamples to estimate the variance of the coefficient matrices $\widehat{p}(t|c)$ and $\widehat{p}(c|t)$, following the same procedure as in \Sec{sec:training_the_three_mixture_demixer} up to the epoch split.

\subsection{Identifying the seven operational topics} \label{sec:identifying_the_seven_operational_topics}

\begin{figure}[t]
\centering
\includegraphics[width=0.66\textwidth]{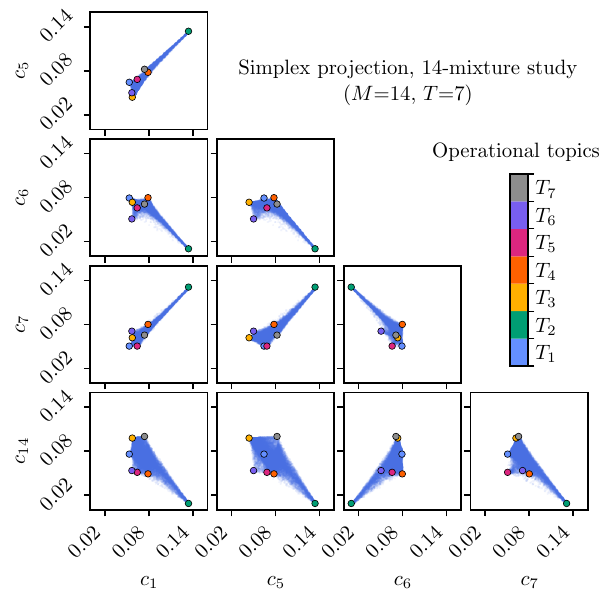}
\caption{\label{fig:fourteen_mixture_distribution}
The 14-mixture classifier distribution (in blue) is projected onto several two-dimensional planes.
Out of the $\binom{14}{2}$ unique projections, we plot only ten for clarity. 
The classifier distribution is bounded by $T=7$ vertices, which are identified by their color according to the bar on the right.
We observe that each vertex is an endpoint of the classifier distribution in at least one projection, so all seven vertices contribute to the distribution's convex hull.
It is important to note that we have restricted the projections to the $[0, 0.16] \times [0, 0.16]$ window, so the support of the classifier distribution is small compared to the bounding probability 13-simplex.
}
\end{figure}

To gain an intuition for the complexity of the demixing process, we first show the resulting classifier distribution for a single bootstrap resample in \Fig{fig:fourteen_mixture_distribution}.
Because the classifier is trained on $14$ samples, the classifier distribution lies inside the probability 13-simplex defined by $c_1 (x) + \ldots + c_{14} (x) = 1$ and $c_m (x) \geq 0$ for all $m$.
For visualization purposes, we project the distribution onto 10 representative planes out of the $\binom{14}{2}$ possible two-dimensional projections.
Within the probability 13-simplex, the individual jets (in blue) fill out a 6-simplex bounded by the seven learned vertices.
For each vertex, there is at least one projection in which that vertex clearly lies at an endpoint of the distribution, demonstrating that all seven vertices contribute to the convex hull.
Remarkably, the seven directions are visible even though the classifier distribution occupies only a small region inside the probability simplex, as shown by the small range of the axes.

\begin{figure}[t]
\centering

\vspace{1ex}

\subfloat[\label{fig:identifiable_fourteen_mixture_results_a}]{%
    \includegraphics[width=0.9\textwidth]{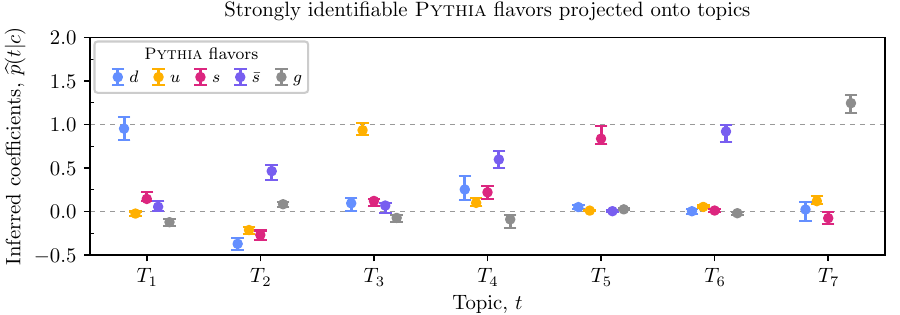}
}
\vspace{1ex}
\subfloat[\label{fig:identifiable_fourteen_mixture_results_b}]{%
    \includegraphics[width=0.9\textwidth]{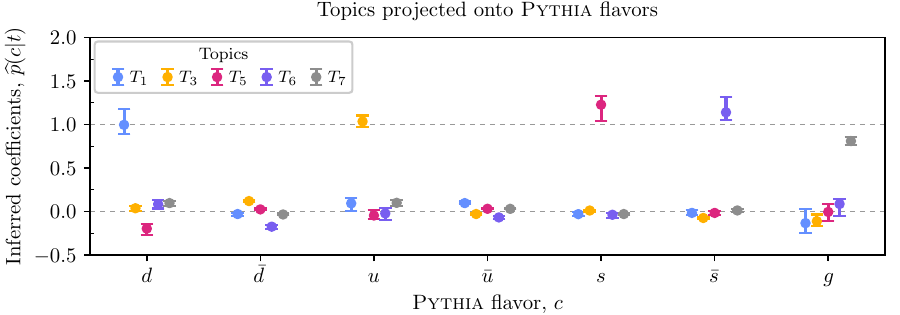}
}

\caption{\label{fig:identifiable_fourteen_mixture_results} 
Inferred (quasi-)probabilities for the strongly identifiable flavors and their associated topics.
(a) The $d$, $u$, $s$, $\bar{s}$, and $g$ rows of the inferred $\widehat{p}(t|c)$ matrix.
The rows exhibit near-one entries in the $T_1$, $T_3$, $T_5$, $T_6$, and $T_7$ topics, respectively, so we associate these topics with the corresponding flavor.
(b) The $T_1$, $T_3$, $T_5$, $T_6$, and $T_7$ rows of the $\widehat{p}(c|t)$ matrix.
As expected, these exhibit near-one entries in the $d$, $u$, $s$, $\bar{s}$, and $g$ flavors, which is an important closure test for strong identifiability.
Together, these panels show that the concepts of $d$, $u$, $s$, $\bar{s}$, and $g$ jets are latent in the mixtures. 
}
\end{figure}

\begin{figure}[t]
\centering

\vspace{1ex}

\subfloat[\label{fig:weak_fourteen_mixture_results_a}]{%
    \includegraphics[width=0.9\textwidth]{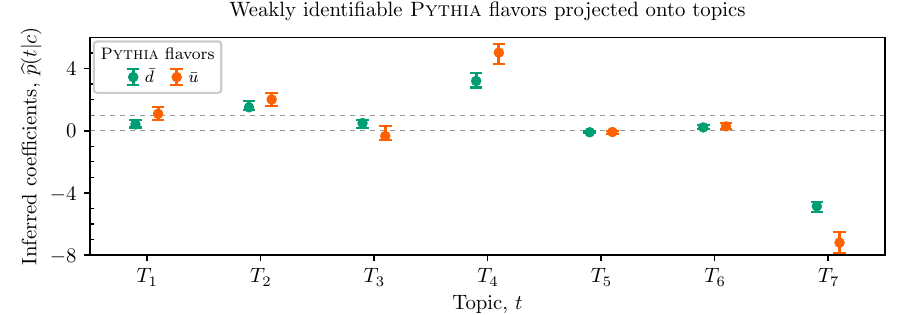}
}
\vspace{1ex}
\subfloat[\label{fig:weak_fourteen_mixture_results_b}]{%
    \includegraphics[width=0.9\textwidth]{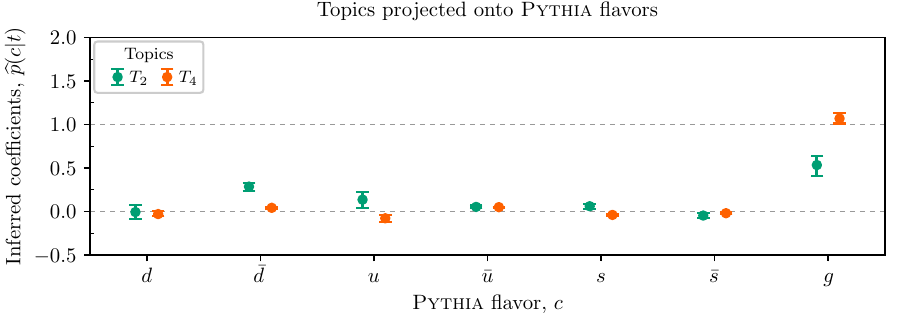}
}

\caption{\label{fig:weak_fourteen_mixture_results}
Inferred (quasi-)probabilities for the weakly identifiable flavors and their associated topics.
(a) The $\bar{d}$ and $\bar{u}$ rows of the inferred $\widehat{p}(t|c)$ matrix.
These two flavors map preferentially onto the $T_2$ and $T_4$ topics, though with substantial bleed through, including negative representation in topic $T_7$.
(b) The $T_2$ and $T_4$ rows of the $\widehat{p}(c|t)$ matrix.
This projection shows that these operational topics are heavily contaminated by gluons.
Together, these panels show that the $\bar{d}$ and $\bar{u}$ flavors are only weakly identified by the operational topics.
}
\end{figure}

In \Fig{fig:identifiable_fourteen_mixture_results}, we show the estimated entries of the $\widehat{p}(t|c)$ and $\widehat{p}(c|t)$ matrices for the strongly identifiable \tsc{Pythia} flavors.
In \Fig{fig:identifiable_fourteen_mixture_results_a}, the probabilities that \tsc{Pythia} $d$, $u$, $s$, $\bar{s}$, and $g$ jets are mapped to the $T_1$, $T_3$, $T_5$, $T_6$, and $T_7$ vertices, respectively, are all close to one.
In \Fig{fig:identifiable_fourteen_mixture_results_b}, the quasi-probabilities that jets belonging to the $T_1$, $T_3$, $T_5$, $T_6$, and $T_7$ topics are \tsc{Pythia} $d$, $u$, $s$, $\bar{s}$, and $g$ jets, respectively, are also close to one.
These two directions establish a robust one-to-one correspondence between the \tsc{Pythia} $d$, $u$, $s$, $\bar{s}$, and $g$ flavors and the $T_1$, $T_3$, $T_5$, $T_6$, and $T_7$ operational topics.
Interestingly, in \Fig{fig:identifiable_fourteen_mixture_results_b}, the gluon-like $T_7$ topic exhibits some small but nonzero quark contamination, as discussed further in the next subsection.

By contrast, \Fig{fig:weak_fourteen_mixture_results} shows the entries of the $\widehat{p}(t|c)$ and $\widehat{p}(c|t)$ for the weakly identifiable \tsc{Pythia} flavors.
In \Fig{fig:weak_fourteen_mixture_results_a}, the \tsc{Pythia} $\bar{d}$ and $\bar{u}$ flavors map preferentially onto the $T_2$ and $T_4$ vertices.
However, according to \Fig{fig:weak_fourteen_mixture_results_b}, there is a high quasi-probability that the jets assigned to the $T_2$ and $T_4$ topics are actually \tsc{Pythia} gluons. 
This implies that the $T_2$ and $T_4$ operational topics are heavily contaminated by gluons, as discussed more below.

Overall, these results show that the notions of $d$, $u$, $s$, $\bar{s}$, and $g$ jets are latent in the mixtures, whereas $\bar{d}$ and $\bar{u}$ are only weakly identifiable. 
In the following subsection, we discuss the weak identifiability of $\bar{d}$ and $\bar{u}$ and the quark contamination in the gluon-like topic.

\subsection{Probabilistic interpretation of the demixing results} \label{sec:a_probabilistic_interpretation_of_the_demixing_results}

The weak identifiability of the $\bar{d}$ and $\bar{u}$ flavors is expected from the small truth-level \tsc{Pythia} fractions shown in \Tab{tab:light_flavor_mixing_fractions}.
When there are too few jets of a given flavor, none of the inferred topics resolves its anchor region in phase space.
As a result, the flavor's distribution can only be reconstructed as a difference of the resolved topics, which appears as negative entries in the noisy estimates of $\widehat{p}(t|c)$ and $\widehat{p}(c|t)$.
Therefore, we attribute the negative entries of the coefficient matrices to insufficient statistics.

The low $\bar{d}$ and $\bar{u}$ statistics can also provide a semi-quantitative understanding of why $T_2$ and $T_4$ are contaminated by gluons.
To see this, we restrict our attention to the subspace spanned by the $\bar{d}$-like vertex $T_2$ and the gluon-like vertex $T_7$.
Ignoring small contributions from the other topics, we write the likelihood ratio between \tsc{Pythia} gluons and \tsc{Pythia} anti-down quarks as 
\begin{equation}
    L_{g\bar{d}} (x) \equiv \frac{p_g(x)}{p_{\bar{d}} (x)} \approx \frac{p(T_7|g) \, p_{T_7}(x) + p(T_2|g) \, p_{T_2}(x)}{p(T_7|\bar{d}) \, p_{T_7}(x) + p(T_2|\bar{d}) \, p_{T_2}(x)} = \frac{p(T_7|g) \, L_{T_7T_2} (x) + p(T_2|g)}{p(T_7|\bar{d}) \, L_{T_7T_2} (x) + p(T_2|\bar{d})},
\end{equation}
where
\begin{equation}
    L_{T_7 T_2} (x) \equiv \frac{p_{T_7}(x)}{p_{T_2} (x)}.
\end{equation}
Since $p(T_7|g) > p(T_2|g)$ and $p(T_2 | \bar{d}) > p(T_7 | \bar{d})$, $L_{g\bar{d}}(x)$ is monotonic in $L_{T_7 T_2} (x)$.
By the mutual irreducibility of the operational topics, we have:
\begin{equation}
    \min_{x} L_{g\bar{d}} (x)  = \frac{p(T_2|g)}{p(T_2|\bar{d})} = \frac{p(g|T_2)}{p(\bar{d}|T_2)} \frac{ p(\bar{d})}{p(g)},
\end{equation}
where we have used Bayes' theorem in the second equality.
If there are not enough instances of \tsc{Pythia} $\bar{d}$ jets in the entire jet sample, then the $\bar{d}$ anchor region is not perfectly resolved and therefore the left-hand side of the above equation is nonzero.
Given that there are many more gluon jets than $\bar{d}$ jets in the mixtures, we have $p(g) \gg p(\bar{d})$ and the ratio $p(g | T_2) / p(\bar{d} | T_2)$ becomes large, implying that the $T_2$ topic is heavily contaminated by gluons.
The $p(g | T_4) / p(\bar{u}|T_4)$ ratio is even larger because $\bar{u}$ jets are rarer and have weaker correlations in the mixtures.
To make the $\bar{d}$ and $\bar{u}$ fully identifiable, we would require either stronger correlations of $\bar{d}$ and $\bar{u}$ in the mixtures or more training data.

We can use a similar argument to explain why the $T_7$ vertex (primarily gluons) is contaminated by quarks of various flavors.
We can write:
\begin{equation}
    L_{qg}(x) \equiv \frac{p_q(x)}{p_g(x)} = \frac{p(T_q|q) \, p_{T_q}(x) + p(T_7|q) \, p_{T_7}(x)}{p(T_q|g) \, p_{T_q}(x) + p(T_7|g)\, p_{T_7}(x)} = \frac{p(T_q|q) \, L_{T_qT_7}(x) + p(T_7|q) }{p(T_q|g) \, L_{T_qT_7} (x) + p(T_7|g)},
\end{equation}
where $p_q(x)$ is the distribution of quarks in the mixtures and $p_{T_q}(x)$ is a (suitably weighted) combination of quark-like topic distributions.
We have $p(T_q|q) > p(T_7|q)$ and $p(T_7 | g) > p(T_q|g)$, so $L_{qg}(x)$ is monotonic in $L_{T_q T_7} (x)$, and thus:
\begin{equation}
    \kappa_{qg} = \frac{p(T_7|q)}{p(T_7|g)} = \frac{p(q|T_7)}{p(g|T_7)} \frac{p(g)}{p(q)}.
\end{equation}
As discussed in \Sec{sec:identifying_the_three_operational_topics}, the intrinsic reducibility factor $\kappa_{qg}$ is nonzero for \tsc{Pythia} jets~\cite{Larkoski:2019nwj}, so the $p(q|T_7)/p(g|T_7)$ ratio is nonzero and, as a consequence, the gluon-like operational topic necessarily contains a nonzero quark component.
This contamination, unlike the gluon contamination in $T_2$ and $T_4$, does not decrease as one increases the size of the dataset because it is caused by a systematic mismatch between the definitions of \tsc{Pythia} flavors and operational topics.

\subsection{Extracting substructure distributions}
\label{subsec:substructure_distributions}

Having established the relationship between \tsc{Pythia} flavors and operational topics in \Secs{sec:identifying_the_seven_operational_topics}{sec:a_probabilistic_interpretation_of_the_demixing_results}, we now extract observable distributions for each operational topic and compare them to those of the corresponding \tsc{Pythia} flavors.
To solve for the distributions of operational topics, we invert \Eq{eq:general_mixture_decomposition} in a least-squares sense:
\begin{equation} \label{eq:invert_general_mixture_decomposition}
    p_t(x) = \sum_{m=1}^M (F^+)_{tm} \, p_m(x),
\end{equation}
where $(F^+)_{tm}$ denotes the pseudoinverse of $F_{mt}$, and $p_m(x)$ are the measured mixture distributions.

In our study, $x$ is the full representation of a jet in terms of its variable-length set of particle constituents.  
Because the mixing fractions are obtained from this highly expressive representation, they are more general than those obtained from any individual observable.
We can therefore use the same fractions to reconstruct the operational topic distributions over \emph{any} observable $\mathcal{O}$.
To do so, we simply apply the inversion in \Eq{eq:invert_general_mixture_decomposition} to the mixture distributions over $\mathcal{O}$, $p_m(\mathcal{O})$:
\begin{equation} \label{eq:invert_general_mixture_decomposition_observable}
    p_t(\mathcal{O}) = \sum_{m=1}^M (F^+)_{tm} \, p_m (\mathcal{O}).
\end{equation}

To validate this approach, we consider two representative observables for quark/gluon discrimination: constituent multiplicity $N_\text{const}$ and 2-subjettiness $\tau_2^{(\beta=1)}$~\cite{Thaler:2010tr, Thaler:2011gf}.
We also consider an additional observable sensitive to light quark flavor: the jet charge $\mathcal{Q}_{\kappa=0.5}$ \cite{Krohn:2012fg, EuropeanMuon:1984xji}. 
We compute constituent multiplicity using a custom implementation,  2-subjettiness using the \tsc{Nsubjettiness} 2.3.2 module \cite{fastjet_contrib}, and jet charge using the definition provided in \Reference{Krohn:2012fg}.

For each observable, we obtain the binned distributions of each mixture and bootstrap resample of the training dataset.
Then, using \Eq{eq:invert_general_mixture_decomposition_observable}, we apply the inversion bin by bin to the histogrammed mixture distributions.
Finally, we report the per-bin medians of the $p_t(\mathcal{O})$ distributions across bootstrap iterations together with the per-bin 15\%--85\% interval as a measure of statistical uncertainty:
\begin{equation}
    \big[ Q_{0.15}\left( p_t(\mathcal{O}) \right), Q_{0.85} \left( p_t(\mathcal{O}) \right) \big].
\end{equation}

\begin{figure}[p]
\centering

\subfloat[\label{fig:substructure_distributions_a}]{%
    \includegraphics[width=0.48\textwidth]{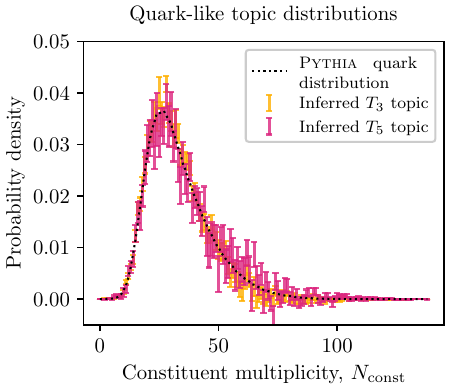}
}
\hfill
\subfloat[\label{fig:substructure_distributions_b}]{%
    \includegraphics[width=0.48\textwidth]{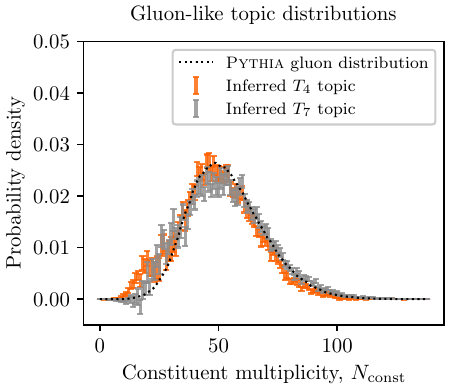}
}

\vfill

\subfloat[\label{fig:substructure_distributions_c}]{%
    \includegraphics[width=0.48\textwidth]{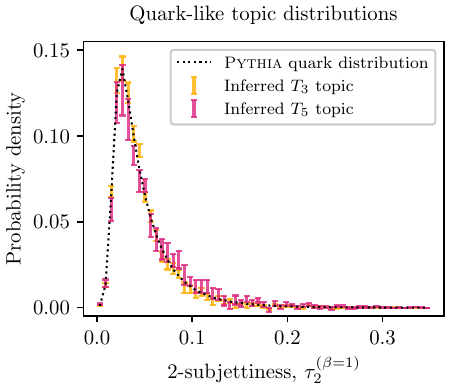}
}
\hfill
\subfloat[\label{fig:substructure_distributions_d}]{%
    \includegraphics[width=0.48\textwidth]{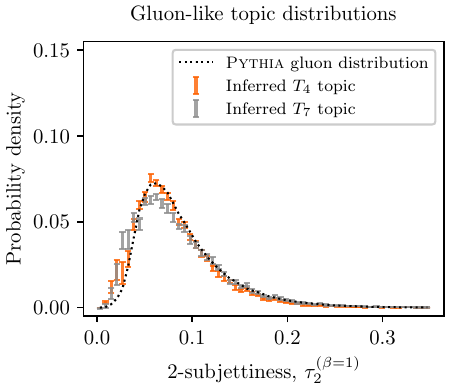}
}

\vfill

\subfloat[\label{fig:substructure_distributions_e}]{%
    \includegraphics[width=0.48\textwidth]{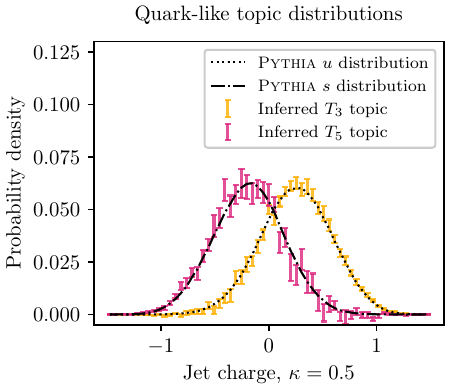}
}
\hfill
\subfloat[\label{fig:substructure_distributions_f}]{%
    \includegraphics[width=0.48\textwidth]{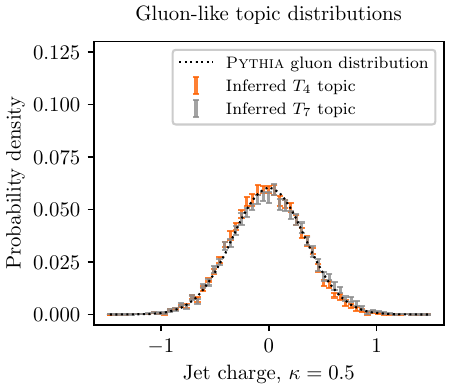}
}

\caption{\label{fig:substructure_distributions}
Distributions of (top row) constituent multiplicity $N_\text{const}$, (middle row) 2-subjettiness $\tau_2^{(\beta=1)}$, and (bottom row) jet charge $\mathcal{Q}_{\kappa=0.5}$, comparing the distributions of \tsc{Pythia} flavors to those of the extracted operational topics.
(Left column) The quark-like $T_3$ and $T_5$ topics are in good agreement with the \tsc{Pythia} $u$ and $s$ flavors, respectively.
(Right column) The gluon-like $T_4$ and $T_7$ topics are in good agreement with \tsc{Pythia} gluon distributions, despite $T_4$'s alignment with the \tsc{Pythia} $\bar{u}$ flavor.
The remaining three topics are shown in \Fig{fig:substructure_distributions_backup}.
}
\end{figure}

In \Fig{fig:substructure_distributions}, we show the inferred topic distributions over the three substructure observables defined above:  $N_\text{const}$, $\tau_2^{(\beta=1)}$, and $\mathcal{Q}_{\kappa=0.5}$. 
For the sake of readability, we only show two quark-like topics, $T_3$ (\tsc{Pythia} up) and $T_5$ (\tsc{Pythia} strange), and two gluon-like topics, $T_4$ (\tsc{Pythia} anti-up) and $T_7$ (\tsc{Pythia} gluon); the remaining topic distributions are plotted and discussed in \Appx{sec:substructure_distributions_for_the_remaining_topics}.
We emphasize that $T_4$, despite acting as a sink for $\bar{u}$ jets, yields a gluon-like distribution because it is heavily contaminated by gluons, as discussed in \Sec{sec:identifying_the_seven_operational_topics} and shown in \Fig{fig:weak_fourteen_mixture_results_b}.

All three rows of \Fig{fig:substructure_distributions} show good agreement with the truth-level \tsc{Pythia} distributions.
In \Fig{fig:substructure_distributions_d}, the $T_7$ 2-subjettiness distribution exhibits some broadening to the left of the peak, consistent with the quark contamination of the gluon topic shown in \Fig{fig:identifiable_fourteen_mixture_results_b}.
This ``broadening'' feature is also visible, albeit to a lesser degree, in \Fig{fig:substructure_distributions_b}.
The jet charge distributions in \Fig{fig:substructure_distributions_e} show a striking separation between the $T_3$ and $T_5$ topics, matching the positive and negative charges of the $u$ and $s$ quarks, respectively.
In contrast, the gluon-like jet charge distributions in \Fig{fig:substructure_distributions_f} are centered around zero, consistent with the gluons having zero electric charge.
Overall, these results demonstrate the potential of simplex demixing for performing data-driven extractions of multiple jet flavor substructure distributions at the LHC.

\subsection{Future directions} \label{sec:future_directions}

Despite the success of simplex demixing on our proof-of-concept physics application, a number of issues deserve further study before deployment at the LHC.
We briefly mention three issues related to choosing the number of topics, increasing the effective sample size, and handling realistic detector capabilities, which we plan to study in future work.

In terms of choosing the number of topics, we set $T=7$ vertices for this study based on our prior knowledge that there are seven light flavors.
For realistic experimental applications, where the number of topics may not be known \textit{a priori}, it would be useful to incorporate simplex demixing into a domain-agnostic procedure for choosing $T$.
In general, choosing the number of features or clusters in a dataset is a challenging problem and an active area of research.
Common approaches for determining $T$ in other methods include the scree plot, or elbow method, which selects $T$ by identifying a kink in the residual-versus-$T$ curve \cite{eugster2009spider}. 
Information-theoretic metrics formalize this by balancing goodness of fit against model complexity when selecting the optimal $T$ \cite{schwarz1978estimating, suleman2017validation}.
Cross-validation can assess model performance on held-out validation data \cite{10.1214/08-AOAS227}, while clustering methods often estimate $T$ using the average silhouette score or the gap statistic \cite{ROUSSEEUW198753, tibshirani2001gap}.
Finally, stability metrics select the relevant features based on their reproducibility under bootstrap resampling or subsampling \cite{yu2019bootstrapping, meinshausen2010stability, 2834535}. 
We believe that \References{meinshausen2010stability, 2834535} are particularly well-suited to our simplex demixing architecture, which use the lasso to perform feature selection.

In terms of the effective sample size, we based this study on the projected integrated luminosity of the HL-LHC. 
If we instead restricted ourselves to Run 3 statistics, we would expect around 15 million dijet events satisfying the stringent cuts described in \Sec{subsec:tag_and_probe} \cite{CMS:2023fix, ATLAS:2017ble}.
We found that running simplex demixing on these smaller datasets increased the finite-statistics artifacts.
To increase the effective statistics, one could of course move the jet selection cut to a lower $p_T$ range, at the expense of more soft contamination in jets.
Alternatively, instead of discarding events by imposing a hard 0.8 threshold on the supervised classifier output, it would be interesting to find a way to augment our method to also include probe jets with less certain tag-label assignments.

Finally, in terms of detector capabilities, our results demonstrate what would be possible with perfect detector information, including $\pi/K/p$ discrimination.
In reality, the ATLAS and CMS detectors have limited $\pi/K$ separation capabilities, especially at high $p_T$.
Without $\pi/K$ separation, the \tsc{Pythia} light quark flavors become less distinguishable, and some operational topics may lose their resolvable anchor regions in phase space.
For example, $d$ versus $s$ discrimination relies heavily on the larger ratio of charged kaons to charged pions in $s$-quark jets \cite{Nakai:2020kuu}. 
The momentum-weighted fraction of neutral kaons (in particular $K_S \to \pi^+ \pi^-$) still provides a means to distinguish $d$ from $s$ but is insufficient on its own to produce anchor regions in phase space. 
Additional experimental information is available at ALICE and LHCb, or at ATLAS and CMS at low $p_T$ \cite{ALICE:2020nkc, Calabrese:2022eju, CMS:2017eoq}.
Along with studies of strangeness, it would be interesting to incorporate heavy-flavor jets into the simplex demixing framework.

\section{Conclusions}
\label{sec:conclusions}

In this paper, we proposed a data-driven strategy for extracting multiple light-flavor categories from jets at the LHC.
The key to our approach is a novel simplex demixing framework, which extends the operational definition of quark and gluon jets to any number of jet flavors $T$ in any number of jet samples $M$, provided that $T \leq M$. 
In the asymptotic limit, we proved that a classifier trained on the $M$ jet samples outputs points within a $(T-1)$-simplex in classifier space, whose $T$ vertices correspond to the latent jet flavors.
The positions of these vertices allow us to solve for the mixing fractions of the operational topics in the samples, and thereby extract the distributional properties of light-flavor jets.

With this proof in hand, we designed a practical three-stage machine-learning procedure to demix the samples, compatible with any classifier architecture and robust to finite statistics. 
In the first stage, the classifier learns a good representation of the data.
In the second stage, an additional edge loss is imposed to corral the bounding simplex of the learned distribution. 
In the third stage, an additional L1 loss is imposed to remove unnecessary vertices and retain only those corresponding to the operational topics.
Given a benchmark of known categories, the final result can be interpreted as a (quasi-)probability that a jet of a known category is assigned to a particular topic, and vice versa.

After validating simplex demixing on toy datasets with down, up, and gluon jets in varying proportions, we applied it to dijet events from \tsc{Pythia}.
This proof-of-concept study involved realistic QCD correlations and HL-LHC statistics, though it (unrealistically) assumed access to perfect detector information.
Using a tag-and-probe strategy, we translated a supervised jet tagger into an unsupervised approach for topic modeling.
We found that the learned 14-category classifier distribution was bounded by a 6-simplex, consistent with having seven light-flavor jets.
The down, up, strange, anti-strange, and gluon flavors were strongly identifiable in the data.
The anti-down and anti-up flavors were also identifiable, but only weakly, due to their weak fraction correlations in the dataset and contamination from gluon jets.

Looking to the LHC, we hope simplex demixing can help facilitate data-driven extractions of light-flavor jet categories, together with their individual substructure-observable distributions.
Tagging individual light-flavor jets is notoriously difficult, even with fully supervised techniques.
Nevertheless, our results demonstrate that flavor categories can emerge at the ensemble level, without needing to rely on an ambiguous parton-level jet flavor definition.
To better understand what features drive the mutual irreducibility between light-quark flavors, it would be interesting to examine the per-particle $\Phi$ layer learned by the inner PFN.
In addition, it would be interesting to study the sample dependence of jet flavor by comparing the operational jet flavors extracted from different jet samples.

Beyond particle physics, simplex demixing provides a practical way to extract mutually irreducible topics latent in mixed datasets for broader machine-learning and data-science applications. 
Whereas most approaches to topic modeling focus on demixing probability distributions over discrete feature spaces, our approach works for continuous and complex spaces, including for variable-dimensional point clouds. 
More philosophically, simplex demixing offers an interesting interpretation of classification with mixed data.
A classifier trained on mixed samples is simply a ``rescaled'' version of a classifier trained on pure samples. 
By choosing working points near the vertices of the learned simplex, rather than the vertices of the probability simplex, we can still assign data points to underlying pure categories.
In this context, it would be interesting to test whether simplex regularization could improve classification performance on datasets with label noise.

\section*{Code Availability}

Our simplex demixing implementation can be found at \Reference{SimplexDemixingCode}, and the three-mixture datasets are at \Reference{SimplexDemixingDatasets}.  

\acknowledgments

We thank Sean Benevedes for helpful discussions, and Patrick Komiske and Eric Metodiev for early brainstorming on multi-category jet topics and subsequent feedback.
We thank Johann Ioannou-Nikolaides and Rapha\"{e}l Bonnet-Guerrini for coordinating the submission of our manuscripts.
This work was supported by the U.S.\ National Science Foundation (NSF) under Cooperative Agreement PHY-2019786 (The NSF AI Institute
for Artificial Intelligence and Fundamental Interactions,
\url{http://iaifi.org/}), by the U.S.\ Department of
Energy (DOE) Office of High Energy Physics under grant
number DE-SC0012567, and by the Simons Foundation through Investigator grant 929241.
JT also thanks the Institut des Hautes \'Etudes Scientifiques (IHES) and the Institut de Physique Th\'eorique (IPhT) for providing an inspiring sabbatical environment to carry out this research.
The computations in this paper were performed using resources from the FASRC Cannon cluster, supported by the FAS Division of Science Research Computing Group at Harvard University, and the SubMIT system \cite{Acosta:2025kgk}, built by the MIT Physics Department.
Claude Opus 4.8 was used for helpful discussions that improved the presentation of several arguments, to streamline the implementation of bootstrap resampling, and for proof reading; the authors take full responsibility for the content of the manuscript.

\appendix

\begin{figure}[p]
\centering

\subfloat[\label{fig:substructure_distributions_backup_a}]{%
    \includegraphics[width=0.48\textwidth]{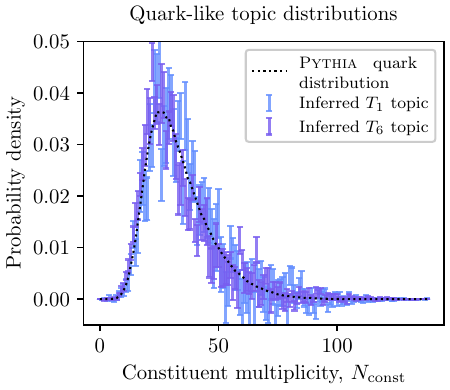}
}
\hfill
\subfloat[\label{fig:substructure_distributions_backup_b}]{%
    \includegraphics[width=0.48\textwidth]{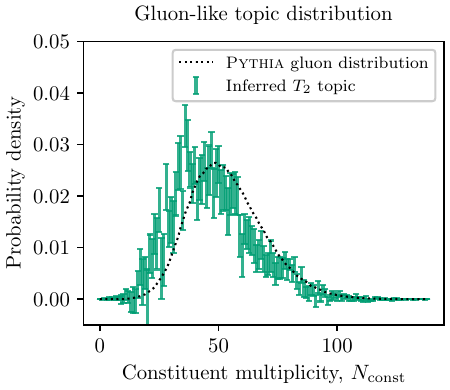}
}

\vfill

\subfloat[\label{fig:substructure_distributions_backup_c}]{%
    \includegraphics[width=0.48\textwidth]{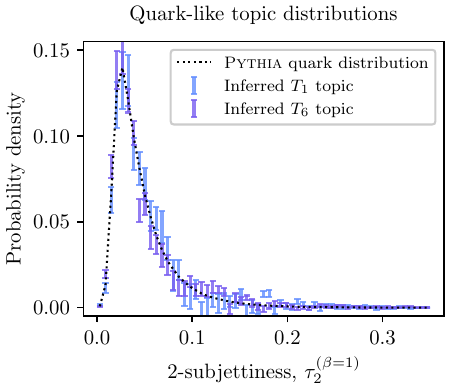}
}
\hfill
\subfloat[\label{fig:substructure_distributions_backup_d}]{%
    \includegraphics[width=0.48\textwidth]{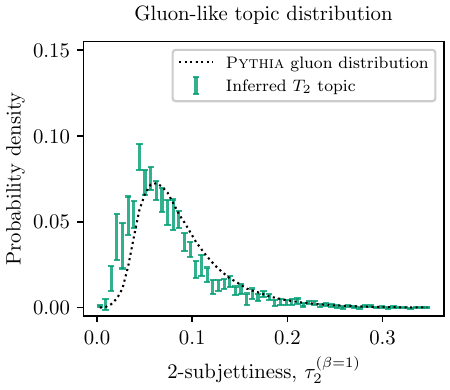}
}

\vfill

\subfloat[\label{fig:substructure_distributions_backup_e}]{%
    \includegraphics[width=0.48\textwidth]{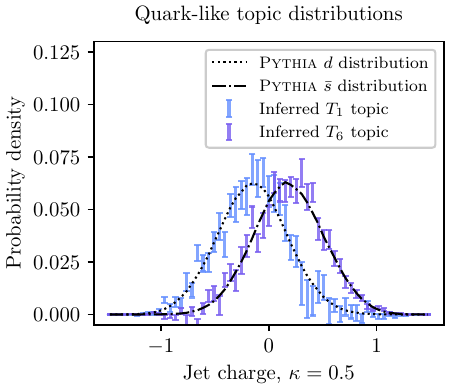}
}
\hfill
\subfloat[\label{fig:substructure_distributions_backup_f}]{%
    \includegraphics[width=0.48\textwidth]{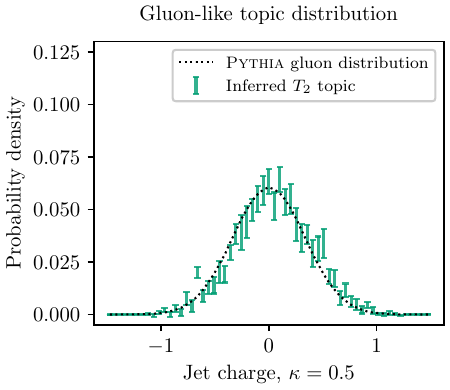}
}

\caption{\label{fig:substructure_distributions_backup}
The same as \Fig{fig:substructure_distributions}, but for the remaining three topics.
(Left column) The quark-like topics $T_1$ and $T_6$, whose distributions are in good agreement with the \tsc{Pythia} $d$ and $\bar{s}$ flavors, respectively.
(Right column) The gluon-like $T_2$ topic, which agrees with the \tsc{Pythia} gluon distributions despite its alignment with the \tsc{Pythia} $\bar{d}$ flavor.
}
\end{figure}

\section{Substructure distributions for the remaining topics} \label{sec:substructure_distributions_for_the_remaining_topics}

In this appendix, we show the jet observable distributions for the remaining topics not already shown in \Sec{subsec:substructure_distributions}.
In \Fig{fig:substructure_distributions_backup}, we show the $T_1$ (\tsc{Pythia} down), $T_2$ (\tsc{Pythia} anti-down), and $T_6$ (\tsc{Pythia} anti-strange) distributions, compared to their \tsc{Pythia} counterparts. 
The features are broadly similar to those of \Fig{fig:substructure_distributions}.
The $T_1$ and $T_6$ operational topics have quark-like shapes in both constituent multiplicity and 2-subjettiness. 
$T_1$ exhibits a negative jet charge, while $T_6$ exhibits a positive jet charge, matching the negative charge of the down quark and the positive charge of the anti-strange quark.
By contrast, $T_2$ has a gluon-like shape in all three observables. 
Much like $T_4$, which receives $\bar{u}$ jets but is contaminated by gluons, $T_2$ receives $\bar{d}$ jets but suffers gluon contamination, though to a lesser degree. 
The $\widehat{p}(\bar{d}|T_2)$ coefficient, shown in \Fig{fig:weak_fourteen_mixture_results_b}, hovers around 0.3, which is consistent with the slight quark-like distortions of the $T_2$ distributions in \Figs{fig:substructure_distributions_backup_b}{fig:substructure_distributions_backup_d}.

\bibliographystyle{JHEP}
\bibliography{biblio}

@article{JohannRaphael,
   author = "Bonnet-Guerrini, Raphaël and Ioannou-Nikolaides, Johann and Petersen, Troels and Piuri,Vincenzo",
   title = "Multiclass Classification without Labels via Posterior Simplex Geometry",
   eprint = "to appear"
}

@article{Stewart:2022ari,
    author = "Stewart, Iain W. and Yao, Xiaojun",
    title = "{Pure quark and gluon observables in collinear drop}",
    eprint = "2203.14980",
    archivePrefix = "arXiv",
    primaryClass = "hep-ph",
    reportNumber = "MIT-CTP 5365",
    doi = "10.1007/JHEP09(2022)120",
    journal = "JHEP",
    volume = "09",
    pages = "120",
    year = "2022"
}

@article{Srivastava:2014kpo,
    author = "Srivastava, Nitish and Hinton, Geoffrey and Krizhevsky, Alex and Sutskever, Ilya and Salakhutdinov, Ruslan",
    title = "{Dropout: A Simple Way to Prevent Neural Networks from Overfitting}",
    journal = "J. Machine Learning Res.",
    volume = "15",
    pages = "1929--1958",
    year = "2014"
}

@article{Ioffe:2015ovl,
    author = "Ioffe, Sergey and Szegedy, Christian",
    title = "{Batch Normalization: Accelerating Deep Network Training by Reducing  Internal Covariate Shift}",
    eprint = "1502.03167",
    archivePrefix = "arXiv",
    primaryClass = "cs.LG",
    month = "2",
    year = "2015"
}

@article{Nilles:1980ys,
    author = "Nilles, Hans Peter and Streng, K. H.",
    title = "{Quark - Gluon Separation in Three Jet Events}",
    reportNumber = "SLAC-PUB-2586",
    doi = "10.1103/PhysRevD.23.1944",
    journal = "Phys. Rev. D",
    volume = "23",
    pages = "1944",
    year = "1981"
}

@ARTICLE{6200362,
  author={Bioucas-Dias, José M. and Plaza, Antonio and Dobigeon, Nicolas and Parente, Mario and Du, Qian and Gader, Paul and Chanussot, Jocelyn},
  journal={IEEE Journal of Selected Topics in Applied Earth Observations and Remote Sensing}, 
  title={Hyperspectral Unmixing Overview: Geometrical, Statistical, and Sparse Regression-Based Approaches}, 
  year={2012},
  volume={5},
  number={2},
  pages={354-379},
  doi={10.1109/JSTARS.2012.2194696}}

@article{Neyman:1933wgr,
    author = "Neyman, Jerzy and Pearson, Egon Sharpe",
    title = "{On the Problem of the Most Efficient Tests of Statistical Hypotheses}",
    doi = "10.1098/rsta.1933.0009",
    journal = "Phil. Trans. Roy. Soc. Lond. A",
    volume = "231",
    number = "694-706",
    pages = "289--337",
    year = "1933"
}

@article{Gras:2017jty,
    author = {Gras, Philippe and H{\"o}che, Stefan and Kar, Deepak and Larkoski, Andrew and L{\"o}nnblad, Leif and Pl{\"a}tzer, Simon and Si{\'o}dmok, Andrzej and Skands, Peter and Soyez, Gregory and Thaler, Jesse},
    title = "{Systematics of quark/gluon tagging}",
    eprint = "1704.03878",
    archivePrefix = "arXiv",
    primaryClass = "hep-ph",
    reportNumber = "MIT-CTP-4885, COEPP-MN-17-2, MCNET-17-04",
    doi = "10.1007/JHEP07(2017)091",
    journal = "JHEP",
    volume = "07",
    pages = "091",
    year = "2017"
}

@article{Komiske:2018vkc,
    author = "Komiske, Patrick T. and Metodiev, Eric M. and Thaler, Jesse",
    title = "{An operational definition of quark and gluon jets}",
    eprint = "1809.01140",
    archivePrefix = "arXiv",
    primaryClass = "hep-ph",
    reportNumber = "MIT-CTP 5042",
    doi = "10.1007/JHEP11(2018)059",
    journal = "JHEP",
    volume = "11",
    pages = "059",
    year = "2018"
}

@article{Metodiev:2017vrx,
    author = "Metodiev, Eric M. and Nachman, Benjamin and Thaler, Jesse",
    title = "{Classification without labels: Learning from mixed samples in high energy physics}",
    eprint = "1708.02949",
    archivePrefix = "arXiv",
    primaryClass = "hep-ph",
    reportNumber = "MIT--CTP-4922",
    doi = "10.1007/JHEP10(2017)174",
    journal = "JHEP",
    volume = "10",
    pages = "174",
    year = "2017"
}

@article{Metodiev:2018ftz,
    author = "Metodiev, Eric M. and Thaler, Jesse",
    title = "{Jet Topics: Disentangling Quarks and Gluons at Colliders}",
    eprint = "1802.00008",
    archivePrefix = "arXiv",
    primaryClass = "hep-ph",
    reportNumber = "MIT-CTP-4979",
    doi = "10.1103/PhysRevLett.120.241602",
    journal = "Phys. Rev. Lett.",
    volume = "120",
    number = "24",
    pages = "241602",
    year = "2018"
}

@article{Komiske:2022vxg,
    author = "Komiske, Patrick T. and Kryhin, Serhii and Thaler, Jesse",
    title = "{Disentangling quarks and gluons in CMS open data}",
    eprint = "2205.04459",
    archivePrefix = "arXiv",
    primaryClass = "hep-ph",
    reportNumber = "MIT-CTP 5422",
    doi = "10.1103/PhysRevD.106.094021",
    journal = "Phys. Rev. D",
    volume = "106",
    number = "9",
    pages = "094021",
    year = "2022"
}

@article{Frye:2016okc,
    author = "Frye, Christopher and Larkoski, Andrew J. and Schwartz, Matthew D. and Yan, Kai",
    title = "{Precision physics with pile-up insensitive observables}",
    eprint = "1603.06375",
    archivePrefix = "arXiv",
    primaryClass = "hep-ph",
    month = "3",
    year = "2016"
}

@article{Frye:2016aiz,
    author = "Frye, Christopher and Larkoski, Andrew J. and Schwartz, Matthew D. and Yan, Kai",
    title = "{Factorization for groomed jet substructure beyond the next-to-leading logarithm}",
    eprint = "1603.09338",
    archivePrefix = "arXiv",
    primaryClass = "hep-ph",
    doi = "10.1007/JHEP07(2016)064",
    journal = "JHEP",
    volume = "07",
    pages = "064",
    year = "2016"
}

@article{Thaler:2010tr,
    author = "Thaler, Jesse and Van Tilburg, Ken",
    title = "{Identifying Boosted Objects with N-subjettiness}",
    eprint = "1011.2268",
    archivePrefix = "arXiv",
    primaryClass = "hep-ph",
    reportNumber = "MIT-CTP-4191",
    doi = "10.1007/JHEP03(2011)015",
    journal = "JHEP",
    volume = "03",
    pages = "015",
    year = "2011"
}

@article{Komiske:2018cqr,
    author = "Komiske, Patrick T. and Metodiev, Eric M. and Thaler, Jesse",
    title = "{Energy Flow Networks: Deep Sets for Particle Jets}",
    eprint = "1810.05165",
    archivePrefix = "arXiv",
    primaryClass = "hep-ph",
    reportNumber = "MIT-CTP 5064",
    doi = "10.1007/JHEP01(2019)121",
    journal = "JHEP",
    volume = "01",
    pages = "121",
    year = "2019"
}

@article{blei2003latent,
author = {Blei, David M. and Ng, Andrew Y. and Jordan, Michael I.},
title = {Latent dirichlet allocation},
year = {2003},
issue_date = {3/1/2003},
publisher = {JMLR.org},
volume = {3},
number = {null},
issn = {1532-4435},
journal = {J. Mach. Learn. Res.},
month = mar,
pages = {993–1022},
numpages = {30}
}

@inproceedings{arora2012learning,
author = {Arora, Sanjeev and Ge, Rong and Moitra, Ankur},
title = {Learning Topic Models -- Going beyond SVD},
year = {2012},
isbn = {9780769548746},
publisher = {IEEE Computer Society},
address = {USA},
url = {https://doi.org/10.1109/FOCS.2012.49},
doi = {10.1109/FOCS.2012.49},
booktitle = {Proceedings of the 2012 IEEE 53rd Annual Symposium on Foundations of Computer Science},
pages = {1–10},
numpages = {10},
series = {FOCS '12}
}

@InProceedings{pmlr-v28-arora13,
  title = 	 {A Practical Algorithm for Topic Modeling with Provable Guarantees},
  author = 	 {Arora, Sanjeev and Ge, Rong and Halpern, Yonatan and Mimno, David and Moitra, Ankur and Sontag, David and Wu, Yichen and Zhu, Michael},
  booktitle = 	 {Proceedings of the 30th International Conference on Machine Learning},
  pages = 	 {280--288},
  year = 	 {2013},
  editor = 	 {Dasgupta, Sanjoy and McAllester, David},
  volume = 	 {28},
  series = 	 {Proceedings of Machine Learning Research},
  address = 	 {Atlanta, Georgia, USA},
  month = 	 {17--19 Jun},
  publisher =    {PMLR},
  url = 	 {https://proceedings.mlr.press/v28/arora13.html}
}

@misc{EnergyFlow,
    author  = {Komiske, Patrick T. and Metodiev, Eric M. and Thaler, Jesse},
    title   = {{EnergyFlow}},
    howpublished = {\url{https://energyflow.network/}},
    url     = {https://github.com/thaler-lab/EnergyFlow},
    note    = {Python package for particle physics tools},
}

@article{3322706.3361982,
author = {Katz-Samuels, Julian and Blanchard, Gilles and Scott, Clayton},
title = {Decontamination of mutual contamination models},
year = {2019},
issue_date = {January 2019},
publisher = {JMLR.org},
volume = {20},
number = {1},
issn = {1532-4435},
journal = {J. Mach. Learn. Res.},
month = jan,
pages = {1521–1577},
numpages = {57}
}

@InProceedings{pmlr-v30-Scott13,
  title = 	 {Classification with Asymmetric Label Noise: Consistency and Maximal Denoising},
  author = 	 {Scott, Clayton and Blanchard, Gilles and Handy, Gregory},
  booktitle = 	 {Proceedings of the 26th Annual Conference on Learning Theory},
  pages = 	 {489--511},
  year = 	 {2013},
  editor = 	 {Shalev-Shwartz, Shai and Steinwart, Ingo},
  volume = 	 {30},
  series = 	 {Proceedings of Machine Learning Research},
  address = 	 {Princeton, NJ, USA},
  month = 	 {12--14 Jun},
  publisher =    {PMLR},
  url = 	 {https://proceedings.mlr.press/v30/Scott13.html}
}

@article{schwarz1978estimating,
 ISSN = {00905364, 21688966},
 URL = {http://www.jstor.org/stable/2958889},
 author = {Gideon Schwarz},
 journal = {The Annals of Statistics},
 number = {2},
 pages = {461--464},
 publisher = {Institute of Mathematical Statistics},
 title = {Estimating the Dimension of a Model},
 urldate = {2026-06-16},
 volume = {6},
 year = {1978}
}

@ARTICLE{1411995,
  author={Nascimento, J.M.P. and Dias, J.M.B.},
  journal={IEEE Transactions on Geoscience and Remote Sensing}, 
  title={Vertex component analysis: a fast algorithm to unmix hyperspectral data}, 
  year={2005},
  volume={43},
  number={4},
  pages={898-910},
  doi={10.1109/TGRS.2005.844293}}

@article{eugster2009spider,
  title={From spider-man to hero—archetypal analysis in R},
  author={Eugster, Manuel JA and Leisch, Friedrich},
  journal={Journal of Statistical Software},
  volume={30},
  pages={1--23},
  year={2009}
}

@article{cutler1994archetypal,
 ISSN = {00401706},
 URL = {http://www.jstor.org/stable/1269949},
 author = {Adele Cutler and Leo Breiman},
 journal = {Technometrics},
 number = {4},
 pages = {338--347},
 publisher = {[Taylor & Francis, Ltd., American Statistical Association, American Society for Quality]},
 title = {Archetypal Analysis},
 urldate = {2026-06-16},
 volume = {36},
 year = {1994}
}

@inproceedings{suleman2017validation,
  title={Validation of archetypal analysis},
  author={Suleman, Abdul},
  booktitle={2017 IEEE International Conference on Fuzzy Systems (FUZZ-IEEE)},
  pages={1--6},
  year={2017},
  organization={IEEE}
}

@article{
mair2024archetypal,
title={Archetypal Analysis++: Rethinking the Initialization Strategy},
author={Sebastian Mair and Jens Sj{\"o}lund},
journal={Transactions on Machine Learning Research},
issn={2835-8856},
year={2024},
url={https://openreview.net/forum?id=KVUtlM60HM},
note={}
}

@inproceedings{10.1117/12.366289,
author = {Michael E. Winter},
title = {{N-FINDR: an algorithm for fast autonomous spectral end-member determination in hyperspectral data}},
volume = {3753},
booktitle = {Imaging Spectrometry V},
editor = {Michael R. Descour and Sylvia S. Shen},
organization = {International Society for Optics and Photonics},
publisher = {SPIE},
pages = {266 -- 275},
year = {1999},
doi = {10.1117/12.366289},
URL = {https://doi.org/10.1117/12.366289}
}

@INPROCEEDINGS{4779330,
  author={Li, Jun and Bioucas-Dias, Jose M.},
  booktitle={IGARSS 2008 - 2008 IEEE International Geoscience and Remote Sensing Symposium}, 
  title={Minimum Volume Simplex Analysis: A Fast Algorithm to Unmix Hyperspectral Data}, 
  year={2008},
  volume={3},
  number={},
  pages={III - 250-III - 253},
  doi={10.1109/IGARSS.2008.4779330}}

@article{10.1214/08-AOAS227,
author = {Art B. Owen and Patrick O. Perry},
title = {{Bi-cross-validation of the SVD and the nonnegative matrix factorization}},
volume = {3},
journal = {The Annals of Applied Statistics},
number = {2},
publisher = {Institute of Mathematical Statistics},
pages = {564 -- 594},
year = {2009},
doi = {10.1214/08-AOAS227},
URL = {https://doi.org/10.1214/08-AOAS227}
}

@book{2834535,
author = {Hastie, Trevor and Tibshirani, Robert and Wainwright, Martin},
title = {Statistical Learning with Sparsity: The Lasso and Generalizations},
year = {2015},
isbn = {1498712169},
publisher = {Chapman \& Hall/CRC}
}

@article{Hoerl01021970,
author = {Arthur E. Hoerl and Robert W. Kennard},
title = {Ridge Regression: Biased Estimation for Nonorthogonal Problems},
journal = {Technometrics},
volume = {12},
number = {1},
pages = {55--67},
year = {1970},
publisher = {Taylor \& Francis},
doi = {10.1080/00401706.1970.10488634},
URL = { 
        https://doi.org/10.1080/00401706.1970.10488634
},
eprint = {  
        https://doi.org/10.1080/00401706.1970.10488634
}
}

@article{51791361-8fe2-38d5-959f-ae8d048b490d,
 ISSN = {00359246},
 URL = {http://www.jstor.org/stable/2346178},
 author = {Robert Tibshirani},
 journal = {Journal of the Royal Statistical Society. Series B (Methodological)},
 number = {1},
 pages = {267--288},
 publisher = {[Royal Statistical Society, Oxford University Press]},
 title = {Regression Shrinkage and Selection via the Lasso},
 urldate = {2026-07-04},
 volume = {58},
 year = {1996}
}

@misc{chollet2015keras,
  author = {Chollet, Fran\c{c}ois and others},
  title = {Keras},
  year = {2015},
  howpublished = {\url{https://github.com/keras-team/keras}},
}

@misc{tensorflow2015-whitepaper,
title={ {TensorFlow}: Large-Scale Machine Learning on Heterogeneous Systems},
url={https://www.tensorflow.org/},
note={Software available from tensorflow.org},
author={
    Mart\'{i}n~Abadi and
    Ashish~Agarwal and
    Paul~Barham and
    Eugene~Brevdo and
    Zhifeng~Chen and
    Craig~Citro and
    Greg~S.~Corrado and
    Andy~Davis and
    Jeffrey~Dean and
    Matthieu~Devin and
    Sanjay~Ghemawat and
    Ian~Goodfellow and
    Andrew~Harp and
    Geoffrey~Irving and
    Michael~Isard and
    Yangqing Jia and
    Rafal~Jozefowicz and
    Lukasz~Kaiser and
    Manjunath~Kudlur and
    Josh~Levenberg and
    Dandelion~Man\'{e} and
    Rajat~Monga and
    Sherry~Moore and
    Derek~Murray and
    Chris~Olah and
    Mike~Schuster and
    Jonathon~Shlens and
    Benoit~Steiner and
    Ilya~Sutskever and
    Kunal~Talwar and
    Paul~Tucker and
    Vincent~Vanhoucke and
    Vijay~Vasudevan and
    Fernanda~Vi\'{e}gas and
    Oriol~Vinyals and
    Pete~Warden and
    Martin~Wattenberg and
    Martin~Wicke and
    Yuan~Yu and
    Xiaoqiang~Zheng},
  year={2015},
}

@inproceedings{He:2015dtg,
    author = "He, Kaiming and Zhang, Xiangyu and Ren, Shaoqing and Sun, Jian",
    title = "{Delving Deep into Rectifiers: Surpassing Human-Level Performance on ImageNet Classification}",
    eprint = "1502.01852",
    archivePrefix = "arXiv",
    primaryClass = "cs.CV",
    doi = "10.1109/ICCV.2015.123",
    month = "2",
    year = "2015"
}

@inproceedings{Kingma:2014vow,
    author = "Kingma, Diederik P. and Ba, Jimmy",
    title = "{Adam: A Method for Stochastic Optimization}",
    booktitle = "{International Conference on Learning Representations}",
    eprint = "1412.6980",
    archivePrefix = "arXiv",
    primaryClass = "cs.LG",
    month = "12",
    year = "2014"
}

@article{10.1214/aos/1176344552,
author = {B. Efron},
title = {{Bootstrap Methods: Another Look at the Jackknife}},
volume = {7},
journal = {The Annals of Statistics},
number = {1},
publisher = {Institute of Mathematical Statistics},
pages = {1 -- 26},
year = {1979},
doi = {10.1214/aos/1176344552},
URL = {https://doi.org/10.1214/aos/1176344552}
}

@book{Efron_Hastie_2016, place={Cambridge}, series={Institute of Mathematical Statistics Monographs}, title={Computer Age Statistical Inference: Algorithms, Evidence, and Data Science}, publisher={Cambridge University Press}, author={Efron, Bradley and Hastie, Trevor}, year={2016}, collection={Institute of Mathematical Statistics Monographs}}

@article{Amram:2020ykb,
    author = "Amram, Oz and Suarez, Cristina Mantilla",
    title = "{Tag N{\textquoteright} Train: a technique to train improved classifiers on unlabeled data}",
    eprint = "2002.12376",
    archivePrefix = "arXiv",
    primaryClass = "hep-ph",
    doi = "10.1007/JHEP01(2021)153",
    journal = "JHEP",
    volume = "01",
    pages = "153",
    year = "2021"
}

@article{EuropeanMuon:1984xji,
    author = "Albanese, J. P. and others",
    collaboration = "European Muon",
    title = "{Quark Charge Retention in Final State Hadrons From Deep Inelastic Muon Scattering}",
    reportNumber = "CERN-EP-84-63",
    doi = "10.1016/0370-2693(84)91825-2",
    journal = "Phys. Lett. B",
    volume = "144",
    pages = "302--308",
    year = "1984"
}

@article{Krohn:2012fg,
    author = "Krohn, David and Schwartz, Matthew D. and Lin, Tongyan and Waalewijn, Wouter J.",
    title = "{Jet Charge at the LHC}",
    eprint = "1209.2421",
    archivePrefix = "arXiv",
    primaryClass = "hep-ph",
    doi = "10.1103/PhysRevLett.110.212001",
    journal = "Phys. Rev. Lett.",
    volume = "110",
    number = "21",
    pages = "212001",
    year = "2013"
}

@misc{fastjet_contrib,
  author       = {{FastJet Collaboration}},
  title        = {{FastJet contrib}},
  howpublished = {\url{https://fastjet.hepforge.org/contrib/}},
  note         = {Accessed: 2026-07-05}
}

@article{CMS:2023fix,
    author = "Hayrapetyan, Aram and others",
    collaboration = "CMS",
    title = "{Measurement of multidifferential cross sections for dijet production in proton{\textendash}proton collisions at $\sqrt{s} = 13\,\text {Te}\hspace{-.08em}\text {V} $}",
    eprint = "2312.16669",
    archivePrefix = "arXiv",
    primaryClass = "hep-ex",
    reportNumber = "CMS-SMP-21-008, CERN-EP-2023-257",
    doi = "10.1140/epjc/s10052-024-13606-8",
    journal = "Eur. Phys. J. C",
    volume = "85",
    number = "1",
    pages = "72",
    year = "2025"
}

@article{ATLAS:2017ble,
    author = "Aaboud, M. and others",
    collaboration = "ATLAS",
    title = "{Measurement of inclusive jet and dijet cross-sections in proton-proton collisions at $\sqrt{s}=13$ TeV with the ATLAS detector}",
    eprint = "1711.02692",
    archivePrefix = "arXiv",
    primaryClass = "hep-ex",
    reportNumber = "CERN-EP-2017-157",
    doi = "10.1007/JHEP05(2018)195",
    journal = "JHEP",
    volume = "05",
    pages = "195",
    year = "2018"
}

@article{ALICE:2020nkc,
    author = "Acharya, Shreyasi and others",
    collaboration = "ALICE",
    title = "{Multiplicity dependence of $\pi $, K, and p production in pp collisions at $\sqrt{s} = 13$ TeV}",
    eprint = "2003.02394",
    archivePrefix = "arXiv",
    primaryClass = "nucl-ex",
    reportNumber = "CERN-EP-2020-024",
    doi = "10.1140/epjc/s10052-020-8125-1",
    journal = "Eur. Phys. J. C",
    volume = "80",
    number = "8",
    pages = "693",
    year = "2020"
}

@article{Calabrese:2022eju,
    author = "Calabrese, R. and others",
    title = "{Performance of the LHCb RICH detectors during LHC~Run~2}",
    eprint = "2205.13400",
    archivePrefix = "arXiv",
    primaryClass = "physics.ins-det",
    reportNumber = "LHCb-DP-2021-004",
    doi = "10.1088/1748-0221/17/07/P07013",
    journal = "JINST",
    volume = "17",
    number = "07",
    pages = "P07013",
    year = "2022"
}

@article{CMS:2017eoq,
    author = "Sirunyan, Albert M and others",
    collaboration = "CMS",
    title = "{Measurement of charged pion, kaon, and proton production in proton-proton collisions at $\sqrt{s}=13$ TeV}",
    eprint = "1706.10194",
    archivePrefix = "arXiv",
    primaryClass = "hep-ex",
    reportNumber = "CMS-FSQ-16-004, CERN-EP-2017-091",
    doi = "10.1103/PhysRevD.96.112003",
    journal = "Phys. Rev. D",
    volume = "96",
    number = "11",
    pages = "112003",
    year = "2017"
}

@article{Larkoski:2019nwj,
    author = "Larkoski, Andrew J. and Metodiev, Eric M.",
    title = "{A Theory of Quark vs. Gluon Discrimination}",
    eprint = "1906.01639",
    archivePrefix = "arXiv",
    primaryClass = "hep-ph",
    reportNumber = "MIT-CTP 5049",
    doi = "10.1007/JHEP10(2019)014",
    journal = "JHEP",
    volume = "10",
    pages = "014",
    year = "2019"
}

@article{yu2019bootstrapping,
author = {Yu, Han and Chapman, Brian and Di Florio, Arianna and Eischen, Ellen and Gotz, David and Jacob, Mathews and Blair, Rachael Hageman},
title = {Bootstrapping estimates of stability for clusters, observations and model selection},
year = {2019},
issue_date = {March     2019},
publisher = {Kluwer Academic Publishers},
address = {USA},
volume = {34},
number = {1},
issn = {0943-4062},
url = {https://doi.org/10.1007/s00180-018-0830-y},
doi = {10.1007/s00180-018-0830-y},
journal = {Comput. Stat.},
month = mar,
pages = {349–372},
numpages = {24}
}

@article{Nakai:2020kuu,
    author = "Nakai, Yuichiro and Shih, David and Thomas, Scott",
    title = "{Strange Jet Tagging}",
    eprint = "2003.09517",
    archivePrefix = "arXiv",
    primaryClass = "hep-ph",
    month = "3",
    year = "2020"
}

@article{meinshausen2010stability,
author = {Meinshausen, Nicolai and Bühlmann, Peter},
title = {Stability selection},
journal = {Journal of the Royal Statistical Society: Series B (Statistical Methodology)},
volume = {72},
number = {4},
pages = {417-473},
doi = {https://doi.org/10.1111/j.1467-9868.2010.00740.x},
url = {https://rss.onlinelibrary.wiley.com/doi/abs/10.1111/j.1467-9868.2010.00740.x},
eprint = {https://rss.onlinelibrary.wiley.com/doi/pdf/10.1111/j.1467-9868.2010.00740.x},
year = {2010}
}

@article{ROUSSEEUW198753,
title = {Silhouettes: A graphical aid to the interpretation and validation of cluster analysis},
journal = {Journal of Computational and Applied Mathematics},
volume = {20},
pages = {53-65},
year = {1987},
issn = {0377-0427},
doi = {https://doi.org/10.1016/0377-0427(87)90125-7},
url = {https://www.sciencedirect.com/science/article/pii/0377042787901257},
author = {Peter J. Rousseeuw}
}

@article{tibshirani2001gap,
  author  = {Tibshirani, Robert and Walther, Guenther and Hastie, Trevor},
  title   = {Estimating the Number of Clusters in a Data Set via the Gap Statistic},
  journal = {Journal of the Royal Statistical Society: Series B (Statistical Methodology)},
  volume  = {63},
  number  = {2},
  pages   = {411--423},
  year    = {2001},
  doi     = {10.1111/1467-9868.00293}
}

@article{Brewer:2020och,
    author = "Brewer, Jasmine and Thaler, Jesse and Turner, Andrew P.",
    title = "{Data-driven quark and gluon jet modification in heavy-ion collisions}",
    eprint = "2008.08596",
    archivePrefix = "arXiv",
    primaryClass = "hep-ph",
    reportNumber = "MIT-CTP 5219",
    doi = "10.1103/PhysRevC.103.L021901",
    journal = "Phys. Rev. C",
    volume = "103",
    number = "2",
    pages = "L021901",
    year = "2021"
}

@article{Buckley:2015gua,
    author = "Buckley, Andy and Pollard, Chris",
    title = "{QCD-aware partonic jet clustering for truth-jet flavour labelling}",
    eprint = "1507.00508",
    archivePrefix = "arXiv",
    primaryClass = "hep-ph",
    reportNumber = "MCNET-15-12, GLA-PPE-2015-03",
    doi = "10.1140/epjc/s10052-016-3925-z",
    journal = "Eur. Phys. J. C",
    volume = "76",
    number = "2",
    pages = "71",
    year = "2016"
}

@article{Caletti:2022glq,
    author = "Caletti, Simone and Larkoski, Andrew J. and Marzani, Simone and Reichelt, Daniel",
    title = "{A fragmentation approach to jet flavor}",
    eprint = "2205.01117",
    archivePrefix = "arXiv",
    primaryClass = "hep-ph",
    reportNumber = "SLAC-PUB-17663, IPPP/22/28",
    doi = "10.1007/JHEP10(2022)158",
    journal = "JHEP",
    volume = "10",
    pages = "158",
    year = "2022"
}

@article{Banfi:2006hf,
    author = "Banfi, Andrea and Salam, Gavin P. and Zanderighi, Giulia",
    title = "{Infrared safe definition of jet flavor}",
    eprint = "hep-ph/0601139",
    archivePrefix = "arXiv",
    reportNumber = "FERMILAB-PUB-06-003-T, BICOCCA-FT-05-28, CAVENDISH-HEP-05-25, CERN-PH-TH-2006-002, DAMTP-2005-134, LPTHE-05-34",
    doi = "10.1140/epjc/s2006-02552-4",
    journal = "Eur. Phys. J. C",
    volume = "47",
    pages = "113--124",
    year = "2006"
}

@article{Gallicchio:2011xc,
    author = "Gallicchio, Jason and Schwartz, Matthew D.",
    title = "{Pure Samples of Quark and Gluon Jets at the LHC}",
    eprint = "1104.1175",
    archivePrefix = "arXiv",
    primaryClass = "hep-ph",
    doi = "10.1007/JHEP10(2011)103",
    journal = "JHEP",
    volume = "10",
    pages = "103",
    year = "2011"
}

@article{Czakon:2022wam,
    author = "Czakon, Michal and Mitov, Alexander and Poncelet, Rene",
    title = "{Infrared-safe flavoured anti-k$_{T}$ jets}",
    eprint = "2205.11879",
    archivePrefix = "arXiv",
    primaryClass = "hep-ph",
    reportNumber = "Cavendish-HEP-22/06, P3H-22-056, TTK-22-17",
    doi = "10.1007/JHEP04(2023)138",
    journal = "JHEP",
    volume = "04",
    pages = "138",
    year = "2023"
}

@article{Gauld:2022lem,
    author = "Gauld, Rhorry and Huss, Alexander and Stagnitto, Giovanni",
    title = "{Flavor Identification of Reconstructed Hadronic Jets}",
    eprint = "2208.11138",
    archivePrefix = "arXiv",
    primaryClass = "hep-ph",
    reportNumber = "BONN-TH-2022-17, CERN-TH-2022-121, ZU-TH 31/22",
    doi = "10.1103/PhysRevLett.130.161901",
    journal = "Phys. Rev. Lett.",
    volume = "130",
    number = "16",
    pages = "161901",
    year = "2023",
    note = "[Erratum: Phys.Rev.Lett. 132, 159901 (2024)]"
}

@article{Caola:2023wpj,
    author = "Caola, Fabrizio and Grabarczyk, Radoslaw and Hutt, Maxwell L. and Salam, Gavin P. and Scyboz, Ludovic and Thaler, Jesse",
    title = "{Flavored jets with exact anti-kt kinematics and tests of infrared and collinear safety}",
    eprint = "2306.07314",
    archivePrefix = "arXiv",
    primaryClass = "hep-ph",
    reportNumber = "MIT-CTP 5564, OUTP-23-06P",
    doi = "10.1103/PhysRevD.108.094010",
    journal = "Phys. Rev. D",
    volume = "108",
    number = "9",
    pages = "094010",
    year = "2023"
}

@article{Caletti:2022hnc,
    author = "Caletti, Simone and Larkoski, Andrew J. and Marzani, Simone and Reichelt, Daniel",
    title = "{Practical jet flavour through NNLO}",
    eprint = "2205.01109",
    archivePrefix = "arXiv",
    primaryClass = "hep-ph",
    reportNumber = "SLAC-PUB-17674, IPPP/22/27",
    doi = "10.1140/epjc/s10052-022-10568-7",
    journal = "Eur. Phys. J. C",
    volume = "82",
    number = "7",
    pages = "632",
    year = "2022"
}

@article{Bierlich:2022pfr,
    author = "Bierlich, Christian and others",
    title = "{A comprehensive guide to the physics and usage of PYTHIA 8.3}",
    eprint = "2203.11601",
    archivePrefix = "arXiv",
    primaryClass = "hep-ph",
    reportNumber = "LU-TP 22-16, MCNET-22-04, FERMILAB-PUB-22-227-SCD",
    doi = "10.21468/SciPostPhysCodeb.8",
    journal = "SciPost Phys. Codeb.",
    volume = "2022",
    pages = "8",
    year = "2022"
}

@article{Cacciari:2011ma,
    author = "Cacciari, Matteo and Salam, Gavin P. and Soyez, Gregory",
    title = "{FastJet User Manual}",
    eprint = "1111.6097",
    archivePrefix = "arXiv",
    primaryClass = "hep-ph",
    reportNumber = "CERN-PH-TH-2011-297",
    doi = "10.1140/epjc/s10052-012-1896-2",
    journal = "Eur. Phys. J. C",
    volume = "72",
    pages = "1896",
    year = "2012"
}

@article{Cacciari:2005hq,
    author = "Cacciari, Matteo and Salam, Gavin P.",
    title = "{Dispelling the $N^{3}$ myth for the $k_t$ jet-finder}",
    eprint = "hep-ph/0512210",
    archivePrefix = "arXiv",
    reportNumber = "LPTHE-05-32",
    doi = "10.1016/j.physletb.2006.08.037",
    journal = "Phys. Lett. B",
    volume = "641",
    pages = "57--61",
    year = "2006"
}

@article{Cacciari:2008gp,
    author = "Cacciari, Matteo and Salam, Gavin P. and Soyez, Gregory",
    title = "{The anti-$k_t$ jet clustering algorithm}",
    eprint = "0802.1189",
    archivePrefix = "arXiv",
    primaryClass = "hep-ph",
    reportNumber = "LPTHE-07-03",
    doi = "10.1088/1126-6708/2008/04/063",
    journal = "JHEP",
    volume = "04",
    pages = "063",
    year = "2008"
}

@article{ATLAS:2019rqw,
    author = "Aad, Georges and others",
    collaboration = "ATLAS",
    title = "{Properties of jet fragmentation using charged particles measured with the ATLAS detector in $pp$ collisions at $\sqrt{s}=13$ TeV}",
    eprint = "1906.09254",
    archivePrefix = "arXiv",
    primaryClass = "hep-ex",
    reportNumber = "CERN-EP-2019-090",
    doi = "10.1103/PhysRevD.100.052011",
    journal = "Phys. Rev. D",
    volume = "100",
    number = "5",
    pages = "052011",
    year = "2019"
}

@article{LlorenteMerino:2019zov,
    author = "Llorente Merino, Javier",
    editor = "Roig Garc{\'e}s, Pablo and Bautista Guzman, Irais and Fern{\'a}ndez T{\'e}llez, Arturo and Mart{\'\i}nez Hern{\'a}ndez, Mario Iv{\'a}n",
    collaboration = "ATLAS, CMS",
    title = "{Production of top quarks, jets and photons}",
    reportNumber = "ATL-PHYS-PROC-2019-131",
    doi = "10.22323/1.350.0087",
    journal = "PoS",
    volume = "LHCP2019",
    pages = "087",
    year = "2019"
}

@article{VillaplanaPerez:2019yxt,
    author = "Villaplana Perez, Miguel",
    collaboration = "ATLAS",
    title = "{Measurement of jet fragmentation using the ATLAS detector}",
    reportNumber = "ATL-PHYS-PROC-2019-148",
    doi = "10.22323/1.364.0498",
    journal = "PoS",
    volume = "EPS-HEP2019",
    pages = "498",
    year = "2020"
}

@article{ATLAS:2025ycg,
    author = "Aad, Georges and others",
    collaboration = "ATLAS",
    title = "{Performance and efficiency of a transformer-based quark/gluon jet tagger in the ATLAS experiment}",
    eprint = "2512.03949",
    archivePrefix = "arXiv",
    primaryClass = "hep-ex",
    reportNumber = "CERN-EP-2025-252",
    month = "12",
    year = "2025"
}

@article{Lenz:2020eam,
    author = "Lenz, T.",
    collaboration = "ATLAS",
    title = "{Measurement of jet substructure observables and jet fragmentation properties using the ATLAS detector}",
    reportNumber = "ATL-PHYS-PROC-2019-106",
    doi = "10.1016/j.nuclphysbps.2019.11.002",
    journal = "Nucl. Part. Phys. Proc.",
    volume = "309-311",
    pages = "1--8",
    year = "2020"
}

@article{Dillon:2019cqt,
    author = "Dillon, Barry M. and Faroughy, Darius A. and Kamenik, Jernej F.",
    title = "{Uncovering latent jet substructure}",
    eprint = "1904.04200",
    archivePrefix = "arXiv",
    primaryClass = "hep-ph",
    doi = "10.1103/PhysRevD.100.056002",
    journal = "Phys. Rev. D",
    volume = "100",
    number = "5",
    pages = "056002",
    year = "2019"
}

@article{Dillon:2020quc,
    author = "Dillon, B. M. and Faroughy, D. A. and Kamenik, J. F. and Szewc, M.",
    title = "{Learning the latent structure of collider events}",
    eprint = "2005.12319",
    archivePrefix = "arXiv",
    primaryClass = "hep-ph",
    doi = "10.1007/JHEP10(2020)206",
    journal = "JHEP",
    volume = "10",
    pages = "206",
    year = "2020"
}

@article{Alvarez:2021zje,
    author = "Alvarez, Ezequiel and Spannowsky, Michael and Szewc, Manuel",
    title = "{Unsupervised Quark/Gluon Jet Tagging With Poissonian Mixture Models}",
    eprint = "2112.11352",
    archivePrefix = "arXiv",
    primaryClass = "hep-ph",
    reportNumber = "IPPP/21/58, ICAS 70/21",
    doi = "10.3389/frai.2022.852970",
    journal = "Front. Artif. Intell.",
    volume = "5",
    pages = "852970",
    year = "2022"
}

@article{LeBlanc:2022bwd,
    author = "LeBlanc, Matt and Nachman, Benjamin and Sauer, Christof",
    title = "{Going off topics to demix quark and gluon jets in {\ensuremath{\alpha}}$_{S}$ extractions}",
    eprint = "2206.10642",
    archivePrefix = "arXiv",
    primaryClass = "hep-ph",
    doi = "10.1007/JHEP02(2023)150",
    journal = "JHEP",
    volume = "02",
    pages = "150",
    year = "2023"
}

@article{Dolan:2023abg,
    author = "Dolan, Matthew J. and Gargalionis, John and Ore, Ayodele",
    title = "{Quark-versus-gluon tagging in CMS Open Data with CWoLa and TopicFlow}",
    eprint = "2312.03434",
    archivePrefix = "arXiv",
    primaryClass = "hep-ph",
    doi = "10.1007/JHEP08(2025)024",
    journal = "JHEP",
    volume = "08",
    pages = "024",
    year = "2025"
}

@ARTICLE{2020SciPy-NMeth,
  author  = {Virtanen, Pauli and Gommers, Ralf and Oliphant, Travis E. and
            Haberland, Matt and Reddy, Tyler and Cournapeau, David and
            Burovski, Evgeni and Peterson, Pearu and Weckesser, Warren and
            Bright, Jonathan and {van der Walt}, St{\'e}fan J. and
            Brett, Matthew and Wilson, Joshua and Millman, K. Jarrod and
            Mayorov, Nikolay and Nelson, Andrew R. J. and Jones, Eric and
            Kern, Robert and Larson, Eric and Carey, C J and
            Polat, {\.I}lhan and Feng, Yu and Moore, Eric W. and
            {VanderPlas}, Jake and Laxalde, Denis and Perktold, Josef and
            Cimrman, Robert and Henriksen, Ian and Quintero, E. A. and
            Harris, Charles R. and Archibald, Anne M. and
            Ribeiro, Ant{\^o}nio H. and Pedregosa, Fabian and
            {van Mulbregt}, Paul and {SciPy 1.0 Contributors}},
  title   = {{{SciPy} 1.0: Fundamental Algorithms for Scientific
            Computing in Python}},
  journal = {Nature Methods},
  year    = {2020},
  volume  = {17},
  pages   = {261--272},
  adsurl  = {https://rdcu.be/b08Wh},
  doi     = {10.1038/s41592-019-0686-2},
}

@article{Jones:1988ay,
    author = "Jones, Lorella M.",
    title = "{Tests for Determining the Parton Ancestor of a Hadron Jet}",
    reportNumber = "UTHEP-189",
    doi = "10.1103/PhysRevD.39.2550",
    journal = "Phys. Rev. D",
    volume = "39",
    pages = "2550",
    year = "1989"
}

@article{Fodor:1989ir,
    author = "Fodor, Z.",
    title = "{How to See the Differences Between Quark and Gluon Jets}",
    reportNumber = "ITP-472-BUDAPEST",
    doi = "10.1103/PhysRevD.41.1726",
    journal = "Phys. Rev. D",
    volume = "41",
    pages = "1726",
    year = "1990"
}

@article{Jones:1990rz,
    author = "Jones, Lorella",
    title = "{TOWARDS A SYSTEMATIC JET CLASSIFICATION}",
    reportNumber = "ILL-TH-90-01",
    doi = "10.1103/PhysRevD.42.811",
    journal = "Phys. Rev. D",
    volume = "42",
    pages = "811--814",
    year = "1990"
}

@article{Lonnblad:1990qp,
    author = "Lonnblad, Leif and Peterson, Carsten and Rognvaldsson, Thorsteinn",
    title = "{Using neural networks to identify jets}",
    reportNumber = "LU-TP-90-8",
    doi = "10.1016/0550-3213(91)90392-B",
    journal = "Nucl. Phys. B",
    volume = "349",
    pages = "675--702",
    year = "1991"
}

@article{Pumplin:1991kc,
    author = "Pumplin, Jon",
    title = "{How to tell quark jets from gluon jets}",
    reportNumber = "MSUTH-91-03",
    doi = "10.1103/PhysRevD.44.2025",
    journal = "Phys. Rev. D",
    volume = "44",
    pages = "2025--2032",
    year = "1991"
}

@article{Gallicchio:2011xq,
    author = "Gallicchio, Jason and Schwartz, Matthew D.",
    title = "{Quark and Gluon Tagging at the LHC}",
    eprint = "1106.3076",
    archivePrefix = "arXiv",
    primaryClass = "hep-ph",
    doi = "10.1103/PhysRevLett.107.172001",
    journal = "Phys. Rev. Lett.",
    volume = "107",
    pages = "172001",
    year = "2011"
}

@article{Gallicchio:2012ez,
    author = "Gallicchio, Jason and Schwartz, Matthew D.",
    title = "{Quark and Gluon Jet Substructure}",
    eprint = "1211.7038",
    archivePrefix = "arXiv",
    primaryClass = "hep-ph",
    doi = "10.1007/JHEP04(2013)090",
    journal = "JHEP",
    volume = "04",
    pages = "090",
    year = "2013"
}

@article{Bhattacherjee:2015psa,
    author = "Bhattacherjee, Biplob and Mukhopadhyay, Satyanarayan and Nojiri, Mihoko M. and Sakaki, Yasuhito and Webber, Bryan R.",
    title = "{Associated jet and subjet rates in light-quark and gluon jet discrimination}",
    eprint = "1501.04794",
    archivePrefix = "arXiv",
    primaryClass = "hep-ph",
    reportNumber = "CAVENDISH-HEP-15-01, KEK-TH-1790, IPMU15-0008",
    doi = "10.1007/JHEP04(2015)131",
    journal = "JHEP",
    volume = "04",
    pages = "131",
    year = "2015"
}

@article{FerreiradeLima:2016gcz,
    author = "Ferreira de Lima, Danilo and Petrov, Petar and Soper, Davison and Spannowsky, Michael",
    title = "{Quark-Gluon tagging with Shower Deconstruction: Unearthing dark matter and Higgs couplings}",
    eprint = "1607.06031",
    archivePrefix = "arXiv",
    primaryClass = "hep-ph",
    reportNumber = "DCPT-16-140, IPPP-16-70",
    doi = "10.1103/PhysRevD.95.034001",
    journal = "Phys. Rev. D",
    volume = "95",
    number = "3",
    pages = "034001",
    year = "2017"
}

@article{Bhattacherjee:2016bpy,
    author = "Bhattacherjee, Biplob and Mukhopadhyay, Satyanarayan and Nojiri, Mihoko M. and Sakaki, Yasuhito and Webber, Bryan R.",
    title = "{Quark-gluon discrimination in the search for gluino pair production at the LHC}",
    eprint = "1609.08781",
    archivePrefix = "arXiv",
    primaryClass = "hep-ph",
    reportNumber = "CAVENDISH-HEP-16-16, KEK-TH-1936, PITT-PACC-1609, IPMU16-0146",
    doi = "10.1007/JHEP01(2017)044",
    journal = "JHEP",
    volume = "01",
    pages = "044",
    year = "2017"
}

@article{Davighi:2017hok,
    author = "Davighi, Joe and Harris, Philip",
    title = "{Fractal based observables to probe jet substructure of quarks and gluons}",
    eprint = "1703.00914",
    archivePrefix = "arXiv",
    primaryClass = "hep-ph",
    reportNumber = "DAMTP-2017-03-06",
    doi = "10.1140/epjc/s10052-018-5819-8",
    journal = "Eur. Phys. J. C",
    volume = "78",
    number = "4",
    pages = "334",
    year = "2018"
}

@article{ATLAS:2023dyu,
    author = "Aad, Georges and others",
    collaboration = "ATLAS",
    title = "{Performance and calibration of quark/gluon-jet taggers using 140 fb$^{−1}$ of pp collisions at TeV with the ATLAS detector*}",
    eprint = "2308.00716",
    archivePrefix = "arXiv",
    primaryClass = "hep-ex",
    reportNumber = "CERN-EP-2023-151",
    doi = "10.1088/1674-1137/acf701",
    journal = "Chin. Phys. C",
    volume = "48",
    number = "2",
    pages = "023001",
    year = "2024"
}

@article{ATLAS:2021suo,
    author = "Aad, Georges and others",
    collaboration = "ATLAS",
    title = "{Search for heavy particles in the $b$-tagged dijet mass distribution with additional $b$-tagged jets in proton-proton collisions at $\sqrt s$= 13{\,}{\,}TeV with the ATLAS experiment}",
    eprint = "2108.09059",
    archivePrefix = "arXiv",
    primaryClass = "hep-ex",
    reportNumber = "CERN-EP-2021-119",
    doi = "10.1103/PhysRevD.105.012001",
    journal = "Phys. Rev. D",
    volume = "105",
    number = "1",
    pages = "012001",
    year = "2022"
}

@article{CMS:2019emo,
    author = "Sirunyan, Albert M and others",
    collaboration = "CMS",
    title = "{Search for low mass vector resonances decaying into quark-antiquark pairs in proton-proton collisions at $\sqrt{s}=$ 13 TeV}",
    eprint = "1909.04114",
    archivePrefix = "arXiv",
    primaryClass = "hep-ex",
    reportNumber = "CMS-EXO-18-012, CERN-EP-2019-176",
    doi = "10.1103/PhysRevD.100.112007",
    journal = "Phys. Rev. D",
    volume = "100",
    number = "11",
    pages = "112007",
    year = "2019"
}

@article{ATLAS:2017zuf,
    author = "Aaboud, Morad and others",
    collaboration = "ATLAS",
    title = "{Search for diboson resonances with boson-tagged jets in $pp$ collisions at $\sqrt{s}=13$ TeV with the ATLAS detector}",
    eprint = "1708.04445",
    archivePrefix = "arXiv",
    primaryClass = "hep-ex",
    reportNumber = "CERN-EP-2017-147",
    doi = "10.1016/j.physletb.2017.12.011",
    journal = "Phys. Lett. B",
    volume = "777",
    pages = "91--113",
    year = "2018"
}

@article{CMS:2017qxu,
    author = "Sirunyan, Albert M and others",
    collaboration = "CMS",
    title = "{Search for supersymmetry in proton-proton collisions at 13 TeV using identified top quarks}",
    eprint = "1710.11188",
    archivePrefix = "arXiv",
    primaryClass = "hep-ex",
    reportNumber = "CMS-SUS-16-050, CERN-EP-2017-269",
    doi = "10.1103/PhysRevD.97.012007",
    journal = "Phys. Rev. D",
    volume = "97",
    number = "1",
    pages = "012007",
    year = "2018"
}

@article{ATLAS:2021kxv,
    author = "Aad, Georges and others",
    collaboration = "ATLAS",
    title = "{Search for new phenomena in events with an energetic jet and missing transverse momentum in $pp$ collisions at $\sqrt {s}$ =13  TeV with the ATLAS detector}",
    eprint = "2102.10874",
    archivePrefix = "arXiv",
    primaryClass = "hep-ex",
    reportNumber = "CERN-EP-2020-238",
    doi = "10.1103/PhysRevD.103.112006",
    journal = "Phys. Rev. D",
    volume = "103",
    number = "11",
    pages = "112006",
    year = "2021"
}

@article{CMS:2017asf,
    author = "Sirunyan, A. M. and others",
    collaboration = "CMS",
    title = "{Search for vectorlike light-flavor quark partners in proton-proton collisions at $\sqrt s$ =8  TeV}",
    eprint = "1708.02510",
    archivePrefix = "arXiv",
    primaryClass = "hep-ex",
    reportNumber = "CMS-B2G-12-016, CERN-EP-2017-145",
    doi = "10.1103/PhysRevD.97.072008",
    journal = "Phys. Rev. D",
    volume = "97",
    pages = "072008",
    year = "2018"
}

@article{CMS:2017mbm,
    author = "Sirunyan, Albert M and others",
    collaboration = "CMS",
    title = "{Search for direct production of supersymmetric partners of the top quark in the all-jets final state in proton-proton collisions at $ \sqrt{s}=13 $ TeV}",
    eprint = "1707.03316",
    archivePrefix = "arXiv",
    primaryClass = "hep-ex",
    reportNumber = "CMS-SUS-16-049, CERN-EP-2017-129",
    doi = "10.1007/JHEP10(2017)005",
    journal = "JHEP",
    volume = "10",
    pages = "005",
    year = "2017"
}

@article{Adams:2015hiv,
    author = "Adams, D. and others",
    title = "{Towards an Understanding of the Correlations in Jet Substructure}",
    eprint = "1504.00679",
    archivePrefix = "arXiv",
    primaryClass = "hep-ph",
    reportNumber = "FERMILAB-PUB-15-670-CMS, SLAC-PUB-16703",
    doi = "10.1140/epjc/s10052-015-3587-2",
    journal = "Eur. Phys. J. C",
    volume = "75",
    number = "9",
    pages = "409",
    year = "2015"
}

@article{Altheimer:2013yza,
    author = "Altheimer, A. and others",
    title = "{Boosted Objects and Jet Substructure at the LHC. Report of BOOST2012, held at IFIC Valencia, 23rd-27th of July 2012}",
    eprint = "1311.2708",
    archivePrefix = "arXiv",
    primaryClass = "hep-ex",
    reportNumber = "FERMILAB-PUB-13-669",
    doi = "10.1140/epjc/s10052-014-2792-8",
    journal = "Eur. Phys. J. C",
    volume = "74",
    number = "3",
    pages = "2792",
    year = "2014"
}

@article{Altheimer:2012mn,
    author = "Altheimer, A. and others",
    title = "{Jet Substructure at the Tevatron and LHC: New Results, New Tools, New Benchmarks}",
    eprint = "1201.0008",
    archivePrefix = "arXiv",
    primaryClass = "hep-ph",
    reportNumber = "SLAC-R-990, FERMILAB-PUB-12-897-T",
    doi = "10.1088/0954-3899/39/6/063001",
    journal = "J. Phys. G",
    volume = "39",
    pages = "063001",
    year = "2012"
}

@article{Abdesselam:2010pt,
    author = "Abdesselam, A. and others",
    title = "{Boosted Objects: A Probe of Beyond the Standard Model Physics}",
    eprint = "1012.5412",
    archivePrefix = "arXiv",
    primaryClass = "hep-ph",
    reportNumber = "SLAC-PUB-15081, FERMILAB-PUB-10-617-CMS",
    doi = "10.1140/epjc/s10052-011-1661-y",
    journal = "Eur. Phys. J. C",
    volume = "71",
    pages = "1661",
    year = "2011"
}

@article{Salam:2010nqg,
    author = "Salam, Gavin P.",
    title = "{Towards Jetography}",
    eprint = "0906.1833",
    archivePrefix = "arXiv",
    primaryClass = "hep-ph",
    doi = "10.1140/epjc/s10052-010-1314-6",
    journal = "Eur. Phys. J. C",
    volume = "67",
    pages = "637--686",
    year = "2010"
}

@inproceedings{Shelton:2013an,
    author = "Shelton, Jessie",
    title = "{Jet Substructure}",
    booktitle = "{Theoretical Advanced Study Institute in Elementary Particle Physics}: {Searching for New Physics at Small and Large Scales}",
    eprint = "1302.0260",
    archivePrefix = "arXiv",
    primaryClass = "hep-ph",
    doi = "10.1142/9789814525220_0007",
    pages = "303--340",
    year = "2013"
}

@article{Plehn:2011tg,
    author = "Plehn, Tilman and Spannowsky, Michael",
    title = "{Top Tagging}",
    eprint = "1112.4441",
    archivePrefix = "arXiv",
    primaryClass = "hep-ph",
    doi = "10.1088/0954-3899/39/8/083001",
    journal = "J. Phys. G",
    volume = "39",
    pages = "083001",
    year = "2012"
}

@article{Cacciari:2015jwa,
    author = "Cacciari, Matteo",
    title = "{Phenomenological and theoretical developments in jet physics at the LHC}",
    eprint = "1509.02272",
    archivePrefix = "arXiv",
    primaryClass = "hep-ph",
    doi = "10.1142/S0217751X1546001X",
    journal = "Int. J. Mod. Phys. A",
    volume = "30",
    number = "31",
    pages = "1546001",
    year = "2015"
}

@book{Marzani:2019hun,
    author = "Marzani, Simone and Soyez, Gregory and Spannowsky, Michael",
    title = "{Looking inside jets: an introduction to jet substructure and boosted-object phenomenology}",
    eprint = "1901.10342",
    archivePrefix = "arXiv",
    primaryClass = "hep-ph",
    doi = "10.1007/978-3-030-15709-8",
    publisher = "Springer",
    volume = "958",
    year = "2019"
}

@article{Larkoski:2017jix,
    author = "Larkoski, Andrew J. and Moult, Ian and Nachman, Benjamin",
    title = "{Jet Substructure at the Large Hadron Collider: A Review of Recent Advances in Theory and Machine Learning}",
    eprint = "1709.04464",
    archivePrefix = "arXiv",
    primaryClass = "hep-ph",
    doi = "10.1016/j.physrep.2019.11.001",
    journal = "Phys. Rept.",
    volume = "841",
    pages = "1--63",
    year = "2020"
}

@article{Larkoski:2024uoc,
    author = "Larkoski, Andrew J.",
    title = "{QCD masterclass lectures on jet physics and machine learning}",
    eprint = "2407.04897",
    archivePrefix = "arXiv",
    primaryClass = "hep-ph",
    doi = "10.1140/epjc/s10052-024-13341-0",
    journal = "Eur. Phys. J. C",
    volume = "84",
    number = "10",
    pages = "1117",
    year = "2024"
}

@article{Kogler:2018hem,
    author = "Kogler, Roman and others",
    title = "{Jet Substructure at the Large Hadron Collider: Experimental Review}",
    eprint = "1803.06991",
    archivePrefix = "arXiv",
    primaryClass = "hep-ex",
    reportNumber = "FERMILAB-PUB-18-123-PPD",
    doi = "10.1103/RevModPhys.91.045003",
    journal = "Rev. Mod. Phys.",
    volume = "91",
    number = "4",
    pages = "045003",
    year = "2019"
}

@article{Guest:2018yhq,
    author = "Guest, Dan and Cranmer, Kyle and Whiteson, Daniel",
    title = "{Deep Learning and its Application to LHC Physics}",
    eprint = "1806.11484",
    archivePrefix = "arXiv",
    primaryClass = "hep-ex",
    doi = "10.1146/annurev-nucl-101917-021019",
    journal = "Ann. Rev. Nucl. Part. Sci.",
    volume = "68",
    pages = "161--181",
    year = "2018"
}

@article{Albertsson:2018maf,
    author = "Albertsson, Kim and others",
    title = "{Machine Learning in High Energy Physics Community White Paper}",
    eprint = "1807.02876",
    archivePrefix = "arXiv",
    primaryClass = "physics.comp-ph",
    reportNumber = "FERMILAB-PUB-18-318-CD-DI-PPD",
    doi = "10.1088/1742-6596/1085/2/022008",
    journal = "J. Phys. Conf. Ser.",
    volume = "1085",
    number = "2",
    pages = "022008",
    year = "2018"
}

@article{Radovic:2018dip,
    author = "Radovic, Alexander and Williams, Mike and Rousseau, David and Kagan, Michael and Bonacorsi, Daniele and Himmel, Alexander and Aurisano, Adam and Terao, Kazuhiro and Wongjirad, Taritree",
    title = "{Machine learning at the energy and intensity frontiers of particle physics}",
    reportNumber = "FERMILAB-PUB-18-436-ND",
    doi = "10.1038/s41586-018-0361-2",
    journal = "Nature",
    volume = "560",
    number = "7716",
    pages = "41--48",
    year = "2018"
}

@article{Carleo:2019ptp,
    author = "Carleo, Giuseppe and Cirac, Ignacio and Cranmer, Kyle and Daudet, Laurent and Schuld, Maria and Tishby, Naftali and Vogt-Maranto, Leslie and Zdeborov{\'a}, Lenka",
    title = "{Machine learning and the physical sciences}",
    eprint = "1903.10563",
    archivePrefix = "arXiv",
    primaryClass = "physics.comp-ph",
    doi = "10.1103/RevModPhys.91.045002",
    journal = "Rev. Mod. Phys.",
    volume = "91",
    number = "4",
    pages = "045002",
    year = "2019"
}

@article{Bourilkov:2019yoi,
    author = "Bourilkov, Dimitri",
    title = "{Machine and Deep Learning Applications in Particle Physics}",
    eprint = "1912.08245",
    archivePrefix = "arXiv",
    primaryClass = "physics.data-an",
    doi = "10.1142/S0217751X19300199",
    journal = "Int. J. Mod. Phys. A",
    volume = "34",
    number = "35",
    pages = "1930019",
    year = "2020"
}

@article{Schwartz:2021ftp,
    author = "Schwartz, Matthew D.",
    title = "{Modern Machine Learning and Particle Physics}",
    eprint = "2103.12226",
    archivePrefix = "arXiv",
    primaryClass = "hep-ph",
    doi = "10.1162/99608f92.beeb1183",
    month = "3",
    year = "2021"
}

@article{Karagiorgi:2022qnh,
    author = "Karagiorgi, Georgia and Kasieczka, Gregor and Kravitz, Scott and Nachman, Benjamin and Shih, David",
    title = "{Machine learning in the search for new fundamental physics}",
    doi = "10.1038/s42254-022-00455-1",
    journal = "Nature Rev. Phys.",
    volume = "4",
    number = "6",
    pages = "399--412",
    year = "2022"
}

@article{Boehnlein:2021eym,
    author = "Boehnlein, Amber and others",
    title = "{Colloquium: Machine learning in nuclear physics}",
    eprint = "2112.02309",
    archivePrefix = "arXiv",
    primaryClass = "nucl-th",
    doi = "10.1103/RevModPhys.94.031003",
    journal = "Rev. Mod. Phys.",
    volume = "94",
    number = "3",
    pages = "031003",
    year = "2022"
}

@article{Plehn:2022ftl,
    author = "Plehn, Tilman and Butter, Anja and Dillon, Barry and Heimel, Theo and Krause, Claudius and Winterhalder, Ramon",
    title = "{Modern Machine Learning for LHC Physicists}",
    eprint = "2211.01421",
    archivePrefix = "arXiv",
    primaryClass = "hep-ph",
    month = "11",
    year = "2022"
}

@article{Bonilla:2022wzp,
    author = "Bonilla, Johan and others",
    title = "{Jets and Jet Substructure at Future Colliders}",
    eprint = "2203.07462",
    archivePrefix = "arXiv",
    primaryClass = "hep-ph",
    reportNumber = "FERMILAB-PUB-22-186-SCD-T",
    doi = "10.3389/fphy.2022.897719",
    journal = "Front. in Phys.",
    volume = "10",
    pages = "897719",
    year = "2022"
}

@article{DeZoort:2023vrm,
    author = "DeZoort, Gage and Battaglia, Peter W. and Biscarat, Catherine and Vlimant, Jean-Roch",
    title = "{Graph neural networks at the Large Hadron Collider}",
    doi = "10.1038/s42254-023-00569-0",
    journal = "Nature Rev. Phys.",
    volume = "5",
    number = "5",
    pages = "281--303",
    year = "2023"
}

@article{Zhou:2023pti,
    author = "Zhou, Kai and Wang, Lingxiao and Pang, Long-Gang and Shi, Shuzhe",
    title = "{Exploring QCD matter in extreme conditions with Machine Learning}",
    eprint = "2303.15136",
    archivePrefix = "arXiv",
    primaryClass = "hep-ph",
    doi = "10.1016/j.ppnp.2023.104084",
    journal = "Prog. Part. Nucl. Phys.",
    volume = "135",
    pages = "104084",
    year = "2024"
}

@article{Belis:2023mqs,
    author = "Belis, Vasilis and Odagiu, Patrick and Aarrestad, Thea Klaeboe",
    title = "{Machine learning for anomaly detection in particle physics}",
    eprint = "2312.14190",
    archivePrefix = "arXiv",
    primaryClass = "physics.data-an",
    doi = "10.1016/j.revip.2024.100091",
    journal = "Rev. Phys.",
    volume = "12",
    pages = "100091",
    year = "2024"
}

@article{Mondal:2024nsa,
    author = "Mondal, Spandan and Mastrolorenzo, Luca",
    title = "{Machine learning in high energy physics: a review of heavy-flavor jet tagging at the LHC}",
    eprint = "2404.01071",
    archivePrefix = "arXiv",
    primaryClass = "hep-ex",
    doi = "10.1140/epjs/s11734-024-01234-y",
    journal = "Eur. Phys. J. ST",
    volume = "233",
    number = "15-16",
    pages = "2657--2686",
    year = "2024"
}

@article{Feickert:2021ajf,
    author = "Feickert, Matthew and Nachman, Benjamin",
    title = "{A Living Review of Machine Learning for Particle Physics}",
    eprint = "2102.02770",
    archivePrefix = "arXiv",
    primaryClass = "hep-ph",
    month = "2",
    year = "2021"
}

@article{Behring:2025ilo,
    author = "Behring, Arnd and others",
    title = "{Flavoured jet algorithms: a comparative study}",
    eprint = "2506.13449",
    archivePrefix = "arXiv",
    primaryClass = "hep-ph",
    reportNumber = "CERN-TH-2025-113, IFJPAN-IV-2025-13, MCNET-25-14, MPP-2025-118, PUBDB-2025-01862",
    doi = "10.1007/JHEP09(2025)149",
    journal = "JHEP",
    volume = "09",
    pages = "149",
    year = "2025"
}

@article{Thaler:2011gf,
    author = "Thaler, Jesse and Van Tilburg, Ken",
    title = "{Maximizing Boosted Top Identification by Minimizing N-subjettiness}",
    eprint = "1108.2701",
    archivePrefix = "arXiv",
    primaryClass = "hep-ph",
    reportNumber = "MIT-CTP-4287",
    doi = "10.1007/JHEP02(2012)093",
    journal = "JHEP",
    volume = "02",
    pages = "093",
    year = "2012"
}

@inproceedings{10.1145/275487.275505,
author = {Papadimitriou, Christos H. and Tamaki, Hisao and Raghavan, Prabhakar and Vempala, Santosh},
title = {Latent semantic indexing: a probabilistic analysis},
year = {1998},
isbn = {0897919963},
publisher = {Association for Computing Machinery},
address = {New York, NY, USA},
url = {https://doi.org/10.1145/275487.275505},
doi = {10.1145/275487.275505},
booktitle = {Proceedings of the Seventeenth ACM SIGACT-SIGMOD-SIGART Symposium on Principles of Database Systems},
pages = {159–168},
numpages = {10},
location = {Seattle, Washington, USA},
series = {PODS '98}
}

@article{Acosta:2025kgk,
    author = "Acosta, P. and others",
    title = "{SubMIT: A Physics Analysis Facility at MIT}",
    eprint = "2506.01958",
    archivePrefix = "arXiv",
    primaryClass = "cs.DC",
    reportNumber = "MIT-CTP/5848",
    doi = "10.1007/s41781-025-00156-1",
    journal = "EPJ RI",
    volume = "10",
    number = "1",
    pages = "2",
    year = "2026"
}

@inproceedings{Li:2021LabelNoise,
    author    = {Li, Xuefeng and Liu, Tongliang and Han, Bo and Niu, Gang and Sugiyama, Masashi},
    title     = {Provably End-to-End Label-Noise Learning without Anchor Points},
    booktitle = {Proceedings of the 38th International Conference on Machine Learning},
    series    = {Proceedings of Machine Learning Research},
    volume    = {139},
    pages     = {6403--6413},
    year      = {2021},
    publisher = {PMLR},
    editor    = {Meila, Marina and Zhang, Tong},
    url       = {https://proceedings.mlr.press/v139/li21l.html}
}

@misc{SimplexDemixingCode,
    author  = {de la Fuente, Gregorio and Thaler, Jesse},
    title   = {{Simplex demixing code}},
    howpublished = {\url{https://github.com/gregofuente/simplex_demixing}},
    url     = {https://github.com/gregofuente/simplex_demixing},
    note    = {ML code for simplex demixing},
}

@dataset{SimplexDemixingDatasets,
  author       = {de la Fuente Simarro, Gregorio and
                  Thaler, Jesse},
  title        = {Pythia8 Jet Mixtures for Simplex Demixing},
  month        = jul,
  year         = 2026,
  publisher    = {Zenodo},
  doi          = {10.5281/zenodo.21542473},
  url          = {https://doi.org/10.5281/zenodo.21542473},
}

\end{document}